\newtheorem{proposition}{Proposition}
\newtheorem{definition}{Definition}
\newtheorem{lemma}{Lemma}
\newtheorem{theorem}{Theorem}
\newtheorem{encoding}{Encoding}
\newtheorem{example}{Example}
\crefname{encoding}{Encoding}{Encodings}
\Crefname{encoding}{Encoding}{Encodings}
\title{A General Automata Model for\\First-Order Temporal Logics (Extended Version)}
\author{%
Luca Geatti$^1$\and
Alessandro Gianola$^2$\and
Nicola Gigante$^3$ \\
\affiliations
$^1$University of Udine, Italy\\
$^2$INESC-ID / Instituto Superior Técnico, Universidade de Lisboa, Portugal\\
$^3$Free University of Bozen-Bolzano, Italy\\
\emails
luca.geatti@uniud.it,
alessandro.gianola@tecnico.ulisboa.pt,
nicola.gigante@unibz.it
}
\begin{document}

\maketitle

\begin{abstract}
    First-order linear temporal logic (\FOLTL) is a flexible and expressive
    formalism capable of naturally describing complex behaviors and properties.
    Although the logic is in general highly undecidable, the idea of using it as
    a specification language for the verification of complex infinite-state
    systems is appealing. However, a missing piece, which has proved to be an
    invaluable tool in dealing with other temporal logics, is an
    \emph{automaton} model capable of capturing the logic. In this paper we
    address this issue, by defining and studying such a model, which we call
    \emph{first-order automaton}. We define this very general class of automata,
    and the corresponding notion of \emph{regular} first-order language, showing
    their closure under most common language-theoretic operations. We show how
    they can capture any \FOLTL formula over any signature and theory, and
    provide sufficient conditions for the \emph{semi-decidability} of their
    non-emptiness problem. Then, to show the usefulness of the formalism, we
    prove the decidability of \emph{monodic} \FOLTL, a classic result known in
    the literature, with a simpler and direct proof.
\end{abstract}


\section{Introduction}
\label{sec:introduction}

\*
    * What is FOLTL
    * undecidable but appealing for infinite-state verification and reactive 
      synthesis
    * FOLTL has been studied a lot from the KR point of view but we need more 
      algorithmic developments
    * one of these developments is an automaton model that captures the logic
    * we provide and study such a model
    * contributions:
      - definition of first-order automata and first-order regular languages
      - closure properties of first-order regular languages
      - discussion on determinism
      - encoding of FOLTL into first-order automata
      - encoding of FOpLTL into pure-past first-order automata,
        useful for synthesis in the future
      - sufficient conditions for semi-decidability
      - to show the usefulness of the automaton model to reason about FOLTL, 
        we prove the decidability of monodic FOLTL
 */

Linear temporal logic (\LTL)~\cite{Pnueli77} is the most common specification
language for the verification of reactive systems. \LTL on \emph{finite words}
(\LTLf, \cite{DeGiacomoV13}) also gained traction recently in many applications
in AI and formal verification. However, \LTL and \LTLf are \emph{propositional}
logics, and as such they are limited to the specification of \emph{finite-state}
systems, whereas many real-world scenarios require modeling an infinite state
space. 

First-order \LTL (\FOLTL) is a temporal logic that mixes \LTL with classical
first-order logic, resulting in an extremely expressive and flexible formalism.
The logic has been studied quite extensively in the
literature~\cite{GabbayKWZ03,KontchakovLWZ04}, especially in the field of
\emph{knowledge representation}, where it is used as the formal underpinning of
temporal description logics, temporal databases, \etc. However, reasoning about
\FOLTL is quite hard: in general, satisfiability of \FOLTL sentences is highly
undecidable, \ie not even recursively enumerable~\cite{GabbayKWZ03}.
Nevertheless, we believe that the idea of using well-behaved fragments of \FOLTL
as a specification language for verification of infinite-state systems is
appealing and worth investigating.  We are interested in tasks such as
satisfiability and model checking~\cite{ClarkeGKPV18} of \FOLTL specifications,
but also \emph{reactive synthesis}~\cite{PnueliR89}, \ie synthesizing a
\emph{strategy} that guarantees  to fulfil a \FOLTL specification in adversarial
environments, and \emph{monitoring}, \ie detecting violations of a specification
during a system's execution~\cite{LeuckerS09}.

To make progress in these directions, we miss some ingredients. One of these is
the definition of a model of \emph{automata} capable of capturing \FOLTL, and a
theory of the \emph{languages} (sets of models) that \FOLTL 
can describe. These ingredients have proven to be crucial in the
development of the theory and practice of finite-state verification, with
formal language theory being the theoretical
background~\cite{DBLP:books/wi/LoeckxS87} and automata providing the
algorithmic powerhorse~\cite{vardi2005automata,vardi1991verification}.

In this paper, we define and study a new kind of infinite-state, finitely
representable automata on \emph{finite words}, that we call \emph{first-order
automata}, capable of capturing \FOLTL on finite words.
In particular, our contributions are the following:
\begin{enumerate}
  \item we introduce and motivate the definition of first-order automata and 
    define the associated notion of \emph{first-order regular languages};
  \item we study the \emph{closure properties} of this class of languages, \ie
    that they are closed by union, intersection, concatenation, and Kleene
    star, and we discuss deterministic first-order automata (although
    determinization is still open);
  \item we show how first-order automata can easily and directly encode any 
    \FOLTL sentence over any first-order theory;
  \item we show, similarly to what happens for propositional \LTL, that if we
    start with an \FOLTL sentence that uses only \emph{past operators}, the
    corresponding automaton is \emph{deterministic}; as determinism is essential
    to solve \emph{reactive synthesis} and \emph{monitoring} for 
    \LTL specifications~\cite{ArtaleGGMM23}, this result is encouraging;
  \item \label{contrib:semidecidability} 
    we address the non-emptiness problem for first-order automata (\ie
    whether an automaton accepts at least one word), providing some sufficient
    conditions for its \emph{semi-decidability};
  \item\label{contrib:decidability} we discuss some sufficient conditions for
    \emph{decidability} of the emptiness problem, including a class of
    \emph{monadic} automata which are the natural counterpart of \emph{monodic}
    \FOLTL sentences (\ie the fragment where any temporal subformula has only
    one free variable).
\end{enumerate}
As evidence of the usefulness of automata in dealing with \FOLTL,
\cref{contrib:decidability} provides a self-contained and direct proof of the
decidability of \emph{monodic} \FOLTL on finite words. This is a classic
result~\cite{GabbayKWZ03} whose original proof, however, relies on the
decidability result of monadic second order logic over $(\N,<)$ by
Büchi~\cite{Buchi62}.

We proceed as follows. \Cref{sec:preliminaries} provides some background,
and \cref{sec:automata} defines first-order automata and first-order
regular languages, and studies their closure properties.
\Cref{sec:encoding} shows how \FOLTL is captured by first-order automata.
In~\cref{sec:logic:purepast} we show how pure-past sentences result in
deterministic automata and we argue about the implications of this
construction in applications. \Cref{sec:emptiness} discusses the
non-emptiness problem, and \cref{sec:decidable} discusses some decidable
cases, including our self-contained proof of the decidability of monodic
\FOLTL.  \Cref{sec:conclusions} concludes with a discussion of the results
and references to related work.


\section{Preliminaries}
\label{sec:preliminaries}

\textbf{First-Order Logic and Languages.}
We adopt the standard \emph{first-order logic} notions, and consider
\emph{multi-sorted} (first-order) \emph{signatures} and (first-order)
\emph{structures} over a signature $\Sigma$ ($\Sigma$-structures). We
assume a fixed set of sort symbols $S_1$, $S_2$, \etc. As customary, given
a (constant, function, or relation) symbol $s$ from $\Sigma$ and
a $\Sigma$-structure $\sigma$, we denote as $s^\sigma$ the interpretation
of $s$ given by $\sigma$, and similarly for sort symbols. Given
a first-order signature $\Sigma$, a $\Sigma$-theory $\theory$ is a set of
sentences over $\Sigma$ ($\Sigma$-sentences). Given a first-order signature
$\Sigma$ and a $\Sigma$-theory $\theory$, we denote as $\sem{\theory}$ the
set of all the \emph{finite} or \emph{countably infinite} first-order
structures $\sigma$ over $\Sigma$ such that $\sigma\models\theory$.

We assume that signatures mark each symbol to be either \emph{rigid} or
\emph{non-rigid}. Given a signature $\Sigma$, we denote as $\Sigma'$ the
signature obtained by $\Sigma$ by replacing any \emph{non-rigid} symbol $s$ with
a primed version $s'$. Similarly, given a structure $\sigma$ over $\Sigma$ we
denote as $\sigma'$ the corresponding renamed $\Sigma'$-structure. Furthermore,
given a formula $\phi$, we denote as $\phi[\Sigma/\Sigma']$ the formula obtained
by replacing any symbol from $\Sigma$ with the corresponding primed symbol of
$\Sigma'$. We may also use $\Sigma''$ \etc with similar meaning. We define
first-order words as follows.

\begin{definition}[First-order word]
  \label{def:first-order:word}
  Let $\Sigma$ be a \emph{signature} and $\theory$ a $\Sigma$-theory. A
  \emph{first-order word} modulo $\theory$ (\ie a $\theory$-word) is a finite
  sequence $\model=\seq{\letter_0,\ldots,\letter_{n-1}}$ of first-order
  structures $\letter_i\in\sem{\theory}$ such that for each \emph{sort} $S$ and
  each \emph{rigid} symbol $s\in\Sigma$, we have $S^{\letter_i}=S^{\letter_j}$
  and $s^{\letter_i}=s^{\letter_j}$ for all \mbox{$0 \le i,j < n$}.
\end{definition}

Intuitively, $\theory$-words are sequences of $\theory$-structures, where the
interpretation of \emph{rigid} symbols is constant throughout the word. Note
that the domain of each sort is fixed throughout the word, so we are working in
a \emph{constant domains} setting. First-order words are just called
\emph{words} when there is no ambiguity. We assume $\theory=\emptyset$ when not
specified. The \emph{length} of a word
$\model=\seq{\letter_0,\ldots,\letter_{n-1}}$ is denoted as $|\model|=n$. We
denote with $\sem{\theory}^*$ and $\sem{\theory}^+$ the set of all and all
non-empty $\theory$-words, respectively.

\begin{definition}[First-order languages]
  Given a signature $\Sigma$ and a $\Sigma$-theory~$\theory$,
  a \emph{first-order $\theory$-language} $\lang\subseteq\sem{\theory}^*$
  is a set of $\theory$-words.
\end{definition}

\textbf{First-order linear temporal logic.} 
\FOLTL is a first-order temporal logic that blends \LTL and first-order logic.
The syntax can be described as follows. Let $\Sigma$ be a signature. A
$\Sigma$-term is defined as follows:
\begin{equation*}
  t \coloneqq c \mid x \mid f(t_1,\ldots,t_n)
\end{equation*}
where $c$ is a constant, $x$ is a first-order variable, $f$ is an $n$-ary
function symbol, and $t_1,\ldots,t_n$ are $\Sigma$-terms.
The syntax of \FOLTL formulas is defined as follows:
\begin{align*}
  \phi \coloneqq 
    p(t_1,\ldots,t_n)       \mid {} &
    t_1 = t_2               \mid
    \neg\phi                \mid
    \phi \lor \phi          \mid
    \exists x \phi          \\
    \mid {} & \ltl{X\phi}           \mid
    \ltl{Y\phi}             \mid
    \ltl{\phi U \phi}       \mid
    \ltl{\phi S \phi}
\end{align*}
where $t_1,\ldots,t_n$ are $\Sigma$-terms, $p$ is an $n$-ary predicate
symbol of $\Sigma$ (where $n$ can also be $0$, having atomic proposition),
and $x$ is a first-order variable. The temporal operators $\ltl{X,Y,U,S}$
are called \emph{tomorrow}, \emph{yesterday}, \emph{until}, and
\emph{since} respectively. We point out that classic first-order logic
formulas correspond to \FOLTL formulas $\phi$ without temporal operators.

$\Sigma$-terms are evaluated on single $\Sigma$-structures. Given a
$\Sigma$-term $t$ and a $\Sigma$-structure $\sigma$, an \emph{environment}  is a
mapping $\xi$ from any first-order variables of sort $S$ to values of their
respective domain $S^\sigma$. Then, the \emph{evaluation} of $t$ on $\sigma$
with environment $\xi$, denoted $t^{\sigma,\xi}$ is defined in a standard way,
that is, $c^{\sigma,\xi} = c^\sigma$, $x^{\sigma,\xi} = \xi(x)$, and
$f(t_1,\ldots,t_n)^{\sigma,\xi} =
f^\sigma(t_1^{\sigma,\xi},\ldots,t_n^{\sigma,\xi})$, where $c$ is a constant,
$x$ is a first-order variable, $f$ is an $n$-ary function symbol, and
$t_1,\ldots,t_n$ are $\Sigma$-terms. 

Formulas of \FOLTL are interpreted over non-empty first-order words. Let $\phi$
be a \FOLTL formula and let $\bar\sigma \in \sem{\theory}^+$. The
\emph{satisfaction} of $\phi$ by $\bar\sigma$, with environment $\xi$, at time
$i$, denoted $\bar\sigma,\xi,i\models\phi$, is defined as follows:
\begin{itemize}
  \item $\bar\sigma,\xi,i\models p(t_1,\ldots,t_n)$ iff $(t_1^{\sigma_i,\xi},\ldots,t_n^{\sigma_i,\xi})\in p^{\sigma_i}$;
  \item $\bar\sigma,\xi,i\models t_1 = t_2$ iff
    $t_1^{\sigma_i,\xi}=t_2^{\sigma_i,\xi}$;
  \item $\bar\sigma,\xi,i\models\neg\phi$ iff $\bar\sigma,\xi,i\not\models\phi$;
  \item $\bar\sigma,\xi,i\models\phi_1\lor\phi_2$ iff 
    $\bar\sigma,\xi,i\models\phi_1$ or 
    $\bar\sigma,\xi,i\models\phi_2$;   
  \item $\bar\sigma,\xi,i\models\exists x\phi$ iff
    $\bar\sigma,\xi',i\models\phi$ \emph{for some} $\xi'$ agreeing with $\xi$
    except possibly for $x$;
  \item $\bar\sigma,\xi,i\models\ltl{X\phi}$ iff $i<|\bar\sigma|-1$ and 
    $\bar\sigma,\xi,i+1\models\phi$;
  \item $\bar\sigma,\xi,i\models\ltl{Y\phi}$ iff $i>0$ and 
    $\bar\sigma,\xi,i-1\models\phi$;
  \item $\bar\sigma,\xi,i\models\ltl{\phi_1 U \phi_2}$ iff there exists
    a $k\ge i$ such that $\bar\sigma,\xi,k\models\phi_2$ and
    $\bar\sigma,\xi,j\models\phi_1$ for all $i\le j < k$;
  \item $\bar\sigma,\xi,i\models\ltl{\phi_1 S \phi_2}$ iff there exists
    a $0 \le k\le i$ such that $\bar\sigma,\xi,k\models\phi_2$ and
    $\bar\sigma,\xi,j\models\phi_1$ for all $k < j \le i$.
\end{itemize}

We define the following shortcuts:
\begin{enumerate*}[label=(\roman*)]
  \item $\ltl{wX \phi == ! X ! \phi}$, called \emph{weak tomorrow};
  \item $\ltl{wY \phi == ! Y ! \phi}$, called \emph{weak yesterday};
  \item $\ltl{\phi_1 R \phi_1 == !(!\phi_1 U !\phi_2)}$, called
    \emph{release};
  \item $\ltl{\phi_1 T \phi_1 == !(!\phi_1 S !\phi_2)}$, called
    \emph{triggered}.
\end{enumerate*}
We point out that, for any \FOLTL formula $\phi$, $\ltl{wX\phi}$ (resp.,
$\ltl{wY \phi}$) is true at the last (resp., first) time point of
a first-order word. The operators $\ltl{X,wX,U,R}$ are called
\emph{future} temporal operators, while $\ltl{Y,wY,S,T}$ are called
\emph{past} temporal operators.

Given a \emph{sentence} $\phi$ (\ie a formula with no free variables), we
say that $\bar\sigma$ \emph{satisfies} $\phi$, denoted
$\bar\sigma\models\phi$, if $\bar\sigma,\xi,0\models\phi$ for any $\xi$.
The $\theory$-language of an \FOLTL sentence $\phi$, denoted
$\lang_\theory(\phi)$, is the set of all the $\theory$-words
$\bar\sigma\in\sem{\theory}^+$ that satisfy $\phi$.  A \FOLTL sentence
$\phi$ is \emph{satisfiable} iff $\lang_\theory(\phi) \not = \emptyset$.
Note that \FOLTL semantics subsumes that of classical first-order logic.
A first-order formula $\phi$ is satisfied by a $\Sigma$-structure $\sigma$
iff $\bar\sigma\models\phi$ where $\bar\sigma=\seq{\sigma}$, and this
corresponds to the classical first-order semantics.

\FOLTL satisfiability, \ie the problem of establishing whether a \FOLTL
formula is satisfiable, is known to be undecidable, in general. See
\cite{GabbayKWZ03} for a full treatment where, however, constants and
relational symbols are always assumed to be rigid and non-rigid,
respectively.


\section{First-Order Automata}
\label{sec:automata}

A \emph{first-order automaton} is a device that accepts first-order languages. A
naive definition would result into an infinite-state representation not suitable
for being handled computationally. Instead, here we make use of a
\emph{symbolic} representation, which exploits first-order logic to represent
such infinite objects finitarily.

\begin{definition}[First-order automata]
  \label{def:automata}
  A \emph{first-order automaton} is a tuple
  $\autom=\seq{\Sigma,\Gamma,\phi_0,\phi_T, \phi_F}$ where:
  \begin{itemize}
    \item $\Sigma$ is a \emph{finite} signature called the \emph{word
    signature};
    \item $\Gamma$ is a \emph{finite} signature, \emph{disjoint} from $\Sigma$,
      called the \emph{state signature};
    \item $\phi_0$ is a first-order \emph{$\Gamma$-sentence} called the \emph{initial
      condition};
    \item $\phi_T$ is a first-order \emph{$(\Gamma\cup\Sigma\cup\Gamma')$-sentence}
      called the \emph{transition relation};
    \item $\phi_F$ is a first-order \emph{$\Gamma$-sentence} called the \emph{acceptance
      condition}.
  \end{itemize}
\end{definition}

In a first-order automaton, $\Gamma$-structures are called \emph{states}.
Intuitively, an automaton has an infinite state space made of
$\Gamma$-structures, with a transition relation represented symbolically by
$\phi_T$. We say only \emph{automaton} to mean a first-order automaton as in
\cref{def:automata}, when there is no ambiguity. Automata induce \emph{runs},
\ie sequences of states, when \emph{reading} a $\Sigma$-word.

\begin{definition}[Run]
  \label{def:run}
  Let  $\autom=\seq{\Sigma,\Gamma,\phi_0,\phi_T, \phi_F}$ be an automaton,
  and let $\model=\seq{\letter_0,\ldots,\letter_{n-1}}$ be $\theory$-word
  for some $\Sigma$-theory $\theory$. A \emph{$\theory$-run} of $\autom$
  induced by $\model$ is a $\emptyset$-word
  $\bar\rho=\seq{\rho_0,\ldots,\rho_n}$ over $\Gamma$ such that
  $S^{\rho_i}=S^{\sigma_0}$ for any $0\le i \le n$ and any sort symbol $S$,
  $\rho_0\models\phi_0$ and $\rho_i\cup\letter_i\cup\rho_{i+1}'\models
  \phi_T$ for all $0\le i < n$.
\end{definition}

Note that automata are \emph{nondeterministic}, hence the same word induces more
than one run, in general. The acceptance condition of first-order automata is a
standard reachability condition, defined as follows.
\begin{definition}[Acceptance of words]
  \label{def:acceptance}
  Consider an automaton $\autom=\seq{\Sigma,\Gamma,\phi_0,\phi_T, \phi_F}$,
  let $\theory$ be a $\Sigma$-theory, and let
  $\model=\seq{\letter_0,\ldots,\letter_{n-1}}$ be a $\theory$-word. We say that
  $\autom$ \emph{$\theory$-accepts} $\model$ iff there exists a $\theory$-run
  $\bar\rho=\seq{\rho_0,\ldots,\rho_n}$ such that $\rho_n\models\phi_F$.
\end{definition}

The \emph{$\theory$-language} of an automaton $\autom$, denoted
$\lang_\theory(\autom)$ is the set of all the $\theory$-words $\theory$-accepted
by $\autom$. Then, \emph{first-order regular languages} are defined
consequently.

\begin{definition}[First-order $\theory$-regular language]
  \label{def:regular}
  Given a $\Sigma$-theory $\theory$, a $\theory$-language
  $\lang\subseteq\sem{\theory}^*$ is said to be \emph{$\theory$-regular} iff
  there is a first-order automaton $\autom$ such that
  $\lang_\theory(\autom)=\lang$.
\end{definition}

\begin{example}
  Consider a signature $\Sigma=\set{c,+,>}$ where $c$ is a non-rigid constant
  and $+$ and $>$ are the standard (rigid) arithmetic symbols. Let $\theory$ be
  the standard linear arithmetic theory over integers. Let
  $\autom=\seq{\Sigma,\Gamma,\phi_0,\phi_T, \phi_F}$ where:
  \begin{enumerate}
    \item $\Gamma=\set{a,n}$ with $a$ and $n$ two constants;
    \item $\phi_0\coloneqq n = 0$;
    \item $\phi_T\coloneqq n' = n + 1 \land a' = \frac{n * a + c}{n + 1}$; and 
    \item $\phi_F\coloneqq a = 0$.
  \end{enumerate}
  It can be seen that $\autom$ accepts all the $\theory$-words such that the
  average of the values of $c$ over all the word is zero.
\end{example}

\textbf{Basic properties of $\theory$-regularity.}
Let us now study some basic questions about $\theory$-regular languages.
As a first step, some cardinality arguments allow us to conclude that, despite
the evident expressive power of the formalism, there are $\theory$-languages,
for some $\theory$, that cannot be accepted by any first-order automaton.

\begin{restatable}{theorem}{cardinalitythm}
  \label{thm:cardinality}
  For some signature $\Sigma$ and $\Sigma$-theory $\theory$, there are
  $\theory$-languages that are not $\theory$-regular.
\end{restatable}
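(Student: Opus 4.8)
The plan is to prove this by a cardinality argument, contrasting the number of distinct $\theory$-languages with the number of distinct first-order automata over a fixed word signature $\Sigma$. First I would fix a signature $\Sigma$ and theory $\theory$ such that $\sem{\theory}$ is infinite; the simplest choice is to take $\Sigma$ with a single non-rigid unary predicate $p$ (or even just a single non-rigid constant $c$ over a sort with infinitely many elements) together with a theory forcing the sort domain to be infinite, e.g. the theory of a countably infinite set. Then already $\sem{\theory}$ contains continuum-many structures up to the relevant equivalence, and the set $\sem{\theory}^*$ of finite words is infinite, so the powerset $2^{\sem{\theory}^*}$ — the collection of all $\theory$-languages — has cardinality at least $2^{\aleph_0}$, and in fact $2^{2^{\aleph_0}}$ if one is generous with the domain sizes; either way it is strictly larger than $\aleph_0$.

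The second step is to bound the number of first-order automata. A first-order automaton $\autom=\seq{\Sigma,\Gamma,\phi_0,\phi_T,\phi_F}$ over the fixed word signature $\Sigma$ is determined by a finite state signature $\Gamma$ (and there are only countably many finite signatures up to isomorphism over a fixed countable stock of symbols) together with three first-order sentences over finite signatures. Since first-order sentences over a finite (or even countable) signature form a countable set — each is a finite string over a countable alphabet — there are only countably many automata. Consequently the map $\autom \mapsto \lang_\theory(\autom)$ has countable domain, so its image, the set of $\theory$-regular languages, is at most countable. Comparing cardinalities, $2^{\sem{\theory}^*}$ is uncountable while the $\theory$-regular languages are countable, so there must exist a $\theory$-language that is not $\theory$-regular.

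The one point requiring a little care — and the main obstacle, such as it is — is making the cardinality of $\sem{\theory}^*$ genuinely uncountable in a way that survives the constant-domains and rigidity constraints of \cref{def:first-order:word}, and checking that the resulting diagonalization is not blocked by those constraints. For instance, with a non-rigid unary predicate $p$ over a fixed countably infinite domain $D$, a single $\theory$-letter already encodes an arbitrary subset of $D$, so even length-one words give $2^{\aleph_0}$ distinct letters, and hence $2^{2^{\aleph_0}}$ length-one languages; since rigidity of sort domains is automatically respected (there is a single sort, fixed throughout), no further bookkeeping is needed. I would state the argument for this concrete $\Sigma$ and $\theory$ and remark that it is robust to the choice of well-behaved theory. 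A fully rigorous version would note that "first-order sentence over a finite signature" should be read up to the countability of the syntax (finitely many symbols, countably many variables), which is standard.
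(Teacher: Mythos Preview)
Your proposal is correct and follows essentially the same cardinality argument as the paper: the paper also picks a signature and theory forcing an infinite domain (it uses $\Sigma=\{<\}$ with $<$ an infinite partial order, where you use a non-rigid unary predicate), observes that there are $2^{\aleph_0}$ structures and hence $2^{2^{\aleph_0}}$ languages of length-one words, and contrasts this with the mere $\aleph_0$ many first-order automata. Your treatment is slightly more careful about why the automata are countable and about the constant-domains constraint, but the strategy is identical.
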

\begin{proof}[Sketch]
  The argument is based on the fact that the cardinality of the set of
  $\theory$-words is more than countable. See \cref{app:proofs} for a full
  proof.
\end{proof}

Despite \cref{thm:cardinality}, a non-contrived example of non $\theory$-regular
language has still not been found. Now, we can show that $\theory$-regular
languages are closed under some common operations. We start with union and
intersection.
\begin{restatable}{theorem}{unionintersection}
  For any theory $\theory$, $\theory$-regular languages are closed under
  intersection and union.
\end{restatable}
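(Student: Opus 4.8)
The plan is to adapt the classical product construction for finite automata. Fix first-order automata $\autom_i = \seq{\Sigma, \Gamma_i, \phi_0^i, \phi_T^i, \phi_F^i}$ for $i \in \{1,2\}$, over the same word signature $\Sigma$ and relative to the same theory $\theory$; by injectively renaming state symbols — which preserves the recognized $\theory$-language — I may assume $\Gamma_1$ and $\Gamma_2$ are disjoint from each other and from $\Sigma$. Two facts drive the argument: first, since $\phi_0^i, \phi_F^i$ are $\Gamma_i$-sentences and $\phi_T^i$ is a $(\Gamma_i \cup \Sigma \cup \Gamma_i')$-sentence, their satisfaction depends only on the reduct to those symbols; second, since the sort symbols are globally fixed, a $\Gamma_1$-structure and a $\Gamma_2$-structure with the same sort domains amalgamate into a unique $(\Gamma_1 \cup \Gamma_2)$-structure, and conversely every $(\Gamma_1 \cup \Gamma_2)$-structure splits into such a pair; recall also that along any run all states carry the sort domains $S^{\sigma_0}$ fixed by the word's first letter.

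For \textbf{intersection} I would take $\autom \coloneqq \seq{\Sigma,\ \Gamma_1 \cup \Gamma_2,\ \phi_0^1 \land \phi_0^2,\ \phi_T^1 \land \phi_T^2,\ \phi_F^1 \land \phi_F^2}$ and prove $\lang_\theory(\autom) = \lang_\theory(\autom_1) \cap \lang_\theory(\autom_2)$ via a correspondence between $\theory$-runs of $\autom$ on a word $\model$ and pairs formed by a $\theory$-run of $\autom_1$ and one of $\autom_2$ on $\model$: a run of $\autom$ restricts position-wise to runs of $\autom_1$ and $\autom_2$ (the sort constraint and the constancy of rigid symbols are inherited by reducts); conversely, runs of $\autom_1$ and $\autom_2$ on the same $\model$ have equal length and equal sort domains, hence amalgamate position-wise into a run of $\autom$. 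Since $\rho_n \models \phi_F^1 \land \phi_F^2$ iff both reducts satisfy their respective acceptance conditions, acceptance transfers in both directions.

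For \textbf{union} the conjunctive product fails — and I expect this to be the only real obstacle: a word may induce an accepting run of $\autom_1$ but \emph{no} run at all of $\autom_2$, so $\phi_T^1 \land \phi_T^2$ would wrongly reject it. I would instead decouple the two components with a fresh rigid nullary predicate $m$ (a ``mode'' flag) added to the state signature, setting $\phi_0 \coloneqq (m \to \phi_0^1) \land (\lnot m \to \phi_0^2)$, $\phi_T \coloneqq (m \to \phi_T^1) \land (\lnot m \to \phi_T^2)$, and $\phi_F \coloneqq (m \to \phi_F^1) \land (\lnot m \to \phi_F^2)$. Rigidity makes $m$ constant along every run; a run with $m$ true restricts on $\Gamma_1$ to a run of $\autom_1$, one with $m$ false restricts on $\Gamma_2$ to a run of $\autom_2$, and acceptance transfers as before; conversely, an accepting run of, say, $\autom_1$ on $\model$ lifts to one of $\autom$ by setting $m$ true everywhere and filling the $\Gamma_2$-part at every position with one fixed $\Gamma_2$-structure over the sort domains $S^{\sigma_0}$ — which exists because first-order domains are non-empty, respects rigidity since it is the same at every step, and is unconstrained because $\phi_0^2$ and $\phi_T^2$ are vacuous when $m$ holds. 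This yields $\lang_\theory(\autom) = \lang_\theory(\autom_1) \cup \lang_\theory(\autom_2)$. Apart from this point, everything else is routine bookkeeping with reducts and amalgamations of first-order structures.
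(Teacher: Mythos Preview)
Your proof is correct and follows essentially the same approach as the paper: the identical product construction for intersection, and a fresh propositional mode flag to decouple the two automata for union. The only cosmetic difference is that you declare the flag \emph{rigid} (so its constancy along runs comes for free from runs being $\emptyset$-words), whereas the paper takes the flag non-rigid and enforces $p_0 \leftrightarrow p_0'$ explicitly inside the transition relation.
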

\begin{proof}[Sketch]
  An automaton for the union and intersection of two automata can be obtained by suitably modifying the disjunction and the conjunction, respectively, of their
  transition relation. See \cref{app:proofs} for full details.
\end{proof}

To look at concatenation and Kleene star, we need to investigate more closely
the expressive power of first-order automata. It turns out they leak towards
second-order territory.

\begin{example}
  \label{ex:reachability}
  Consider the signature $\Sigma=\set{R,s,d}$ with $R$ a binary symbol and $s$
  and $d$ two constants. It is well-known that one cannot write a first order
  formula $\phi$ that holds in a $\Sigma$-structure if and only if $d$ is
  \emph{not} $R$-reachable from $s$. However, consider the language $\lang$ made
  of $\emptyset$-words $\bar\sigma$ over $\Sigma$ where each element $\sigma$ of
  the word is such that $d^\sigma$ is \emph{not} $R^\sigma$-reachable from
  $s^\sigma$. We can build a first-order automaton
  $\autom=\seq{\Sigma,\Gamma,\phi_0,\phi_T,\phi_F}$ that accepts $\lang$, as
  follows. We define $\Gamma=\set{P}$ where $P$ is a unary predicate, and
  $\phi_0\equiv\phi_f\equiv\top$. Then, we define $\phi_T$ as follows:
  \begin{equation*}
    \phi_T \equiv P(s) \land \forall x y ((P(x) \land R(x,y)) \to P(y)) \land \neg P(d)
  \end{equation*}
  Intuitively, the additional predicate $P$ provided by $\Gamma$ acts as the
  fixpoint of $R$. Indeed, the second-order logic formula $\exists P\suchdot
  \phi_T$ is exactly the one expressing non-reachability between $s$ and $d$ in
  $\Sigma$.
\end{example}

\Cref{ex:reachability} shows that some non-first-order-definable properties can
be enforced on the letters of $\theory$-regular languages. In particular, we can
express \emph{existential} second-order properties.
\begin{theorem}
  \label{thm:second-order:class}
  Let $\Sigma$ be a first-order signature, and $\theory$ a $\Sigma$-theory.
  Let $\mathcal{C}$ be a class of $\Sigma$-structures, and let
  $\lang_{\mathcal{C}}$ be the language of all the $\theory$-words whose
  letters belong to $\mathcal{C}$. If $\mathcal{C}$ is definable by an
  \emph{existential second-order logic} $\Sigma$-sentence, then
  $\lang_{\mathcal{C}}$ is $\theory$-regular.
\end{theorem}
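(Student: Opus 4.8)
The plan is to generalize the construction of \cref{ex:reachability}, letting the state signature carry the existential second-order witnesses. Write the $\Sigma$-sentence defining $\mathcal{C}$ as $\Phi \equiv \exists X_1 \dots \exists X_k\, \psi$, where $X_1, \dots, X_k$ are second-order variables and $\psi$ is a first-order sentence over $\Sigma \cup \{X_1, \dots, X_k\}$; renaming if needed, assume the $X_j$ do not occur in $\Sigma$. I would then define $\autom = \seq{\Sigma, \Gamma, \phi_0, \phi_T, \phi_F}$ by taking $\Gamma = \{X_1, \dots, X_k\}$ with every $X_j$ marked \emph{non-rigid}, and $\phi_0 \equiv \phi_F \equiv \top$, $\phi_T \equiv \psi$. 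Since $\psi$ has no free first-order variables and mentions only symbols of $\Sigma \cup \Gamma$, it is a legitimate $(\Gamma \cup \Sigma \cup \Gamma')$-sentence; it simply ignores $\Gamma'$, so in a run consecutive states are mutually unconstrained.

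The heart of the proof is the chain of equivalences, for $\sigma \in \sem{\theory}$: $\sigma \in \mathcal{C}$ iff $\sigma \models \Phi$ iff some interpretation of $X_1, \dots, X_k$ over the domain of $\sigma$ makes $\psi$ true iff there is a $\Gamma$-structure $\rho$ over the domain of $\sigma$ with $\rho \cup \sigma \models \psi$. Using it, I would prove $\lang_\theory(\autom) = \lang_{\mathcal{C}}$. For the inclusion $\lang_\theory(\autom) \subseteq \lang_{\mathcal{C}}$: if $\model = \seq{\letter_0, \dots, \letter_{n-1}}$ is accepted via a run $\seq{\rho_0, \dots, \rho_n}$, then for each $i < n$ the transition condition $\rho_i \cup \letter_i \cup \rho_{i+1}' \models \phi_T$ reduces, after dropping the unused primed symbols, to $\rho_i \cup \letter_i \models \psi$, hence $\letter_i \models \Phi$, i.e. $\letter_i \in \mathcal{C}$; so $\model \in \lang_{\mathcal{C}}$. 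For the converse: given $\model \in \lang_{\mathcal{C}}$, I pick for each $i < n$ a witnessing interpretation and let $\rho_i$ be the resulting $\Gamma$-structure over the (common) domain of the letters of $\model$, and take $\rho_n$ arbitrary over the same domain. Then $\seq{\rho_0, \dots, \rho_n}$ is a legal $\theory$-run as in \cref{def:run} — the rigidity clause of \cref{def:first-order:word} imposes nothing since the $X_j$ are non-rigid, $\rho_0 \models \phi_0 \equiv \top$, and every transition holds by construction — and $\rho_n \models \phi_F \equiv \top$, so $\model$ is accepted.

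This is essentially a bookkeeping argument, with no deep obstacle. The points that most need care are: (i) the constant-domain setting, which is precisely what makes a single $\Gamma$-structure a natural home for a witness over ``the'' domain and what lets the per-letter witnesses be assembled into one run; (ii) the non-rigidity of the $X_j$, without which the run would have to reuse one witness for all letters, capturing only the strictly smaller language of words admitting a \emph{uniform} witness; and (iii) the length mismatch between a word of length $n$ and a run of length $n+1$, together with the fact that $\rho_0$ and $\rho_n$ are constrained only by $\phi_0$ and $\phi_F$ (both $\top$), which in particular disposes of the empty word — vacuously in $\lang_{\mathcal{C}}$ and accepted by the one-state run $\seq{\rho_0}$. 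If the relevant notion of existential second-order logic allows function variables in addition to relation variables, nothing changes, since \cref{def:automata} already permits function symbols in $\Gamma$.
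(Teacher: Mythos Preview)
Your proof is correct and follows exactly the same construction as the paper's: set $\Gamma$ to the second-order quantified symbols, put $\phi_0 \equiv \phi_F \equiv \top$, and let $\phi_T$ be the first-order body $\psi$. The paper's argument is a brief sketch that stops at ``it can be confirmed that $\lang(\autom)=\lang$'', whereas you spell out precisely those confirmations (both inclusions, the non-rigidity of the $X_j$, the empty word), so your write-up is if anything more complete.
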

\begin{proof}
  Let $\phi\equiv\exists P_1\ldots\exists P_n\suchdot \psi$
  be the existential second-order $\Sigma$-sentence defining $\mathcal{C}$,
  where $\psi$ is a first-order sentence over
  $\Sigma\cup\set{P_1,\ldots,P_n}$. Then, we define the
  first-order automaton $\autom_\phi=\seq{\Sigma,\Gamma,\phi_0,\phi_T,\phi_F}$ such that $\Gamma=\set{P_1,\ldots,P_n}$, $\phi_0\equiv\phi_F\equiv\top$, and 
  $\phi_T\equiv \psi$. It can be confirmed that $\lang(\autom)=\lang$.
\end{proof}

Intuitively, the existence of the run, required for accepting the word, performs
the role of the existential second-order quantifiers. Note that this means that
one may directly use existential second-order sentences as transition relations.
\begin{definition}[$\Sigma_1^1$-automata]
  An \emph{$\Sigma_1^1$-automaton} is a tuple
  $\autom_1=\seq{\Sigma,\Gamma,\phi_0,\phi_T,\phi_F}$ similar to a first-order
  automaton except that $\phi_0$, $\phi_T$, and $\phi_F$ are \emph{existential
  second-order} sentences. Acceptance of words resembles
  \cref{def:acceptance}.
\end{definition}
\begin{restatable}{lemma}{secondorderautomata}
  \label{thm:second-order:automata}
  $\Sigma_1^1$-automata are equivalent to first-order ones.
\end{restatable}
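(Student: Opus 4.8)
The plan is to prove the equivalence in both directions. The easy direction is that every first-order automaton is (trivially) a $\Sigma_1^1$-automaton: a first-order sentence is a degenerate existential second-order sentence with an empty block of second-order quantifiers, so the identity construction works and languages coincide. The substantive direction is to show that for every $\Sigma_1^1$-automaton $\autom_1=\seq{\Sigma,\Gamma,\phi_0,\phi_T,\phi_F}$ there is an ordinary first-order automaton $\autom$ with $\lang_\theory(\autom_1)=\lang_\theory(\autom)$ for every $\Sigma$-theory $\theory$.

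For that direction, the key idea is exactly the one already exploited in \cref{thm:second-order:class}: the existential second-order quantifiers occurring in $\phi_0$, $\phi_T$, $\phi_F$ can be ``absorbed'' into the state signature, because the existence of a run already acts as an existential quantifier over sequences of $\Gamma$-structures. Concretely, I would write $\phi_0\equiv\exists \bar P_0\suchdot\psi_0$, $\phi_T\equiv\exists\bar P_T\suchdot\psi_T$, and $\phi_F\equiv\exists\bar P_F\suchdot\psi_F$, where each $\psi$ is first-order over the appropriate signature augmented with fresh relation symbols, and the three blocks $\bar P_0,\bar P_T,\bar P_F$ are taken pairwise disjoint and disjoint from $\Sigma\cup\Gamma$. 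I then set $\Gamma^\star=\Gamma\cup\bar P_0\cup\bar P_T\cup\bar P_F$ and build the first-order automaton $\autom=\seq{\Sigma,\Gamma^\star,\psi_0,\psi_T,\psi_F}$ (reading the primed versions in $\psi_T$ as referring to the $\Gamma^\star$-part of the successor state). Given a $\theory$-run $\bar\rho^\star$ of $\autom$ on a word $\model$, projecting each state onto $\Gamma$ yields a $\theory$-run of $\autom_1$, since the $\bar P$-components witness the second-order quantifiers at the appropriate step; conversely, given a $\theory$-run of $\autom_1$, at each step one chooses interpretations of the fresh symbols witnessing $\phi_0$ at step $0$, $\phi_T$ at each transition, and $\phi_F$ at the final step, and glues them into a $\theory$-run of $\autom$. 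Acceptance is preserved in both directions, so the languages agree.

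The one technical point that needs care — and which I expect to be the main obstacle — is the bookkeeping with the transition symbols $\bar P_T$: the witness for $\exists\bar P_T\suchdot\psi_T$ depends on the transition index $i$, so the same symbols in $\Gamma^\star$ are reused with different interpretations at different positions of the run. This is fine because a run is precisely a sequence of states and each state carries its own interpretation of $\Gamma^\star$; but one must be careful that $\psi_T$, as a $(\Gamma^\star\cup\Sigma\cup\Gamma^{\star\prime})$-sentence, only constrains $\bar P_T$ at the \emph{current} state $\rho_i^\star$ (not the successor), so that successive transitions can independently choose their witnesses. Likewise the fresh symbols for $\phi_0$ are constrained only at $\rho_0^\star$ and those for $\phi_F$ only at $\rho_n^\star$; at all other positions they are unconstrained, which is harmless since the definition of run and acceptance never inspects them there. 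A subtlety worth spelling out is that a run's states must share the same sort domains (as in \cref{def:run}); the fresh symbols are relation symbols over the already-present sorts, so no new sort constraints arise. Once this indexing is made precise, the back-and-forth translation of runs is routine, and the equivalence follows.
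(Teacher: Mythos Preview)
Your proposal is correct and follows essentially the same approach as the paper: absorb the existentially quantified second-order symbols from $\phi_0$, $\phi_T$, $\phi_F$ into an enlarged state signature and use the first-order matrices as the new automaton components, then translate runs back and forth by projecting/expanding states. The technical point you flag about $\bar P_T$ being constrained only at the current (unprimed) state is exactly the observation the paper uses to make the $(\supseteq)$ direction go through.
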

\begin{proof}
  By \cref{thm:second-order:class}. See \cref{app:proofs}.
\end{proof}

The syntactic sugar of $\Sigma_1^1$-automata helps us proving further closure
properties of $\theory$-regular languages.
\begin{theorem}
  \label{thm:concatenation}
  $\theory$-regular languages are closed under concatenation, for any $\theory$.
\end{theorem}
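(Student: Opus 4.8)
The plan is to construct, from two first-order automata $\autom_1 = \seq{\Sigma,\Gamma_1,\phi_0^1,\phi_T^1,\phi_F^1}$ and $\autom_2 = \seq{\Sigma,\Gamma_2,\phi_0^2,\phi_T^2,\phi_F^2}$ over the same word signature $\Sigma$ (we may assume $\Gamma_1$ and $\Gamma_2$ disjoint, renaming symbols otherwise), a single automaton $\autom$ accepting $\lang_\theory(\autom_1)\cdot\lang_\theory(\autom_2)$, i.e.\ all words $\model = \model_1\model_2$ with $\model_1 \in \lang_\theory(\autom_1)$ and $\model_2 \in \lang_\theory(\autom_2)$. The idea is the standard one: run $\autom_1$, nondeterministically ``guess'' at some point that we have reached an accepting state of $\autom_1$, and switch to simulating $\autom_2$ starting from one of its initial states. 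By \cref{thm:second-order:automata} we are free to use an existential second-order transition relation, which makes this guessing clean.

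First I would introduce a fresh rigid unary predicate symbol $p$ (a marker indicating ``we are still in the first phase'') together with copies of both state signatures, so $\Gamma \coloneqq \{p\} \cup \Gamma_1 \cup \Gamma_2$. The initial condition is $\phi_0 \coloneqq p \land \phi_0^1$ (we start in phase one, in an initial state of $\autom_1$). The acceptance condition must allow two scenarios: either we never left phase one but $\autom_1$ already accepts (covering the case $\model_2$ is empty — though note words here are allowed to be empty only if the definitions permit; if $\sem{\theory}^+$ is required one adjusts accordingly), or we are in phase two in an accepting state of $\autom_2$, giving $\phi_F \coloneqq (p \land \phi_F^1) \lor (\neg p \land \phi_F^2)$. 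The transition relation, written as an $\Sigma_1^1$ sentence, is the disjunction of three behaviours: (i) stay in phase one: $p \land p' \land \phi_T^1 \land (\text{$\Gamma_2$ components unconstrained})$; (ii) switch phases at this step: $p \land \phi_F^1 \land \neg p' \land \phi_0^2[\Gamma_2/\Gamma_2'] \land \dots$, encoding that $\autom_1$'s current state is accepting and the next state is an initial state of $\autom_2$; (iii) stay in phase two: $\neg p \land \neg p' \land \phi_T^2$. One must take care with the indexing: a run of $\autom_1$ on a length-$k$ word has $k+1$ states, so the phase switch should identify the final state of the $\autom_1$-run with (the structure feeding) the first state of the $\autom_2$-run, reading the first letter of $\model_2$ under $\phi_T^2$.

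The bookkeeping is routine but the correctness argument has two directions. For soundness, given an accepting run $\bar\rho$ of $\autom$ on $\model$, the point where $p$ flips (if it flips) splits $\model$ into $\model_1\model_2$; the projection of $\bar\rho$ onto $\Gamma_1$ up to that point is a valid accepting run of $\autom_1$ on $\model_1$ (using that clause (ii) fires only when $\phi_F^1$ holds), and the projection onto $\Gamma_2$ from that point is a valid run of $\autom_2$ on $\model_2$ ending in $\phi_F$; the case where $p$ never flips handles $\model_2 = \epsilon$ directly. For completeness, accepting runs of $\autom_1$ on $\model_1$ and $\autom_2$ on $\model_2$ are glued along the shared ``seam'' structure and padded out on the inactive coordinates, yielding an accepting run of $\autom$.

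\textbf{The main obstacle} I anticipate is purely a matter of care rather than depth: getting the transition relation to correctly line up the \emph{off-by-one} between the number of letters of a word and the number of states of a run, and ensuring the phase-switch step both witnesses $\phi_F^1$ at the right state and correctly initializes $\autom_2$ while consuming (or not consuming) the seam letter exactly once. A secondary subtlety is handling rigid symbols: the domains and rigid-symbol interpretations are fixed across the whole word, so when we glue we must check that $\model_1$ and $\model_2$ agree on these — but this is automatic since both are interpreted in the same letters $\letter_i$ of the single word $\model$, and the state-signature symbols $p, \Gamma_1, \Gamma_2$ can all be declared non-rigid (or rigid where needed) without conflict. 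I would relegate the explicit formula for $\phi_T$ and the two-direction verification to \cref{app:proofs}, stating here only the construction and the equivalence $\lang_\theory(\autom) = \lang_\theory(\autom_1)\cdot\lang_\theory(\autom_2)$.
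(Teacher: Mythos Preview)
Your high-level plan is exactly the paper's: a phase marker $p$, state signature $\Gamma_1\cup\Gamma_2\cup\{p\}$, and an appeal to \cref{thm:second-order:automata} so that the switch can be expressed with a second-order existential. Two points deserve tightening.

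\textbf{The switch clause.} As you wrote clause~(ii), the step that flips $p$ consumes a letter without feeding it to either $\phi_T^1$ or $\phi_T^2$: if the flip happens at index $k$, then $\autom_1$ has read $\sigma_0,\dots,\sigma_{k-1}$ and $\autom_2$ will read $\sigma_{k+1},\dots,\sigma_{n-1}$, so $\sigma_k$ is lost and you end up accepting $\lang_\theory(\autom_1)\cdot\sem{\theory}\cdot\lang_\theory(\autom_2)$. Your prose paragraph (``the phase switch should identify the final state of the $\autom_1$-run with the first state of the $\autom_2$-run, reading the first letter of $\model_2$ under $\phi_T^2$'') is the right idea and is precisely what the paper does: the switch disjunct is
\[
  p \land \neg p' \land \phi_F^1 \land \exists\Gamma_2''\,\bigl(\phi_0^2[\Gamma_2/\Gamma_2'']\land\phi_T^2[\Gamma_2/\Gamma_2'']\bigr),
\]
so the hidden $\Gamma_2''$ is the initial state of $\autom_2$, and the current letter is consumed by $\phi_T^2$ on the way to the primed $\Gamma_2'$. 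This is where the $\Sigma_1^1$ power is actually used; in your formulation the second-order quantifier never appears.

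\textbf{The acceptance condition.} The disjunct $(p\land\phi_F^1)$ in your $\phi_F$ accepts every word in $\lang_\theory(\autom_1)$ unconditionally, not only when $\epsilon\in\lang_\theory(\autom_2)$; this over-accepts whenever $\phi_0^2\land\phi_F^2$ is unsatisfiable. The paper sidesteps this by taking $\phi_F\equiv\neg p\land\phi_F^2$ only. If you want to cover $\model_2=\epsilon$ explicitly, the disjunct must also assert $\phi_0^2\land\phi_F^2$ on the (otherwise unconstrained) $\Gamma_2$-component of the final state.

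Two small slips: $p$ must be a $0$-ary proposition, not a unary predicate, and it must be non-rigid in $\Gamma$ (otherwise it could never flip across the run).
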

\begin{proof}
  Consider two first-order automata, respectively
  $\autom_1=\seq{\Sigma,\Gamma_1,\phi^1_0,\phi^1_T,\phi^1_F}$ and
  $\autom_2=\seq{\Sigma,\Gamma_2,\phi^2_0,\phi^2_T,\phi^2_F}$. We can build a
  $\Sigma_1^1$-automaton $\autom=\seq{\Sigma,\Gamma,\phi_0,\phi_T,\phi_F}$ such
  that
  $\lang_\theory(\autom)=\lang_\theory(\autom_1)\cdot\lang_\theory(\autom_2)$
  for any $\Sigma$-theory $\theory$:
  \begin{enumerate}
    \item $\Gamma = \Gamma_1 \cup \Gamma_2 \cup \set{ p }$ where $p$ is a fresh 
      $0$-ary predicate\\(\ie a proposition)
    \item $\phi_0\equiv p \land \phi^1_0$
    \item $\phi_F\equiv \neg p \land \phi^2_F$
    \item the transition relation is as follows:
      \begin{align*}
        \phi_T\equiv {} & (p \land \phi^1_T \land p') \lor (\neg p \land
        \phi^2_T \land \neg p') \\ {} \lor {} &
          \Bigl(
            p \land \neg p' \land \phi^1_F \land 
            \exists \Gamma'' \bigl(
              \phi^2_0[\Gamma/\Gamma''] \land \phi^2_T[\Gamma/\Gamma'']
            \bigr)
          \Bigr)
      \end{align*}
  \end{enumerate}
  Since $\Gamma''$ is not mentioned in any literal of $\phi_T$, apart from the
  ones appearing in the second-order existentially quantified subformula, the
  quantifier can be taken out in prenex form, therefore $\phi_T$ is an
  existential second-order sentence. 
  Intuitively, we are connecting the final states of the first automaton to all
  the states reachable in one step by the initial states of the second one. The
  proposition $p$ ensures the transitions of the two automata do not get mixed.
  It is a matter of following the definitions to confirm that
  $\lang_\theory(\autom)=\lang_\theory(\autom_1)\cdot\lang_\theory(\autom_2)$.
  Then, by \cref{thm:second-order:automata}, $\autom$ can be turned into an
  equivalent first-order automaton.
\end{proof}

\begin{theorem}
  \label{thm:kleene}
  $\theory$-regular languages are closed under Kleene star, for any $\theory$.
\end{theorem}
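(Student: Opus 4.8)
The plan is to mimic, inside the symbolic framework, the textbook NFA construction for the Kleene star, reusing the $\Sigma_1^1$-automaton device exactly as in the proof of \cref{thm:concatenation}. Let $\autom_1=\seq{\Sigma,\Gamma_1,\phi^1_0,\phi^1_T,\phi^1_F}$ be a first-order automaton with $\lang_\theory(\autom_1)=L$. I would build a $\Sigma_1^1$-automaton $\autom=\seq{\Sigma,\Gamma,\phi_0,\phi_T,\phi_F}$ recognizing $L^*$ and then invoke \cref{thm:second-order:automata} to turn it into a genuine first-order automaton. Take $\Gamma=\Gamma_1\cup\set{p}$ with $p$ a fresh proposition that flags the first position of a run, corresponding to the fresh state that the classical construction adds. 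The marker $p$ serves two purposes: it lets the one-state run accept the empty word -- which always belongs to $L^*$, even when $L$ is empty -- and it makes the first transition behave like a \emph{block restart} from an initial state of $\autom_1$. Accordingly set $\phi_0\equiv p$ and $\phi_F\equiv p\lor\phi^1_F$; since every transition will force $\neg p'$, the marker $p$ holds exactly at position $0$, so on any run reading a nonempty word the last state has $p$ false and the acceptance condition reduces there to $\phi^1_F$, as wanted.

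The transition relation would be
\begin{align*}
  \phi_T\equiv {} & \neg p'\land\Big[\big((p\lor\phi^1_F)\land\exists\Gamma''\,(\phi^1_0[\Gamma_1/\Gamma'']\land\phi^1_T[\Gamma_1/\Gamma''])\big) \\
  {} & {}\lor(\neg p\land\phi^1_T)\Big],
\end{align*}
where $\Gamma''$ is a fresh copy of $\Gamma_1$. The second disjunct is an ordinary $\autom_1$-step; the first is a restart: whenever the source state is the fresh-start state ($p$) or an $\autom_1$-accepting state ($\phi^1_F$), the run may jump to any state reachable in one $\autom_1$-step from an $\autom_1$-initial state while reading the current letter. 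This both launches the first block (at position $0$, where $p$ holds) and glues one accepted factor to the next. As in \cref{thm:concatenation}, $\Gamma''$ occurs only under its own quantifier, so $\exists\Gamma''$ can be pulled to the front in prenex form and $\phi_T$ is an existential second-order sentence; hence $\autom$ is a legitimate $\Sigma_1^1$-automaton.

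It then remains to verify $\lang_\theory(\autom)=L^*$ by two inclusions. For $L^*\subseteq\lang_\theory(\autom)$: the empty word is accepted by $\seq{\rho_0}$ with $p$ true, and given a factorization $\model=u_1\cdots u_k$ into nonempty words of $L$, I splice accepting $\autom_1$-runs of the $u_t$ together -- storing in $\autom$'s run every state of each block's run except its first, which I instead supply as the witness for $\exists\Gamma''$ in the restart transition that begins that block, so that consecutive blocks share a boundary state which is $\autom_1$-accepting and hence justifies the restart. For $\lang_\theory(\autom)\subseteq L^*$: from an accepting run of $\autom$, cut $\model$ at position $0$ and at every step that cannot be realized as an ordinary $\autom_1$-step; by $\phi_T$ such a step is then a restart, which forces $\phi^1_F$ at its source and thus certifies the block just closed, while the last block is certified by $\phi_F$ together with $\neg p$ at the final state. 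I expect the only real obstacle to be bookkeeping: keeping straight the off-by-one between the length of a word and that of its $\autom_1$-runs (the first state of each block's run is the existential witness, not stored in $\autom$'s run), and checking that all the $\autom_1$-runs involved can be taken over the common sort domains of $\model$ -- which is automatic, since the $u_t$ are factors of a single word and every sort has a constant domain along a word. Once the indexing is pinned down both directions are routine, and \cref{thm:second-order:automata} then yields an equivalent first-order automaton.
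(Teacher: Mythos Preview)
Your proposal is correct and follows essentially the same route as the paper: both build a $\Sigma_1^1$-automaton whose transition relation augments $\phi^1_T$ with a ``restart'' disjunct of the form $\exists\Gamma''\,(\phi^1_0[\Gamma_1/\Gamma'']\land\phi^1_T[\Gamma_1/\Gamma''])$, then invoke \cref{thm:second-order:automata}. The only difference is cosmetic bookkeeping: the paper leaves $\phi_0$ and $\phi_F$ untouched and merely adds the restart disjunct to $\phi_T$, whereas you introduce a fresh proposition $p$ to mark the initial state, which lets you accept the empty word uniformly and forces the first step to be a restart---a refinement the paper's sketch glosses over (its $\autom^*$ accepts $\varepsilon$ only when $\phi_0\land\phi_F$ happens to be satisfiable), but not a genuinely different idea.
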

\begin{proof}
  Let $\autom=\seq{\Sigma,\Gamma,\phi_0,\phi_T,\phi_F}$ be a first-order
  automaton. We define a $\Sigma_1^1$-automaton
  $\autom^*=\seq{\Sigma,\Gamma,\phi_0,\phi^*_T,\phi_F}$, where the transition
  relation is defined as:
  \begin{equation*}
    \phi^*_T\equiv \phi_T \lor \Bigl(
      \phi_F \land \exists \Gamma'' \bigl(
        \phi_0[\Gamma/\Gamma''] \land \phi_T[\Gamma/\Gamma'']
      \bigr)
    \Bigr)
  \end{equation*}
  The construction is similar to the one of \cref{thm:concatenation}, but here
  we are connecting the final states of $\autom$ to all the states reachable in
  one step from its own initial states. It is easy to confirm that
  $\lang_\theory(\autom^*)=\lang_\theory(\autom)^*$. Then, by
  \cref{thm:second-order:automata}, there is a first-order automaton equivalent
  to $\autom^*$.
\end{proof}

Therefore, $\theory$-regular languages are closed under the usual
language-theoretic operations of regular expression, which justifies the term
\emph{regular}. Whether they are closed under complementation as well is still
an open question. The usual path is complementing an automaton by first
determinizing it, therefore a preliminary step for complementation is that of
defining determinism and deterministic automata.

However, determinism is more difficult to define in symbolically represented
automata such as first-order ones, because the transition relation is not
explicitly defined. In principle, to define what it means for a first-order
automaton to be \emph{deterministic}, one would want to force the transition
relation to be a function, that is, when fixed a source state and a letter, the
destination state would be unique. However, this is a too strong requirement for
a first-order formula, given the Lowenheim-Skolem theorem.\footnotemark\
Therefore, we only ask for all the destination states to be elementarily
equivalent.

\footnotetext{Given any transition between $\Gamma$-structures of infinite
cardinalities, there will always exist another transition between
$\Gamma$-structures of higher infinite cardinalities.}

In the following, given a structure $\mu$ over a signature $\Sigma_1$, and
a subsignature $\Sigma_2\subseteq\Sigma_1$, we denote as $\mu\vert\Sigma_2$
the substructure of $\mu$ over $\Sigma_2$.

\begin{definition}[Deterministic first-order automaton]
  \label{def:determinism}
  Consider a first-order automaton
  $\autom=\seq{\Sigma,\Gamma,\phi_0,\phi_T,\phi_F}$ and $\theory$ be a
  $\Sigma$-theory. Then, $\autom$ is said to be \emph{$\theory$-deterministic}
  iff:
  \begin{enumerate}
    \item $\phi_0$ is satisfied by only one $\Gamma$-structure up to  
      elementary equivalence;
    \item for each two structures $\mu_1$ and $\mu_2$ such that both
      $\mu_1\models\phi_T$ and $\mu_2\models\phi_T$, if
      $\mu_1\vert_\Gamma$ and $\mu_2\vert_\Gamma$ are \emph{elementarily
      equivalent} and so are $\mu_1\vert_\Sigma$ and $\mu_2\vert_\Sigma$, then
      $\mu_1\vert_{\Gamma'}$ and $\mu_2\vert_{\Gamma'}$ are elementarily
      equivalent as well.
  \end{enumerate}
\end{definition}

A feature that will often come useful together with determinism is
\emph{completeness}, that is, the absence of states without successors. An
automaton $\autom=\seq{\Sigma,\Gamma,\phi_0,\phi_T,\phi_F}$ is \emph{complete}
if, for all $(\Gamma\cup\Sigma)$-structures $\mu$, there exists a
$\Gamma'$-structure $\mu'$ such that $\mu\cup\mu'\models\phi_T$.

In finite-state automata, a straightforward consequence of determinism and
completeness is that the run of a given word exists and is unique. Here we
cannot, again, ask for uniqueness, but for elementary equivalence. The following
is easy to show but confirms that \cref{def:determinism} is sufficient for our
purposes.
\begin{proposition}
  \label{prop:determinism:unique:run}
  For any \emph{deterministic} and \emph{complete} first-order automaton
  $\autom$ and $\theory$-word $\bar\sigma$, the run of $\bar\sigma$ over
  $\autom$ exists and is unique up to element-wise elementary equivalence.
\end{proposition}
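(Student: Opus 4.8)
The plan is to prove existence and uniqueness separately, in both cases by induction on the length of the word $\bar\sigma = \seq{\sigma_0,\ldots,\sigma_{n-1}}$, while carefully tracking that ``uniqueness'' is only up to element-wise elementary equivalence.

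For \emph{existence}, I would first invoke condition~1 of \cref{def:determinism} to pick some $\rho_0\models\phi_0$ (with the domains of the sorts fixed to match $\sigma_0$, as required by \cref{def:run}); such a structure exists since $\phi_0$ is satisfiable. Then, inductively, given $\rho_i$ already chosen, I would form the $(\Gamma\cup\Sigma)$-structure $\rho_i\cup\sigma_i$ and apply \emph{completeness} to obtain a $\Gamma'$-structure $\rho_{i+1}'$ with $\rho_i\cup\sigma_i\cup\rho_{i+1}'\models\phi_T$; renaming back gives the next state $\rho_{i+1}$. Iterating up to $i=n-1$ yields a full run $\seq{\rho_0,\ldots,\rho_n}$. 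A small point to check here is that completeness as stated quantifies over \emph{all} $(\Gamma\cup\Sigma)$-structures, so in particular it applies to the one whose sort domains agree with $\sigma_0$, and the resulting $\rho_{i+1}$ inherits those domains because $\phi_T$ relates $\Gamma$, $\Sigma$, and $\Gamma'$ over a shared domain; hence the sort-constancy requirement of \cref{def:run} is maintained along the construction.

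For \emph{uniqueness up to element-wise elementary equivalence}, I would again go by induction on $i$. The base case is exactly condition~1 of \cref{def:determinism}: any two initial states $\rho_0,\tilde\rho_0\models\phi_0$ are elementarily equivalent. For the inductive step, suppose $\rho_i\equiv\tilde\rho_i$ for two runs $\bar\rho,\bar{\tilde\rho}$ of $\bar\sigma$. The successors satisfy $\rho_i\cup\sigma_i\cup\rho_{i+1}'\models\phi_T$ and $\tilde\rho_i\cup\sigma_i\cup\tilde\rho_{i+1}'\models\phi_T$; setting $\mu_1=\rho_i\cup\sigma_i\cup\rho_{i+1}'$ and $\mu_2=\tilde\rho_i\cup\sigma_i\cup\tilde\rho_{i+1}'$ we have $\mu_1\vert_\Gamma=\rho_i\equiv\tilde\rho_i=\mu_2\vert_\Gamma$ and $\mu_1\vert_\Sigma=\sigma_i=\mu_2\vert_\Sigma$ (trivially elementarily equivalent, being identical), so condition~2 of \cref{def:determinism} gives $\mu_1\vert_{\Gamma'}\equiv\mu_2\vert_{\Gamma'}$, i.e.\ $\rho_{i+1}'\equiv\tilde\rho_{i+1}'$, and hence $\rho_{i+1}\equiv\tilde\rho_{i+1}$ after the renaming. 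This propagates to $i=n$, giving element-wise elementary equivalence of the two runs.

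I expect the main subtlety, rather than a deep obstacle, to be bookkeeping about domains and the scope of the elementary-equivalence relation: one must make sure that when condition~2 is applied, the two witnessing structures $\mu_1,\mu_2$ genuinely interpret the same signature $\Gamma\cup\Sigma\cup\Gamma'$ and that their $\Sigma$-reducts coincide (so their elementary equivalence is immediate), and that the ``element-wise'' qualifier in the statement is exactly what the induction delivers---equivalence of $\rho_i$ and $\tilde\rho_i$ for each $i$, not some stronger joint notion. A secondary point worth a sentence is that completeness is needed in its full strength (for all structures, not just reachable ones) precisely so that the inductive existence step never gets stuck; without it the run might fail to extend even though the automaton is deterministic.
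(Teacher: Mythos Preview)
Your proposal is correct and is precisely the straightforward induction the paper has in mind; the paper does not spell out a proof at all, merely remarking that the proposition ``is easy to show but confirms that \cref{def:determinism} is sufficient for our purposes,'' and your argument is the natural way to cash this out. The only caveat is that your existence step tacitly assumes one can realize a model of $\phi_0$ over the particular sort domains $S^{\sigma_0}$ fixed by the word, which the paper's definitions do not explicitly guarantee either; this is a harmless loose end shared with the paper rather than a defect of your approach.
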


Therefore, if determinization were possible, then $\theory$-regular languages
would be closed under complementation, because deterministic first-order
automata are easily complementable.

\begin{restatable}{proposition}{detcomplementation}
  \label{prop:deterministic:complementation}
  From a \emph{deterministic} first-order automaton $\autom$ one can obtain a
  complement automaton $\autom^{-1}$ such that
  $\lang_\theory(\autom^{-1})=\overline{\lang_\theory(\autom)}$ for any theory
  $\theory$.
\end{restatable}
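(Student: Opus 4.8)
The strategy is the classical ``complete, then flip the acceptance condition'' recipe, transported to first-order automata. Its correctness rests on \cref{prop:determinism:unique:run}: on a \emph{complete} deterministic automaton every $\theory$-word induces a run that is unique up to element-wise elementary equivalence, and since $\phi_F$ is a first-order sentence its truth at the last state is invariant under elementary equivalence. Hence, if $\autom$ is already complete, then $\autom^{-1}\coloneqq\seq{\Sigma,\Gamma,\phi_0,\phi_T,\neg\phi_F}$ already works: a word is rejected by $\autom$ iff its unique run ends in a state satisfying $\neg\phi_F$, iff $\autom^{-1}$ accepts it, so $\lang_\theory(\autom^{-1})=\overline{\lang_\theory(\autom)}$.

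The real work is to reduce the general case to the complete one. First I would dispose of the degenerate case in which $\phi_0$ is unsatisfiable: then $\autom$ has no runs, $\lang_\theory(\autom)=\emptyset$, and $\overline{\lang_\theory(\autom)}=\sem{\theory}^*$ is trivially $\theory$-regular (this case split is non-constructive, but we only need a complement automaton to \emph{exist}). Otherwise I would complete $\autom$ while preserving determinism in the usual way: add a fresh $0$-ary predicate $q$ to obtain $\Gamma^+=\Gamma\cup\set{q}$ and designate a canonical absorbing ``sink'' $\Gamma^+$-structure $\bot$ ($q$ true and every symbol of $\Gamma$ given a fixed trivial interpretation, characterised up to elementary equivalence over any fixed domain by a $\Gamma^+$-sentence $\chi_\bot$ — routine when $\Gamma$ is relational, a minor technicality otherwise). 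Then let $\phi_0^+$ pick the unique $\phi_0$-model with $q$ false when one exists over the word's domain and $\bot$ otherwise, let $\phi_F^+\coloneqq\neg q\land\phi_F$, and let $\phi_T^+$ follow $\phi_T$ from a non-sink state having a $\phi_T$-successor, fall into $\bot$ from a non-sink state having none, and stay in $\bot$ forever. One checks that $\autom^+$ is complete and deterministic and that its runs are exactly the runs of $\autom$, trivially extended, together with runs that fall into $\bot$ precisely where the run of $\autom$ would get stuck; hence $\lang_\theory(\autom^+)=\lang_\theory(\autom)$, and $\autom^{-1}\coloneqq\seq{\Sigma,\Gamma^+,\phi_0^+,\phi_T^+,\neg\phi_F^+}$ is the desired complement by the first paragraph.

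The main obstacle is the logical complexity of the guards used in the completion. ``State $\mu$ has a $\phi_T$-successor'' is $\exists\Gamma'\suchdot\phi_T$, which is $\Sigma_1^1$ and hence admissible by \cref{thm:second-order:automata}; but the condition that \emph{forces} the fall into $\bot$ is its negation $\neg\exists\Gamma'\suchdot\phi_T$, which is $\Pi_1^1$ and not expressible in $\Sigma_1^1$ in general (and likewise ``$\phi_0$ has no model over the current domain'' inside $\phi_0^+$), so $\autom^+$ as written is not literally a first-order, nor even a $\Sigma_1^1$, automaton. To close this gap I would argue along one of two lines: either restrict the statement to automata that are already complete, so that the first paragraph applies outright; or observe that for the automata of interest successor-existence (and solvability of $\phi_0$ over a given domain) is itself first-order definable, so the guards can be taken first-order and the completion stays within the model. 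I would present the construction in whichever of these forms the framework supports, explicitly flagging the $\Pi_1^1$ obstruction as the delicate point.
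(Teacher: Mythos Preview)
Your overall plan --- add an absorbing sink to complete, then negate the acceptance condition, appealing to \cref{prop:determinism:unique:run} --- is exactly the paper's. The divergence is in the completion step. You insist on keeping the completed automaton deterministic, which means the fall into the sink must be \emph{forced} precisely at dead ends, and you rightly observe that the guard ``no $\phi_T$-successor exists'' is $\Pi_1^1$ and hence outside the framework. The paper does not try to preserve determinism: it takes $\Gamma_C=\Gamma\cup\{s\}$, $\phi_0^C\equiv\neg s\land\phi_0$, $\phi_F^C\equiv\neg s\land\phi_F$, and $\phi_T^C\equiv(\neg s\land\neg s'\land\phi_T)\lor(s\land s')$, all purely first-order; it then accepts that $\autom_C$ is nondeterministic and argues the extra nondeterminism is harmless for complementation because every run touching the sink is rejecting in $\autom_C$, so making the sink accepting in $\autom^{-1}$ is safe. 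The idea you did not try is thus: drop determinism for the completed automaton and reason directly about the added runs, rather than detecting dead ends.

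That said, your $\Pi_1^1$ instinct is pointing at something real. In the paper's $\phi_T^C$ the two disjuncts are $\neg s\land\neg s'$ and $s\land s'$, so no transition ever crosses from the non-sink part into the sink; starting from $\phi_0^C$ (which forces $\neg s$) the sink is unreachable, the ``completion'' is vacuous, and a word on which $\autom$ gets stuck still has no run in $\autom_C$ or in $\autom^{-1}$. If one repairs this by adding the obvious disjunct $(\neg s\land s')$, the sink becomes reachable, but then in $\autom^{-1}$ the ``fall immediately into the sink'' run accepts every nonempty word. So the benign-nondeterminism shortcut, at least as executed, does not bypass the difficulty you isolated; dead-end detection really is the crux, and your caveats --- restrict to automata that are already complete, or to those where successor-existence is first-order definable --- are the honest statement of what the construction actually delivers.
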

\begin{proof}
  Just swap accepting and non-accepting states by negating the accepting
  condition, after fixing the automaton for completeness if it is not already
  complete. See \cref{app:proofs} for the full details.
\end{proof}


\section{From Temporal Logic to Automata}
\label{sec:encoding}

We can now state how to encode \FOLTL sentences into first-order automata. In
all what follows, \emph{w.l.o.g.}\ we assume any sentence to be in \emph{negated
normal form}, \ie with negations only applied to atoms as in $\neg
p(t_1,\ldots,t_n)$ or $\neg(t_1 = t_n)$.

The translation first involves the following normal form.
\begin{definition}[Stepped normal form]
  \label{def:snf}
  Given an \FOLTL formula $\phi$, the \emph{stepped normal form} of $\phi$,
  denoted $\snf(\phi)$, is the formula defined recursively as follows:
  \begin{enumerate}
    \item $\snf(p(t_1,\ldots,t_n))=p(t_1,\ldots,t_n)$ and $\snf(t_1 = t_2)=(t_1
    = t_2)$;
    \item $\snf(Qx\phi)= Qx\snf(\phi)$, where $Q\in\set{\forall,\exists}$
      and $x$ is a first-order variable;
    \item $\snf(\neg\phi)=\neg\snf(\phi)$;
    \item $\snf(\phi_1\circ\phi_2)=\snf(\phi_1)\circ\snf(\phi_2)$, where 
    $\circ\in\set{\land,\lor}$
    \item $\snf(\circ \phi)=\circ\phi$ where 
      $\circ\in\set{\ltl{X,Y,wX,Z}}$; 
    \item $\snf(\ltl{\phi_1 U \phi_2})=\snf(\phi_2)\lor(\snf(\phi_1)\land\ltl{X(\phi_1 U \phi_2)})$;
    \item $\snf(\ltl{\phi_1 R \phi_2})=\snf(\phi_2)\land(\snf(\phi_1)\lor\ltl{wX(\phi_1 R \phi_2)})$;
    \item $\snf(\ltl{\phi_1 S \phi_2})=\snf(\phi_2)\lor(\snf(\phi_1)\land\ltl{Y(\phi_1 S \phi_2)})$;
    \item $\snf(\ltl{\phi_1 T \phi_2})=\snf(\phi_2)\land(\snf(\phi_1)\lor\ltl{Z(\phi_1 T \phi_2)})$;
  \end{enumerate}
\end{definition}

Any temporal operator in $\snf(\phi)$ appears only directly below one among
$\ltl{X,Y,wX,Z}$. Intuitively, $\snf(\phi)$ separates what the formula says
about the current state, from what it says about the next and previous ones. The
following defines a universe of formulas that are important for the satisfaction
of a given sentence.
\begin{definition}[Closure]
  \label{def:closure}
  The \emph{closure} of a \FOLTL sentence $\phi$ is the set $\closure(\phi)$
  defined as follows:
  \begin{enumerate}
    \item $\ltl{X\phi}\in\closure(\phi)$;
    \item $\psi\in\closure(\phi)$ for any subformula $\psi$ of $\phi$ 
      (including itself);
    \item for any $\ltl{\phi_1 U \phi_2}\in\closure(\phi)$, $\ltl{\phi_1 R
    \phi_2}\in\closure(\phi)$, $\ltl{\phi_1 S \phi_2}\in\closure(\phi)$, or
    $\ltl{\phi_1 T \phi_2}\in\closure(\phi)$, we have $\ltl{X(\phi_1 U
    \phi_2)}\in\closure(\phi)$, $\ltl{wX(\phi_1 R \phi_2)}\in\closure(\phi)$,
    $\ltl{Y(\phi_1 S \phi_2)}\in\closure(\phi)$, or $\ltl{Z(\phi_1 T
    \phi_2)}\in\closure(\phi)$, respectively.
  \end{enumerate}
\end{definition}

Now, we define some symbols, called \emph{surrogates}, that take the place
of temporal formulas in the encoding below.
\begin{gather*}
  \XS=\set{\xs_\psi \mid \ltl{X\psi}\in\closure(\phi) } \quad
  \wXS=\set{\ws_\psi \mid \ltl{wX\psi}\in\closure(\phi) } \\
  \YS=\set{\ys_\psi \mid \ltl{Y\psi}\in\closure(\phi) } \quad
  \ZS=\set{\zs_\psi \mid \ltl{Z\psi}\in\closure(\phi) }
\end{gather*}
where the $\xs_\psi$, $\ws_\psi$, $\ys_\psi$, $\zs_\psi$ are predicates of the
arity $n$ corresponding to the number of free first-order variables in $\psi$.
For every $\psi \in \closure(\phi)$, we denote with $\snf_S(\psi)$ the formula
obtained from $\snf(\psi)$ by replacing each $\ltl{X\theta}$, $\ltl{wX\theta}$,
$\ltl{Y\theta}$, or $\ltl{Z\theta}$ by their surrogates $\xs_{\theta}$,
$\ws_{\theta}$, $\ys_{\theta}$, or $\zs_{\theta}$, respectively. The formula
$\snf_S'(\psi)$ is defined similarly but using the primed $\xs'_{\theta}$,
$\ws'_{\theta}$, $\ys'_{\theta}$, and $\zs'_{\theta}$. Then we can finally
proceed.

\begin{encoding}[From \FOLTL to automata]
  \label{enc:encoding}
  Let $\phi$ be a \FOLTL sentence. We define
  $\autom(\phi)=\seq{\Sigma,\Gamma,\phi_0,\phi_T,\phi_F}$ as follows:
  \begin{enumerate}
    \item the state signature is $\Gamma=\XS\cup\wXS\cup\YS\cup\ZS$;
    \item the initial and final conditions are:
    \begin{align*}
      \phi_0\equiv {} & \xs_\phi \land 
        \smashoperator[r]{\bigwedge_{\zs_\psi\in\ZS}}
          \forall \bar x. \zs_\psi(\bar x) \land
        \smashoperator[r]{\bigwedge_{\ys_\psi\in\YS}}
          \forall \bar x. \neg \ys_\psi(\bar x)\\
      \phi_F\equiv {} &
        \smashoperator{\bigwedge_{\ws_\psi\in\wXS}}
          \forall \bar x. \ws_\psi(\bar x) \land
        \smashoperator[r]{\bigwedge_{\xs_\psi\in\XS}}
          \forall \bar x. \neg \xs_\psi(\bar x)
    \end{align*}
    \item the transition relation is:
    \begin{align}
      \phi_T  \equiv {} & 
       \bigwedge_{s_\psi \in \XS\cup\wXS} 
        \forall \bar x.
          [s_\psi(\bar x) \iff \snf_S'(\psi(\bar x))]
          \label{eq:encoding:one}\\
        {} \land {} & \bigwedge_{s_\psi \in \YS\cup\ZS} 
        \forall \bar x.
          [s_\psi'(\bar x) \iff \snf_S(\psi(\bar x))]
          \label{eq:encoding:two}
    \end{align}
  \end{enumerate}
\end{encoding}
Intuitively, the initial condition states that, in the first component of the
word, $\phi$ must hold and all formulas of type $\ltl{Y}\psi$ (resp., $\ltl{wY}
\psi$) in the closure are false (resp., true), following \FOLTL semantics. Then,
the transition relation ensures that the surrogates of all the future and past
formulas behave following the semantics of such formulas. The final condition
ensures we are at the end of a finite word, by checking whether all formulas of
type $\ltl{X}\psi$ (resp., $\ltl{wX} \psi$) in the closure are false (resp.,
true).  We can state and prove the correctness of the above construction.
\begin{restatable}{theorem}{encodingproof}
  \label{thm:encoding}
  For any sentence $\phi$ and any $\theory$, we have
  $\lang_\theory(A(\phi))=\lang_\theory(\phi)$.
\end{restatable}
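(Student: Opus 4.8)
The plan is to prove the two inclusions $\lang_\theory(A(\phi)) \subseteq \lang_\theory(\phi)$ and $\lang_\theory(\phi) \subseteq \lang_\theory(A(\phi))$ via a single bridging lemma that connects runs of $A(\phi)$ to the temporal semantics of $\phi$. The key object is the following invariant: for a word $\bar\sigma = \seq{\sigma_0,\ldots,\sigma_{n-1}}$ and a run $\bar\rho = \seq{\rho_0,\ldots,\rho_n}$ induced by $\bar\sigma$, I claim that for every $0 \le i < n$, every surrogate $s_\psi$ (of any of the four types), and every tuple $\bar a$ of domain elements matching the free variables of $\psi$, we have $\bar a \in s_\psi^{\rho_i}$ if and only if $\bar\sigma, \xi_{\bar a}, i \models \ltl{O}\psi$, where $\ltl{O} \in \set{\ltl{X}, \ltl{wX}, \ltl{Y}, \ltl{Z}}$ is the operator associated with that surrogate family and $\xi_{\bar a}$ is the environment assigning $\bar a$ to the free variables of $\psi$. (For $i = n$ the claim is that the future-type surrogates are empty and the weak-future surrogates are full, matching the fact that $\ltl{X}\psi$ is false and $\ltl{wX}\psi$ is true past the end of the word.)

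I would first establish, as an auxiliary fact, a "local correctness of $\snf$" statement: for any $\psi \in \closure(\phi)$, any word, any position $i < n$, and any environment $\xi$, we have $\bar\sigma,\xi,i \models \psi$ iff $\bar\sigma,\xi,i \models \snf(\psi)$; this is an easy structural induction using the fixpoint unfoldings of $\ltl{U,R,S,T}$ together with the semantics of $\ltl{X,wX,Y,Z}$, and the fact that the only temporal operators left in $\snf(\psi)$ sit directly under one of $\ltl{X,wX,Y,Z}$. Combining this with the invariant above and the definition of the surrogated formulas $\snf_S$ and $\snf_S'$, one gets that $\rho_i$ (together with $\sigma_i$ and the surrogate values of $\rho_{i+1}$) satisfies the relevant $\forall\bar x.[\,\cdot \iff \snf_S^{(\prime)}(\cdot)\,]$ conjunct of $\phi_T$ exactly when the invariant holds at $i$ and $i+1$ in a consistent way. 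The forward-type equivalences in line~\eqref{eq:encoding:one} propagate the invariant \emph{leftward in value but rightward in the recursion} (they define $s_\psi(\bar x)$ at step $i$ in terms of primed surrogates, i.e.\ values at step $i+1$), while the past-type equivalences in line~\eqref{eq:encoding:two} define the primed surrogates at step $i+1$ in terms of unprimed data at step $i$; so the induction is set up by moving forward through the word for the past surrogates and by a downward/backward argument for the future surrogates, pinned at the right end by $\phi_F$.

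Concretely: for the inclusion $\lang_\theory(A(\phi)) \subseteq \lang_\theory(\phi)$, given an accepting run I prove the invariant for past-type surrogates by induction on $i$ from $0$ upward, using $\phi_0$ for the base case ($\ltl{Y}\psi$ false and $\ltl{Z}\psi$ true at position $0$) and line~\eqref{eq:encoding:two} of $\phi_T$ for the step; and I prove the invariant for future-type surrogates by induction on $i$ from $n$ downward, using $\phi_F$ for the base case ($\ltl{X}\psi$ empty, $\ltl{wX}\psi$ full at position $n$) and line~\eqref{eq:encoding:one} for the step. Since $\rho_0 \models \phi_0$ forces $\xs_\phi$ to hold, the invariant at $i=0$ for the $\ltl{X}$-surrogate of (the closure element) $\phi$ gives $\bar\sigma,\xi,0 \models \ltl{X}\phi$... but note $\phi$ may have no free variables so $\xs_\phi$ is a proposition; here I should instead use that $\closure(\phi)$ contains $\phi$ itself and an extra step of $\snf$-unfolding relates $\xs_\phi$ at step $0$ to $\phi$ holding at step $0$ — I will double-check the exact bookkeeping of the "$\ltl{X}\phi \in \closure(\phi)$" clause, since $\phi_0$ asserts $\xs_\phi$, i.e.\ $\ltl{X}\phi$ holds "before" position $0$, which by the unfolding semantics is exactly $\bar\sigma \models \phi$. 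For the converse inclusion, given $\bar\sigma \models \phi$ I \emph{construct} a run by defining each $\rho_i$ to interpret every surrogate $s_\psi$ exactly according to the right-hand side of the invariant (this is well-defined and yields genuine $\Gamma$-structures over the fixed domain $\sigma_0$), then verify $\rho_0 \models \phi_0$, $\rho_n \models \phi_F$, and $\rho_i \cup \sigma_i \cup \rho_{i+1}' \models \phi_T$ for all $i$ — each of which reduces, via the $\snf$ auxiliary fact, to a semantic triviality about $\ltl{O}\psi$ at adjacent positions.

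The main obstacle I anticipate is the interaction between the two directions of propagation in a single transition relation: $\phi_T$ simultaneously constrains future surrogates of $\rho_i$ in terms of $\rho_{i+1}$ and past surrogates of $\rho_{i+1}$ in terms of $\rho_i$, so one must be careful that the "forward from $0$" induction for past surrogates and the "backward from $n$" induction for future surrogates do not secretly depend on each other in a circular way. They do not — the two families of equivalences in $\phi_T$ are on disjoint sets of surrogate symbols and each equivalence's right-hand side, after surrogation, refers only to surrogates whose correctness is already guaranteed by the appropriate induction — but making this decoupling explicit, and handling the free-variable bookkeeping (the arities of the surrogate predicates, and the universal closures $\forall\bar x$ in $\phi_0$, $\phi_F$, $\phi_T$) correctly across the $\snf_S$ substitution, is the delicate part. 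A secondary subtlety is that $\phi_T$ uses biconditionals rather than implications, so the constructed run in the converse direction must match the surrogate values \emph{exactly}; this is why defining $\rho_i$ directly from the invariant (rather than, say, nondeterministically) is the right move, and it is also where the uniqueness-flavored reasoning will later pay off when this encoding is reused for the pure-past determinism result.
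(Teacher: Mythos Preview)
Your $\supseteq$ direction is essentially the paper's: define each $\rho_i$ by interpreting every surrogate predicate according to the temporal semantics, then check $\phi_0$, $\phi_T$, $\phi_F$ directly. For $\subseteq$ you diverge from the paper. The paper argues by \emph{structural induction on $\psi\in\closure(\phi)$}, proving for all positions $i$ at once the one-sided implication ``if $\rho_i\cup\sigma_i\cup\rho'_{i+1},\xi\models\snf_S'(\psi)$ then $\bar\sigma,i,\xi\models\psi$''; the temporal cases unfold one step via $\phi_T$ at the adjacent position and then invoke the hypothesis on the structurally smaller $\psi$.

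Your alternative---a forward sweep on $i$ for past surrogates and a backward sweep for future surrogates---rests on the claim that ``each equivalence's right-hand side, after surrogation, refers only to surrogates whose correctness is already guaranteed by the appropriate induction.'' That claim fails as soon as past and future nest. If $\ltl{Y(X p)}\in\closure(\phi)$ then $\ys_{\ltl{X p}}\in\YS$, and line~\eqref{eq:encoding:two} reads $\ys'_{\ltl{X p}}\iff\snf_S(\ltl{X p})=\xs_p$: the forward step establishing the invariant for the \emph{past} surrogate $\ys_{\ltl{X p}}$ at $\rho_{i+1}$ needs the invariant for the \emph{future} surrogate $\xs_p$ at $\rho_i$, which your forward sweep never supplies and your backward sweep (descending from $n$) has not yet reached. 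Symmetrically, $\ltl{X(Y p)}$ produces $\xs_{\ltl{Y p}}\iff\ys'_p$ in line~\eqref{eq:encoding:one}, so the backward step needs a past-surrogate fact at $\rho_{i+1}$. Neither order of the two sweeps closes the circle. The paper's induction sidesteps this by decoupling along the \emph{subformula} axis rather than the position axis: at the case $\ltl{Y\psi}$ one may appeal to the inductive hypothesis on $\psi$ at any position, irrespective of whether $\psi$ itself contains future operators.

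A smaller point you already half-noticed: the encoding is arranged so that $\xs_\psi$ at $\rho_i$ stands for $\psi$ holding at position $i$ (this is why $\xs_\phi$ in $\phi_0$ forces $\bar\sigma,0\models\phi$), not for $\ltl{X}\psi$ at $i$; your stated invariant is shifted by one on the future side. Fixing that shift makes your $\supseteq$ construction coincide with the paper's verbatim.
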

\begin{proof}
  By structural induction on $\phi$. See \cref{app:proofs}.
\end{proof}


\section{Deterministic automata from\\pure-past \FOLTL}
\label{sec:logic:purepast}

In this section, we consider the fragment of \FOLTL in which temporal operators
are restricted to talk about the past. We show how to go \emph{directly} from
formulas of this fragment to \emph{deterministic} first-order automata and we
discuss the importance of such construction in practical settings.

\subsection{The pure-past fragment of \FOLTL}

A \emph{pure-past} \FOLTL formula is a formula $\phi$ which does \emph{not}
use future temporal operators ($\ltl{X}$, $\ltl{wX}$, 
$\ltl{U}$, $\ltl{R}$). Pure-past \FOLTL is denoted as \FOpLTL. Given that only
past operators are available, it is natural to interpret  a \FOpLTL formula at
the \emph{end} of a word. Hence given a $\theory$-word $\bar\sigma$ and a
\FOpLTL formula $\phi$, we say that
$\bar\sigma=\seq{\sigma_0,\ldots,\sigma_{n-1}}$ satisfies $\phi$, denoted
$\bar\sigma\models\phi$, iff $\bar\sigma,n-1\models\phi$ according to the
semantics of \FOLTL defined in~\cref{sec:preliminaries}. $\lang_\theory(\phi)$
is the set of $\theory$-words satisfying $\phi$.
The construction of \cref{enc:encoding} can be tailored to \FOpLTL formulas. 

\begin{encoding}[Automaton of a \FOpLTL sentence]
  \label{enc:past:encoding}
  Let $\phi$ be a \FOpLTL sentence. We define
  $\autom(\phi)=\seq{\Sigma,\Gamma,\phi_0,\phi_T,\phi_F}$ similarly to
  \cref{enc:encoding} with the following changes:
  \begin{enumerate}
    \item $\closure(\phi)$ is defined as in \cref{def:closure} but
      Item~1 mentions $\ltl{Y\phi}$ instead of $\ltl{X\phi}$;
    \item the initial state condition is defined as follows:
      \begin{equation*}
        \phi_0\equiv 
        \bigwedge_{\zs_\psi\in\ZS}
            \forall \bar{x}\,
              \zs_\psi(\bar{x})
          \bigwedge_{\ys_\psi\in\YS}
            \forall \bar{x}\,
              \neg \ys_\psi(\bar{x})
      \end{equation*}
    \item the final state condition is defined as $\phi_F\equiv \ys_\phi$.
  \end{enumerate}
\end{encoding}

Proving that this revised encoding is correct is a matter of minor modifications
to the proof of \cref{thm:encoding}. However, one may wonder what we gained with
this construction. The crucial feature is that the automaton resulting from this
encoding of a pure-past sentence is \emph{deterministic}.

\begin{restatable}{theorem}{pastdeterminism}
  \label{thm:purepast:determinism}
  Given a \FOpLTL sentence $\phi$, $\autom(\phi)$ is \emph{complete} and
  \emph{deterministic}.
\end{restatable}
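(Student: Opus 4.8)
The plan is to establish the two properties separately, reading them off directly from the shape of the transition relation in \cref{enc:past:encoding}. Recall that for a \FOpLTL sentence the closure contains only past surrogates, so $\Gamma=\YS\cup\ZS$ and the transition relation reduces to the single conjunction
\begin{equation*}
  \phi_T\equiv\bigwedge_{s_\psi\in\YS\cup\ZS}\forall\bar x.\,[\,s_\psi'(\bar x)\iff\snf_S(\psi(\bar x))\,],
\end{equation*}
since the clauses \eqref{eq:encoding:one} over $\XS\cup\wXS$ are vacuous. The key observation is that $\snf_S(\psi(\bar x))$ is a formula over $\Sigma\cup\YS\cup\ZS$ \emph{only} — no primed symbols occur in it — because in $\snf(\psi)$ every surviving temporal operator sits directly under $\ltl{Y}$ or $\ltl{Z}$ and is therefore replaced by an \emph{unprimed} surrogate $\ys_\theta$ or $\zs_\theta$. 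Hence $\phi_T$ has the syntactic form $\forall\bar x.\,\bigwedge_j\,[\,s_j'(\bar x)\iff\alpha_j(\bar x)\,]$ where each $\alpha_j$ is a $(\Gamma\cup\Sigma)$-formula; the primed part of any model of $\phi_T$ is an explicit first-order definition in terms of the unprimed part and the letter.

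For completeness: given any $(\Gamma\cup\Sigma)$-structure $\mu$, I would define a $\Gamma'$-structure $\mu'$ by setting, for each surrogate $s_\psi$, the interpretation of $s_\psi'$ to be $\{\bar a\mid \mu\models\snf_S(\psi(\bar a))\}$. Then $\mu\cup\mu'\models\phi_T$ by construction, which is exactly the completeness requirement. For determinism, the first condition — that $\phi_0$ has a unique model up to elementary equivalence — follows because in \cref{enc:past:encoding} the initial condition fixes every $\ys_\psi$ to the empty relation and every $\zs_\psi$ to the full relation, so all models of $\phi_0$ with the same domains agree on the interpretation of every symbol of $\Gamma$ and are therefore isomorphic (hence elementarily equivalent). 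The second condition is where the definitional shape of $\phi_T$ does the work: if $\mu_1,\mu_2\models\phi_T$ with $\mu_1|_\Gamma\equiv\mu_2|_\Gamma$ and $\mu_1|_\Sigma\equiv\mu_2|_\Sigma$, I want to conclude $\mu_1|_{\Gamma'}\equiv\mu_2|_{\Gamma'}$. Since each $s_\psi'$ is defined by the $(\Gamma\cup\Sigma)$-formula $\snf_S(\psi)$, any first-order $\Gamma'$-sentence $\chi$ is equivalent, modulo $\phi_T$, to the $(\Gamma\cup\Sigma)$-sentence obtained by substituting $\snf_S(\psi)$ for each atomic occurrence of $s_\psi'$; that substituted sentence is preserved between $\mu_1$ and $\mu_2$ by the assumed elementary equivalences on $\Gamma$ and on $\Sigma$.

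The main obstacle is this last step: elementary equivalence is not literally closed under taking reducts of a union of two independently-given structures, so I have to be careful about how "$\mu_1|_\Gamma\equiv\mu_2|_\Gamma$ and $\mu_1|_\Sigma\equiv\mu_2|_\Sigma$" combines into a statement about $(\Gamma\cup\Sigma)$-sentences. The clean way around it is to phrase the argument entirely as a syntactic substitution / relativization: fix an arbitrary $\Gamma'$-sentence $\chi$, form $\hat\chi$ by replacing every atom $s_\psi'(\bar t)$ in $\chi$ by $\snf_S(\psi(\bar t))$ (a purely syntactic rewrite, legitimate because $\snf_S(\psi)$ has exactly the right free variables $\bar x$), note that $\mu\models\phi_T$ implies $\mu\models\chi\leftrightarrow\hat\chi$, and then argue that $\hat\chi$ — being a Boolean combination of $\Gamma$-sentences and $\Sigma$-sentences once one pushes quantifiers appropriately — is preserved. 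The honest technical point to check here is that $\hat\chi$ genuinely separates into the two sublanguages, which holds because $\Sigma$ and $\Gamma$ are disjoint and $\snf_S(\psi)$ mixes them only through atoms, not through any shared function symbols; if one wants to avoid even this subtlety, it suffices to observe that in all intended applications $\mu_1$ and $\mu_2$ arise as reducts of a common $(\Gamma\cup\Sigma)$-structure (the word letter together with its induced state), in which case $\mu_1|_{\Gamma\cup\Sigma}$ and $\mu_2|_{\Gamma\cup\Sigma}$ are literally elementarily equivalent and the substitution argument goes through immediately. I would present the proof in the latter, cleaner form and relegate the fully general reduct bookkeeping to \cref{app:proofs}.
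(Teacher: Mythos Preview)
Your overall strategy is the same as the paper's: strip $\phi_T$ down to the past clauses, read off completeness from the definitional shape, handle $\phi_0$ by noting that the interpretations of all $\ys_\psi,\zs_\psi$ are forced, and treat the second determinism condition by observing that each $s_\psi'$ is defined by a $(\Gamma\cup\Sigma)$-formula $\snf_S(\psi)$. So far this matches the paper almost verbatim.

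The point where you diverge is the justification of the last step, and there the proposal has a genuine gap. You claim that after substituting each $s_\psi'(\bar x)$ by $\snf_S(\psi(\bar x))$ the resulting sentence $\hat\chi$ ``is a Boolean combination of $\Gamma$-sentences and $\Sigma$-sentences once one pushes quantifiers appropriately''. That is not true: a formula like $\exists x.\,\ys_\theta(x)\land p(x)$ has a quantifier scoping over a pure-$\Gamma$ atom \emph{and} a pure-$\Sigma$ atom simultaneously, and no amount of quantifier pushing separates it into a Boolean combination of a $\Gamma$-sentence and a $\Sigma$-sentence. The paper does \emph{not} make this separation claim; its key observation is weaker and purely syntactic: every \emph{atom} of $\snf_S(\psi)$ is either a pure $\Gamma$-atom (a surrogate applied to variables) or a pure $\Sigma$-atom (an original predicate applied to $\Sigma$-terms), because $\Gamma$ contributes no function symbols and surrogates are only ever applied to the free variables of the corresponding subformula. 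From that atom-level purity the paper argues by structural induction on the formula.

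Your fallback --- assume $\mu_1$ and $\mu_2$ share a common ambient $(\Gamma\cup\Sigma)$-structure --- is not a fix: \cref{def:determinism} quantifies over \emph{arbitrary} models $\mu_1,\mu_2$ of $\phi_T$ whose $\Gamma$- and $\Sigma$-reducts are separately elementarily equivalent, so restricting to that special case changes the statement rather than proving it. Replace the ``Boolean combination of pure sentences'' claim with the atom-purity observation and the paper's structural induction, and your write-up lines up with the paper's argument.
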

\begin{proof}
  Observe that, in the encoding of \FOpLTL sentences, the predicates in
  $\Gamma'$ are always univocally determined by $\Sigma$ and $\Gamma$. See
  \cref{app:proofs} for the proof.
\end{proof}

\subsection{The relevance of the pure-past fragment}
The pure-past fragments of temporal
logics~\cite{lichtenstein1985glory,de2021pure} are receiving a renewed
interest, due to their usefulness in practical scenarios of formal
verification like, for instance, \emph{reactive synthesis} and
\emph{monitoring}. We now discuss how the construction of
deterministic automata starting from \FOpLTL formulas is the first step in
lifting these problems to the infinite-state setting.

\medskip

\textbf{Reactive synthesis}. Reactive synthesis~\cite{PnueliR89} is the problem
of synthesizing a correct-by-construction system starting from a temporal
specification. Formally, given a temporal formula $\phi$ whose variables are
partitioned into \emph{controllable} ($\mathcal{C}$) and \emph{uncontrollable}
($\mathcal{U}$), reactive synthesis asks to synthesize, whenever possible, a
controller (typically represented as a Mealy machine) that chooses the value of
the variables in $\mathcal{C}$ in order to satisfy $\phi$ for all possible
values of the variables in $\mathcal{U}$. Reactive synthesis plays a crucial
role in formal verification and is considered the culmination of declarative
programming.

For specifications written in Linear Temporal Logic with Past (\LTLP), the
most common approach requires:
\begin{enumerate*}[label=(\roman*)]
  \item to generate a \emph{deterministic} automaton $\autom_\phi$ equivalent to the
    starting formula $\phi$;
  \item to solve fixpoint computations for computing the \emph{winning region},
    that is, the region of $\autom_\phi$ which witnesses the existence of
    (at least) a strategy.
\end{enumerate*}
While the latter step is conceptually and computationally simple (it
amounts to perform a visit on $\autom_\phi$), the former step is
problematic: in general, generating deterministic automata from temporal
formulas requires considering more involved \emph{accepting conditions}
(\eg, B\"uchi automata cannot be used, since they are not closed under
determinization) and it involves a \emph{doubly exponential} complexity
(the reactive synthesis problem for \LTLP is
\EXPTIME[2]-complete~\cite{PnueliR89}). Last but not least, symbolic
techniques%
\footnote{With the term \emph{symbolic techniques}, we denote any
approach that represents the set of initial and final states of the
automaton, as well as its transition function/relation, with logical
formulas (typically propositional or first-order).}%
, which have been key for the advancement of model
checking~\cite{burch1992symbolic}, turned out to be not effective for reactive
synthesis, where techniques based on an explicit-state representation of the
automaton often perform better~\cite{jacobs2017first}. In contrast, the
pure-past fragment of \LTLP enjoys the following two important features:
\begin{itemize}
  \item for any pure-past formula $\phi$ of \LTLP, it is possible to build
    directly an equivalent deterministic finite automaton of singly exponential
    size with respect to the size of $\phi$, \emph{without} the need of a
    subsequent determinization step; thanks to this fact, the reactive synthesis
    problem for the pure-past fragment of \LTLP is only
    \EXPTIME-complete~\cite{AGGMM:AAAI-23};
  \item symbolic techniques are effective: it is possible to represent
    a deterministic automaton for any pure-past formula $\phi$ by means of
    Boolean formulas of size linear in the size of
    $\phi$~\cite{DBLP:conf/aaai/GeattiMR24}.
\end{itemize}

Despite the great interest that infinite-state systems have received in the
comunities of verification and knowledge representation, a small effort has been
done in considering reactive synthesis from first-order specifications, with
recent work focusing only on quite restricted fragments of
\FOLTL~\cite{RodriguezS23}. The construction of a \emph{deterministic}
first-order automaton equivalent to any \FOpLTL formula is our first step in
designing a practical, yet general, approach for the reactive synthesis problem
from first-order temporal specifications.

\medskip 

\textbf{Monitoring.} \emph{Monitoring} is a technique for runtime
verification~\cite{leucker2009brief} that, analysing only the current (finite)
execution of a \emph{system under scrutinity} (SUS), reports whether a violation
has been detected with respect to a property expressed by a temporal formula.
The key feature of monitoring is that any detected violation must be
\emph{irremediable}: a verdict of a monitor must be irrevocable. In this
respect, not all temporal formulas are monitorable. Typically, in monitoring
applications, only those properties expressing the fact that \emph{``something
bad never happens''} (also known as safety properties) can be used.

The standard approach for solving monitoring problems consists in:
\begin{enumerate*}[label=(\roman*)]
  \item building the \emph{deterministic} automaton equivalent to the
    starting temporal formula;
  \item checking whether the automaton, reading the word representing the
    current execution of the SUS, reaches a final state from which only
    final states are reachable; in that case, the monitor reports
    a (irremediable) violation.
\end{enumerate*}
Determinism here is crucial in two respects. On the one hand, when reading the
current execution of the SUS and reaching a final state, we must be sure that
the violation is irremediable (\ie only final states are reachable from that
point on). On the other hand, the code implementing the monitor, that in
practice will run in parallel with the SUS, clearly has to be deterministic.
Similarly to the case of reactive synthesis, also in monitoring the generation
of deterministic automata from temporal specifications is problematic. Moreover,
so far the comunity has focused to specific temporal logics like, for instance,
\LTLP, Signal Temporal Logic~\cite{formats/MalerN04}, or hybrid
approaches~\cite{nenzi2015qualitative}.

We believe that the use of \FOLTL is a promising direction for monitoring of
properties in infinite-state setting, the construction of deterministic automata
from the pure-past fragment being a key component of such an approach.


\section{Checking Emptiness}
\label{sec:emptiness}

Given \cref{thm:encoding}, we know that the emptiness problem for first-order
automata is highly undecidable, since so is \FOLTL satisfiability, in
general~\cite{GabbayKWZ03}. However, we can show how to at least recover
semi-decidability in some cases, setting the stage for future algorithmic
developments. To be more specific, what can be semi-decided, in some cases, is
the \emph{non-emptiness} problem of first-order automata, which translates to
the \emph{satisfiability} of a certain class of \FOLTL formulas.

For $k\ge0$ and any sentence $\psi$ over a signature $\Sigma$ and a
$\Sigma$-theory $\theory$, we denote as $\psi^k$, $\Sigma^k$, $\theory^k$,
respectively, the objects obtained by \emph{renaming} any \emph{non-rigid}
symbol $s$ by $s^k$, and any \emph{primed non-rigid} symbol $s'$ by $s^{k+1}$.
Now, given an automaton $\autom=\seq{\Sigma,\Gamma,\phi_0,\phi_T,\phi_F}$, we
define, for some $k\ge0$, the formulas $\sem{\autom}_k \equiv \phi^0_0 \land
\bigwedge_{i=0}^{k-1} \phi_T^i$ and $\sem{\autom}^F_k \equiv \sem{\autom}_k
\land \phi^k_F$. Intuitively, $\sem{\autom}^F_k$ represents the \emph{accepted}
runs of $\autom$ of length at most $k+1$. 

We can use $\sem{\autom}^F_k$ to define an iterative procedure \ala Bounded
Model Checking~\cite{biere2009bounded}, shown in \cref{algo:semidecision}, where
at each $k>0$ we test if $\sem{\autom}^F_k$ is satisfiable and we increment $k$ if
it is not. Note that $\sem{\autom}^F_k$ is a formula over $\Sigma^{0\ldots
k-1}=\bigcup_{i=0}^{k-1}\Sigma^i$ and $\Gamma^{0\ldots
k}=\bigcup_{i=0}^k\Gamma^i$, to be interpreted over $\theory^{0\ldots
k}=\bigcup_{i=0}^k\theory^k$. 

One can prove the following.
\begin{proposition}
  \label{prop:unraveling}
  $\sem{\autom}^F_k$ is $\theory^{0\ldots k}$-satisfiable iff
  $\lang_\theory(\autom) \neq \emptyset$.
\end{proposition}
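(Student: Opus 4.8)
The plan is to establish the biconditional by directly connecting finite runs of $\autom$ to models of the unrolled formula $\sem{\autom}^F_k$, exploiting the fact that the renaming superscripts serve precisely to keep the copies of the non-rigid signature at different time points disjoint, so that a single first-order structure over $\Sigma^{0\ldots k-1}\cup\Gamma^{0\ldots k}$ can encode an entire run of length $k+1$ together with the word it reads. First I would fix $\theory$ and note that, since $\autom$ accepts some word iff it accepts a word of \emph{some} finite length $k+1$, it suffices to show for each $k\ge 0$ that $\sem{\autom}^F_k$ is $\theory^{0\ldots k}$-satisfiable iff $\autom$ $\theory$-accepts some word of length exactly (at most) $k$; the overall statement then follows by taking the disjunction over all $k$ on one side and recalling that $\lang_\theory(\autom)\neq\emptyset$ means \emph{some} such $k$ exists.

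For the forward direction, suppose $M\models_{\theory^{0\ldots k}}\sem{\autom}^F_k$. From $M$ I would read off the word $\model=\seq{\letter_0,\ldots,\letter_{k-1}}$ by letting $\letter_i$ be the $\Sigma$-structure obtained by taking the $\Sigma^i$-reduct of $M$ and stripping the superscript $i$ (rigid symbols are shared across all $\letter_i$ by construction, and the common sort domains come from $M$), and similarly the run $\bar\rho=\seq{\rho_0,\ldots,\rho_k}$ from the $\Gamma^i$-reducts. The conjunct $\phi^0_0$ of $\sem{\autom}^F_k$ guarantees $\rho_0\models\phi_0$; each conjunct $\phi_T^i$ guarantees $\rho_i\cup\letter_i\cup\rho_{i+1}'\models\phi_T$ after undoing the renaming (here one checks carefully that renaming $s'$ to $s^{k+1}$ in $\phi_T^k$, i.e. to the superscript of the \emph{next} block, is exactly what makes consecutive transition instances share their common state block); and $\phi^k_F$ guarantees $\rho_k\models\phi_F$. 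Hence $\bar\rho$ is an accepting $\theory$-run and $\model\in\lang_\theory(\autom)$.

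For the converse, given an accepting $\theory$-run $\bar\rho=\seq{\rho_0,\ldots,\rho_n}$ induced by some $\model=\seq{\letter_0,\ldots,\letter_{n-1}}$, I would set $k=n$ and build the amalgamated structure $M$ over $\Sigma^{0\ldots k-1}\cup\Gamma^{0\ldots k}$ by interpreting each superscripted copy $s^i$ of a non-rigid symbol according to $\letter_i$ or $\rho_i$ as appropriate, each rigid symbol according to its common interpretation, and the (single, shared) sort domains according to $\model$; this is well-defined precisely because the signatures $\Sigma^i$ and $\Gamma^i$ are pairwise disjoint for distinct $i$. Then $M\models\phi^0_0$, $M\models\phi_T^i$ for $0\le i<k$, and $M\models\phi^k_F$ by construction, and $M$ is a model of $\theory^{0\ldots k}$ since each block is, so $\sem{\autom}^F_k$ is $\theory^{0\ldots k}$-satisfiable.

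The main obstacle, and the only place that needs genuine care rather than bookkeeping, is the handling of the \textbf{rigid symbols and shared sort domains}: \cref{def:first-order:word} and \cref{def:run} force the rigid part and all sort interpretations to be literally identical across all letters and run-states, whereas a naive "disjoint amalgamation" of independently chosen structures would not satisfy this. One must verify that the renaming convention leaves rigid symbols \emph{un}-superscripted (so they are automatically shared in $M$ and, conversely, forced to be shared in $\model$ and $\bar\rho$ when decoding), and that the constant-domain assumption is compatible in both directions — in particular that when decoding $M$ the sorts of the reducts $\letter_i$ all coincide, as \cref{def:run} demands. Once this is pinned down the two directions are routine inductions on $k$ following the definitions.
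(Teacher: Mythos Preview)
The paper does not actually prove this proposition --- it only writes ``One can prove the following'' and moves directly on to \cref{algo:semidecision} --- so there is nothing to compare your argument against. Your proposal is correct and is precisely the intended (and standard) bounded-model-checking unrolling argument; the point you single out about rigid symbols being left un-superscripted, so that sort domains and rigid interpretations are automatically shared in both directions of the encoding/decoding, is indeed the only place that requires any care.

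One small remark worth making explicit: as literally stated (for a \emph{fixed} $k$), the right-to-left direction of the proposition is false --- if the shortest accepted word has length greater than $k$, then $\sem{\autom}^F_k$ is unsatisfiable while $\lang_\theory(\autom)\neq\emptyset$. You handle this correctly by reading $k$ as implicitly existentially quantified and reducing to the per-$k$ equivalence ``$\sem{\autom}^F_k$ is satisfiable iff $\autom$ accepts a word of length exactly $k$'', which is both what you prove and exactly how the paper uses the proposition to justify that \cref{algo:semidecision} is a semi-decision procedure.
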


\begin{algorithm}[t]
  \caption{Semi-decision procedure for first-order automata non-emptiness}
  \label{algo:semidecision}
  \begin{algorithmic}[1]
    \Procedure{NonEmpty}{$\autom$}
      \For{$k\gets 0\ldots+\infty$}
        \If{$\sem{\autom}^F_k$ is $\theory^{0\ldots k}$-satisfiable}
          \State \Return \emph{not empty}
        \EndIf
      \EndFor
    \EndProcedure
  \end{algorithmic}
\end{algorithm}
Therefore, we need to test $\theory^{0\ldots k}$-satisfiability of
$\sem{\autom}^F_k$, and we need this test to be decidable. This is, of course, not
always possible, but it can be depending on the shape of $\autom$ and the
considered theory. More precisely, we have the following.
\begin{theorem}[Semi-decidability of automata non-emptiness]
  \label{thm:semidecision}
  \Cref{algo:semidecision} is a semi-decision procedure for first-order automata
  non-emptiness if satisfiability of $\phi_0$ and of $\phi_F$ and
  $\theory$-satisfiability of $\phi_T$ are decidable.
\end{theorem}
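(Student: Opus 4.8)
The plan is to show that under the stated hypotheses, \cref{algo:semidecision} is guaranteed to halt on exactly those automata whose language is non-empty, which is precisely what it means to be a semi-decision procedure for non-emptiness. First I would invoke \cref{prop:unraveling}, which already tells us that $\lang_\theory(\autom)\neq\emptyset$ if and only if $\sem{\autom}^F_k$ is $\theory^{0\ldots k}$-satisfiable for some $k\ge 0$. So the entire task reduces to two points: (i) that the test performed at each iteration of the loop, namely ``$\sem{\autom}^F_k$ is $\theory^{0\ldots k}$-satisfiable?'', is effectively decidable; and (ii) that if the language is non-empty the loop actually reaches a witnessing $k$ and returns, while if the language is empty it simply runs forever without ever returning a wrong answer (soundness).

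For point (i), I would unfold the definition $\sem{\autom}^F_k \equiv \phi^0_0 \land \bigwedge_{i=0}^{k-1}\phi_T^i \land \phi^k_F$ and argue that deciding its $\theory^{0\ldots k}$-satisfiability reduces to finitely many applications of the decision procedures assumed in the hypothesis. The renaming operation $(\cdot)^i$ is purely syntactic and does not affect decidability: $\phi^0_0$ is satisfiable iff $\phi_0$ is (decidable by assumption), $\phi^k_F$ iff $\phi_F$ is, and each conjunct $\phi_T^i$ is a renamed copy of $\phi_T$, whose $\theory$-satisfiability is decidable. The one subtlety here is that $\sem{\autom}^F_k$ is a conjunction of these pieces sharing variables across adjacent ``time slices'' ($\Sigma^i,\Gamma^i$), so it is not literally a single instance of any one of the three decidable problems — rather it is a formula in the combined signature $\Sigma^{0\ldots k-1}\cup\Gamma^{0\ldots k}$ over the combined theory $\theory^{0\ldots k}$. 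I would therefore note that the decidability we need is that of this combined formula, and argue (as the authors presumably intend) that the natural reading of the hypothesis is that the relevant class of formulas — finite conjunctions of renamed copies of $\phi_0$, $\phi_T$, $\phi_F$ over the disjoint-union theory — is decidable; this is automatic when, e.g., the underlying theory admits a decidable quantifier-free or full first-order fragment closed under such conjunctions and renamings, which is the standard Bounded-Model-Checking setting. This is the step I expect to be the main obstacle, or at least the one requiring the most care: pinning down exactly what ``$\theory$-satisfiability of $\phi_T$ is decidable'' must mean for the combined formulas $\sem{\autom}^F_k$ to inherit decidability, and making sure the statement of the theorem is not stronger than what the hypotheses literally provide.

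For point (ii), soundness is immediate from \cref{prop:unraveling}: the procedure returns \emph{not empty} only when some $\sem{\autom}^F_k$ is found satisfiable, and by the proposition this implies $\lang_\theory(\autom)\neq\emptyset$, so it never lies. For completeness on non-empty inputs, suppose $\lang_\theory(\autom)\neq\emptyset$. By \cref{prop:unraveling} there is a $k_0$ with $\sem{\autom}^F_{k_0}$ satisfiable; since the loop tests $k=0,1,2,\ldots$ in order and each test terminates by point (i), it will reach $k_0$ (or an earlier witness) after finitely many steps and return. Hence the procedure halts and answers correctly on every automaton with non-empty language, and never halts incorrectly — exactly the definition of a semi-decision procedure for non-emptiness. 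I would close by remarking that this is genuinely only semi-decidability and not decidability: on automata with empty language the loop diverges, which is unavoidable given that emptiness is not even recursively enumerable in general, as noted before \cref{sec:emptiness}.
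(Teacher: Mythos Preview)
Your proposal is correct and follows essentially the same approach as the paper: invoke \cref{prop:unraveling} for the equivalence, then argue that each per-iteration test is decidable. The paper's own proof is a single sentence --- it simply observes that $\theory^0,\ldots,\theory^k$ are \emph{disjoint} theories (the renaming makes the non-rigid parts of the signatures disjoint), so a decision procedure for $\theory$ lifts ``trivially'' to one for $\theory^{0\ldots k}$. That is exactly the point you circle around in part~(i) and flag as the main obstacle; the paper does not engage with the subtlety you raise about the shared $\Gamma^{i}$ symbols between adjacent conjuncts, treating the lift as immediate. So rather than hedging about what the hypothesis ``must mean'', you can match the paper by stating the disjointness observation directly and moving on --- your soundness/completeness argument in part~(ii) is already more explicit than anything the paper provides.
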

\begin{proof}
  Observing that $\theory^0,\ldots,\theory^k$ are disjoint theories and it is
  trivial to have a decision procedure for $\theory^{0\ldots k}$ from the one
  for $\theory$.
\end{proof}

\Cref{thm:semidecision} might seem a rather weak result in practice, but it is
worth to note that it is in fact a more general statement of what is already
implemented in the BLACK temporal reasoning
framework~\cite{GeattiGM19,GeattiGMV24,GeattiGM21} when dealing with \LTLf
\emph{modulo theories} (\LTLfMT), a restricted fragment of \FOLTL with rigid
predicates and non-rigid constants~\cite{GeattiGG22,GeattiGGW23}. By leveraging
modern SMT solvers~\cite{BarrettSST21}, BLACK manages to get promising
performance on satisfiable instances.

\section{Decidable Cases}
\label{sec:decidable}

First-order automata are a really general formalism and this generality is
paid with the undecidability of their emptiness. In this section, we show
some cases when decidability can be recovered by restricting what can
appear in the \emph{state signature} of automata, and relate these results
to some models from the literature that can be captured by first-order
automata at different levels of restriction.

\begin{definition}[State signature restrictions]
  \label{def:control}
  An automaton $\autom$ with state signature $\Gamma$ is said to be:
  \begin{itemize}
    \item \emph{finite-control}, if $\Gamma$ contains only propositions;
    \item \emph{data-control}, if $\Gamma$ contains only propositions or 
      constants;
    \item \emph{monadic}, if $\Gamma$ is a relational signature with only
      \emph{monadic} (\ie unary) predicates.
  \end{itemize}
\end{definition}

\noindent
\textbf{Finite-control automata.}
Finite-control automata are basically \emph{finite-state} machines, but can read
first-order alphabets. The finite number of different control states makes it
easier to handle the emptiness problem. 
\begin{theorem}[Emptiness of finite-control automata]
  \label{thm:decision:finite-control}
  Under the same conditions of \cref{thm:semidecision}, emptiness of
  finite-control automata is decidable.
\end{theorem}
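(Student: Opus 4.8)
The plan is to show that for a finite-control automaton, the infinite state space collapses, up to what matters for acceptance, to a finite set, so that the Bounded Model Checking loop of \cref{algo:semidecision} is guaranteed to terminate. First I would observe that when $\Gamma$ contains only propositions, a $\Gamma$-structure is entirely determined by a truth assignment to finitely many propositions, hence there are only finitely many states — say $N = 2^{|\Gamma|}$ of them — and "elementary equivalence" of states degenerates to equality. The key consequence is that the reachability graph underlying $\autom$, although its edges are witnessed by arbitrary $\Sigma$-structures, has at most $N$ vertices.

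The core step is a pumping/bounding argument on run length. Suppose $\lang_\theory(\autom)\neq\emptyset$; then by \cref{prop:unraveling} there is an accepted run $\bar\rho = \seq{\rho_0,\ldots,\rho_n}$ and a witnessing $\theory$-word. If $n \ge N$, two states $\rho_i$ and $\rho_j$ with $i<j$ must coincide as propositional assignments. I would then argue that the factor between positions $i$ and $j$ can be excised: because each transition $\rho_\ell \cup \letter_\ell \cup \rho_{\ell+1}' \models \phi_T$ depends only on the (finitely many) propositional values at $\ell$ and $\ell+1$ together with the letter $\letter_\ell$, splicing $\rho_0,\ldots,\rho_i,\rho_{j+1},\ldots,\rho_n$ together with the corresponding letters yields a shorter accepted run on a shorter word. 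Hence if the language is non-empty it contains a word of length at most $N$, equivalently $\sem{\autom}^F_k$ is $\theory^{0\ldots k}$-satisfiable for some $k < N$.

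It follows that \cref{algo:semidecision}, run for $k = 0,\ldots,N-1$ and stopped with answer \emph{empty} if no satisfiable $\sem{\autom}^F_k$ is found, is a decision procedure: under the hypotheses inherited from \cref{thm:semidecision} each individual satisfiability test for $\sem{\autom}^F_k$ is decidable (these are finite conjunctions of renamed copies of $\phi_0$, $\phi_T$, $\phi_F$ over the disjoint theories $\theory^0,\ldots,\theory^k$), and we have just bounded the number of iterations needed. Termination on \emph{non-empty} instances was already guaranteed; what the finite-control restriction buys is termination on \emph{empty} instances.

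The main obstacle I anticipate is making the splicing argument fully rigorous in the constant-domain, modulo-$\theory$ setting: the excised run must be realizable by an honest $\theory$-word, so I must check that the letters $\letter_0,\ldots,\letter_{i-1},\letter_{j},\ldots,\letter_{n-1}$ retained from the original word still form a valid $\theory$-word — in particular that the rigid symbols and sort domains, which are shared across all letters anyway, are untouched, and that each retained transition instance $\rho_\ell\cup\letter_\ell\cup\rho_{\ell+1}'\models\phi_T$ still holds after the indices are renumbered. Since $\phi_T$ is a $(\Gamma\cup\Sigma\cup\Gamma')$-sentence and we have kept each surviving $(\text{state},\text{letter},\text{next state})$ triple intact (only the single glued triple at the cut point is new, and there $\rho_i=\rho_j$ makes it a triple that already occurred), this goes through, but it is the point that needs care.
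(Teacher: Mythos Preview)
Your proposal is correct and shares the paper's core insight: because $\Gamma$ contains only propositions, the state space is finite (of size $N=2^{|\Gamma|}$), so a pigeonhole-and-pumping argument bounds the length of a shortest accepting run and turns the semi-decision loop of \cref{algo:semidecision} into a decision procedure. The one difference is in how termination on \emph{empty} instances is detected. You fix the bound $N$ a priori and simply stop after $N$ iterations; the paper instead augments each iteration with a dynamic loop-detection test, checking validity of $\sem{\autom}_k \to \phi_P^k$, where $\phi_P^k$ asserts that two of the states $\rho_0,\ldots,\rho_k$ coincide propositionally, and declaring emptiness as soon as this implication is valid. The paper's test can terminate much earlier when the \emph{reachable} state space is far smaller than $2^{|\Gamma|}$, at the cost of an extra validity query per iteration; your explicit bound is simpler and needs no additional solver calls. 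Both rest on exactly the pumping step you spelled out, and your care that the spliced word remains a valid $\theory$-word---rigid symbols and sort domains are inherited from the original word, and the glued transition triple $(\rho_i,\letter_j,\rho_{j+1})$ coincides with the already-witnessed triple $(\rho_j,\letter_j,\rho_{j+1})$---is precisely the point that needs checking.
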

\begin{proof}[Sketch]
  %
  %
  Since the number of states of a finite-control automaton is finite, every
  run either ends in a state with no successors or contains two repetitions
  of the same state.
  Consider the formula $\phi^k_P\equiv
  \bigvee_{i=0}^{k}\bigvee_{j=i+1}^{k-1}\bigwedge_{p\in\Gamma}(p^i \iff
  p^j)$.  Intuitively, $\phi^k_P$ states the existence of a loop through
  two identical states.  \Cref{algo:semidecision} can be turned into
  a decision procedure by adding the following test at each iteration: if
  $\sem{\autom}_k\to\phi_P^k$ is \emph{valid} (\ie every run of $\autom$ of
  length $k$ contains at least two repetitions of the same state), we reply
  that $\lang_\theory(\autom)$ is \emph{empty}.
\end{proof}

Finite-control automata may seem limited but they are still quite expressive,
given the ability of treating infinite alphabets. Indeed, much literature is
devoted to \emph{symbolic finite automata}
(s-FA)~\cite{VeanesHLMB12,DAntoniV14,DAntoniKW18,DAntoniV17,TammV18}, which are
finite automata able to read large or infinite alphabets defined by effective
Boolean algebras. The state space of s-FAs is finite and emptiness is decidable
for them as well. Indeed, finite-control first-order automata can easily encode
s-FAs (see \cref{app:encodings}). From a logical perspective, following
\cref{enc:encoding}, we can see that finite-control automata can encode \FOLTL
formulas where temporal operators are only applied to \emph{sentences} (\ie
closed formulas).

\smallskip

\noindent
\textbf{Data-control automata.}
In \emph{data-control} automata the state space returns to be infinite, but
loosely structured. Intuitively, the constants in $\Gamma$ act as storage
registers for single pieces of data coming from the respective domains. It turns
out this is sufficient to capture the \LTLfMT
logic~\cite{GeattiGG22,GeattiGGW23} mentioned in \cref{sec:emptiness}. This
logic restricts \FOLTL as follows: a) only constants can be non-rigid; b)
quantifiers cannot be applied to temporal formulas. At the same time, \LTLfMT
\emph{extends} \FOLTL with \emph{lookahead} term constructors $\nextvar c$ and
$\wnextvar c$, for some constant $c$, which evaluate to the value of $c$ at the
next time step (strongly or weakly, respectively, since we are on finite words).
For example, a predicate such as $p(\nextvar c)$ means that $p$ holds now on the
value that $c$ will have at the next state (which is required to exist). \FOLTL
can express lookaheads as follows:
\begin{equation*}
  p(\nextvar c) \equiv \exists x . (p(x) \land \ltl{X(c = x)})
\end{equation*}
Following \cref{enc:encoding}, the automaton encoding this formula would have a
predicate in $\Gamma$ acting as the surrogate for the formula $X(c = x)$.
However, the predicate would always hold only for a single value, the one equal
to $c$ at the next state. Therefore, one can see that these predicates can be
replaced by constants, and thus \LTLfMT formulas can be captured by data-control
automata. This is interesting since it opens the way to transfering decidability
results from some known fragments of \LTLfMT~\cite{GeattiGGW23} to suitable
subclasses of data-control automata. Such fragments are based on a notion of
\emph{finite summary}, that intuitively requires the history of the word to be
summarisable at any time in a finite number of ways. A similar criterion is used
in works from the \emph{process modeling}
community~\cite{FelliMW22,GianolaMW23arxiv,GianolaMW24} to obtain decidability
of the emptiness problem for a class of state machines called \emph{data-aware
processes modulo theories} (DMT), whose control states are described by the
values of a finite set of variables over the theory's domain. Data-control
automata can capture DMTs quite easily, with the constants in $\Gamma$ doing the
job of such state variables (see \cref{app:encodings} for details). Therefore,
we conjecture that a decidability criterion similar to DMT's finite summary can
be defined for data-control first-order automata as well.

\smallskip

\noindent
\textbf{Monadic automata.} From \cref{enc:encoding}, one can see that the whole
\FOLTL can be encoded using first-order automata whose state signature $\Gamma$
is purely relational. More precisely, the arity of the predicates in $\Gamma$
depends on the number of \emph{free variables} of temporal subformulas. If we
start from \emph{monodic} sentences, \ie those where temporal subformulas have
at most \emph{one} free variable, we obtain a state signature with only
\emph{monadic} predicates. This connection is interesting because of the
following classic result.
\begin{proposition}[Decidability of monodic \FOLTL~\protect\cite{GabbayKWZ03}]
  \label{prop:monodic}
  Let $\Sigma$ be a \emph{relational} signature with \emph{rigid
  constants}, $\fragment$ a class of first-order $\Sigma$-sentences
  \emph{without equality}, and $\theory$ a $\Sigma$-theory, such that
  $\theory$-satisfiability of $\fragment$-sentences is decidable. Then,
  $\theory$-satisfiability of \emph{monodic} \FOLTL $\fragment$-sentences
  is decidable.
\end{proposition}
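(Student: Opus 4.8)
The plan is to prove Proposition~\ref{prop:monodic} by reduction to the semi-decision procedure of \cref{algo:semidecision}, sharpened into a genuine decision procedure via a finite-state abstraction of the run space, exactly mirroring what \cref{thm:decision:finite-control} does for finite-control automata. First I would take a monodic $\fragment$-sentence $\phi$ and apply \cref{enc:encoding} to obtain the first-order automaton $\autom(\phi)=\seq{\Sigma,\Gamma,\phi_0,\phi_T,\phi_F}$; by the monodicity hypothesis every temporal subformula of $\phi$ has at most one free variable, so every surrogate predicate in $\Gamma=\XS\cup\wXS\cup\YS\cup\ZS$ is either $0$-ary or unary, i.e. $\autom(\phi)$ is a \emph{monadic} automaton in the sense of \cref{def:control}. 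By \cref{thm:encoding}, $\lang_\theory(\phi)=\lang_\theory(\autom(\phi))$, so $\phi$ is $\theory$-satisfiable iff $\lang_\theory(\autom(\phi))\neq\emptyset$, and by \cref{prop:unraveling} this holds iff some $\sem{\autom(\phi)}^F_k$ is $\theory^{0\ldots k}$-satisfiable.

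The key observation, which I would establish next, is that for a monadic automaton one can define a finite \emph{type abstraction} of states and of the word letters that is preserved by the transition relation, so that runs can be viewed as paths in a finite graph. Concretely, the state of $\autom(\phi)$ at time $i$ is determined, over the fixed domain $D$, by the truth values of the finitely many $0$-ary surrogates together with, for each element $a\in D$, the subset of unary surrogates that hold of $a$ — that is, the \emph{$1$-type} of $a$ with respect to $\Gamma$. Since $\Gamma$ is finite, there are only finitely many such $1$-types, and a state is characterized up to the relevant equivalence by the $0$-ary part plus which $1$-types are realized and (for the equality-free, rigid-constant setting of the statement) by which $1$-types the constants realize. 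Combining this with the analogous abstraction of $\Sigma$-letters — here I would invoke that $\theory$-satisfiability of $\fragment$-sentences is decidable to argue that the relevant "one-step type transitions" form a decidable and finite relation, using the fact that $\phi_T$ is, componentwise, a Boolean combination of $\fragment$-sentences over $\Sigma$, $\Gamma$, $\Gamma'$ whose monodic shape keeps the quantifier structure within $\fragment$ — I get that the set of abstract states reachable under $\phi_0,\phi_T$ is finite and effectively computable, and that non-emptiness reduces to reachability of an abstract state satisfying (the abstraction of) $\phi_F$. This yields termination: either a satisfying $\sem{\autom(\phi)}^F_k$ is found within a bounded $k$ (bounded by the number of abstract states), or one detects, exactly as in the proof sketch of \cref{thm:decision:finite-control} via a pumping/loop argument on abstract states, that no accepting run can exist, and reports emptiness.

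The main obstacle I anticipate is making the "type transition relation is decidable and finite" step fully rigorous, and in particular checking that the monodicity restriction is exactly what is needed for the one-step consistency conditions to stay inside the decidable fragment $\fragment$. The subtlety is that $\phi_T$ in \cref{enc:encoding} contains the biconditionals $\forall\bar x\,[s_\psi(\bar x)\iff\snf_S'(\psi(\bar x))]$, and $\snf_S'(\psi)$ can itself contain quantifiers ranging over the domain; for the abstraction to be a true \emph{bisimulation} on $1$-types one must verify that the truth of these biconditionals, and of $\phi_0$ and $\phi_F$, depends only on the $1$-type information and not on finer structure of the infinite domain. This is where the hypotheses "$\Sigma$ relational", "rigid constants", and "without equality" do real work — they prevent the automaton from distinguishing elements of the same $1$-type — and I expect the care to lie in a Ehrenfeucht–Fraïssé / quantifier-elimination-style lemma showing that two $\Gamma\cup\Sigma$-structures with the same realized $1$-types (and same constant types) agree on all the sentences occurring in $\phi_0$, $\phi_T$, $\phi_F$ modulo $\theory$. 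Once that lemma is in place, the reduction to a finite reachability problem, decidable by the assumed oracle for $\fragment$-satisfiability, is routine, and termination of the adapted \cref{algo:semidecision} follows.
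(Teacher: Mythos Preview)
Your proposal is correct and follows essentially the same route as the paper: the paper encodes the monodic sentence into a monadic automaton via \cref{enc:encoding}, then packages your ``type abstraction'' as an explicit monadic-to-finite-control translation (\cref{enc:monadic:encoding}, proved correct in \cref{thm:encoding:monadic}, with your anticipated EF-style lemma appearing as \cref{lemma:gammastate}), and finally invokes \cref{thm:decision:finite-control}. One minor simplification relative to your sketch: there is no need to abstract the $\Sigma$-letters---the paper keeps $\Sigma$ intact and only abstracts the $\Gamma$-state, since finiteness of the abstract state space comes from the $\Gamma$-types alone and the one-step transition check then reduces directly to $\theory$-satisfiability of a $\Sigma$-sentence in $\fragment$.
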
 

One may wonder whether \cref{prop:monodic} can be showed via monadic automata,
under the same assumptions. It turns out it can and, more precisely, we will
show that any monadic automaton can be translated into an equivalent
finite-control one, using \cref{thm:decision:finite-control} to get
decidability.

To show it, we need some preliminary work. Let
$\autom=\seq{\Sigma,\Gamma,\phi_0,\phi_T,\phi_F}$ be a monadic automaton. Assume
\emph{w.l.o.g.}\ that $\phi_0$, $\phi_T$ and $\phi_F$ are written in
\emph{negated normal form}. For conciseness, we consider here the case where
$\Sigma$ is \emph{monosorted} and has no constants at all, but what follows can
be extended to the case of multiple sorts and \emph{rigid} constants.

A \emph{type} is a subset $t\subseteq\Gamma$ of the monadic predicates of the
state signature. Intuitively, a type represents all the elements in a
$\Gamma$-structure that satisfy all and only the type's predicates. An
\emph{abstract state} is a set $s\subseteq 2^\Gamma$ of types. Note that the
number of possible abstract states is finite. Intuitively, an abstract state
abstracts finitarily a $\Gamma$-structure by representing the existence or
absence of elements satisfying a given collection of types. This notion is
called \emph{state candidate} in \cite{GabbayKWZ03}, and can be seen as a
weakening of Robinson diagrams~\cite{ChangKeisler}. It is useful to define a
sentence $\gamma_s$ that captures the above intuition about an abstract state
$s$, that is, $\gamma_s\equiv \bigwedge_{t\in s} \exists
x.\gamma_t(x)\land\bigwedge_{t\not\in s}\neg\exists x.\gamma_t(x)$ where
$\gamma_t(x)\equiv \bigwedge_{p \in t} p(x)\land \bigwedge_{p \not\in t} \neg
p(x)$.

The core idea of the construction below is the following. Suppose, for example,
that the current state $\rho$ is summarised by an abstract state $s=\set{t_1,
t_2}$ where $t_1=\set{p(x), q(x)}$ and $t_2=\set{p(x)}$. This, intuitively,
means that in the state there exists at least an element satisfying both $p(x)$
and $q(x)$, but all the elements satisfy at least $p(x)$. Then, under this
assumption, if we encounter a sentence such as $\exists x.\neg p(x)$, we know
that it will not hold over $\rho$. Similarly, if we encounter $\forall x. q(x)$,
we can know it will not hold over $\rho$, because the elements of type $t_2$ do
\emph{not} satisfy $q(x)$. A further detail to consider is that, to reason
about $\phi_T$ as above, we need to abstract both the source and the
destination state.


The above intuition is realised as follows. An \emph{assumption} is a function
$h$ from any first-order variable $x$ to a pair of types $h(x)=(t_1,t_2)$,
describing $x$ now and at the next state. Given two states $s_1$ and $s_2$, an
assumption $h$, and a $(\Gamma\cup\Sigma\cup\Gamma')$-formula $\psi$, we define
the formula $\psi\vert^h_{s_1\to s_2}$ inductively as follows: 
\begin{enumerate}
  \item if $p\in\Gamma$, then $p(x)\vert^h_{s_1\to s_2}\equiv \top$ if $p\in
    t_1$, or $\bot$ otherwise, where $h(x)=(t_1,t_2)$; for $p'\in\Gamma'$, we
    define $p'(x)\vert^h_{s_1\to s_2}$ similarly but using $t_2$;
  \item if $q\in\Sigma$, then 
    $q(x_1,\ldots,x_k)\vert^h_{s_1\to s_2}\equiv q(x_1,\ldots,x_k)$;
  \item 
    $(\psi_1{\circ}\psi_2)\vert^h_{s_1\to s_2}{\equiv} (\psi_1\vert^h_{s_1\to
    s_2}){\circ}(\psi_2\vert^h_{s_1\to s_2})$ for $\circ\in\set{\land,\lor}$;
  \item $(Q x.\psi_1)\vert^h_{s_1 \to s_2}\equiv Q x.\bigvee_{t_1\in s_1}\bigvee_{t_2\in s_2}\psi_1\vert^{h[x\mapsto (t_1,t_2)]}_{s_1\to s_2}$ for $Q\in\set{\forall,\exists}$.
\end{enumerate}

If $\psi$ is a sentence, by $\psi\vert_{s_1\to s_2}$ we denote
$\psi\vert^h_{s_1\to s_2}$ for some arbitrary assumption $h$. We can now
proceed.
\begin{encoding}[Monadic to finite-control automaton]
  \label{enc:monadic:encoding}
  Given a monadic automaton $\autom=\seq{\Sigma,\Gamma,\phi_0,\phi_T,\phi_F}$,
  we define $\autom^*=\seq{\Sigma,\Gamma_*,\phi^*_0,\phi^*_T,\phi^*_F}$ to be
  the following automaton:
  \begin{enumerate}
    \item the state signature $\Gamma_*$ contains a \emph{proposition} $b_t$ for
      each type $t\subseteq\Gamma$;
    \item the initial condition $\phi^*_0$ is defined as the disjunction of
      $b_s\equiv \bigwedge_{t\in s}b_t\land\bigwedge_{t\not\in s}\neg b_t$ for
      each abstract state $s$ such that $\gamma_s\models\phi_0$;
    \item the final condition $\phi^*_F$ is defined in the same way as
      $\phi^*_0$, but on top of $\phi_F$;
    \item the transition relation $\phi^*_T$ is defined as:
    \begin{equation*}
      \phi^*_T\equiv 
        \smashoperator{\bigwedge_{\substack{s_1,s_2\\\text{abstract states}}}}
          \left[(b_{s_1} \land b'_{s_2})\implies \psi_T\vert_{s_1\to s_2}\right]
    \end{equation*}
  \end{enumerate}
\end{encoding}

Note that $\autom^*$ is finite-control. We prove the following.
\begin{restatable}[Correctness of \cref{enc:monadic:encoding}]{theorem}{monadicencoding}
  \label{thm:encoding:monadic}
  Consider a \emph{monadic} automaton
  $\autom=\seq{\Sigma,\Gamma,\phi_0,\phi_T,\phi_F}$ such that:
  \begin{enumerate}
  \item $\Sigma$ is a monosorted \emph{relational} signature with no constants;
  \item $\phi_0$, $\phi_T$, and $\phi_F$ do not use equality.
  \end{enumerate}
  Then, the automaton $\autom^*$ of \cref{enc:monadic:encoding} is equivalent to
  $\autom$.
\end{restatable}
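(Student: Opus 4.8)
The plan is to prove $\lang_\theory(\autom^*)=\lang_\theory(\autom)$ for every $\Sigma$-theory $\theory$ by setting up a run-to-run correspondence mediated by the \emph{abstraction} map $\rho\mapsto\mathrm{abs}(\rho)$ sending a $\Gamma$-structure $\rho$ over a domain $D$ to the abstract state $\mathrm{abs}(\rho)=\set{\tau_\rho(a)\mid a\in D}$ that collects the realised \emph{types} $\tau_\rho(a)=\set{p\in\Gamma\mid\rho\models p(a)}$. First I would record the elementary bookkeeping: $\rho\models\gamma_s$ iff $\mathrm{abs}(\rho)=s$; every $\Gamma_*$-structure satisfies exactly one of the mutually exclusive, jointly exhaustive sentences $b_s$; and, over a fixed domain, the $\Gamma_*$-structures satisfying $b_s$ are in bijection with the $\Gamma$-structures $\rho$ such that $\mathrm{abs}(\rho)=s$.

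The key classical ingredient is that \emph{monadic first-order logic without equality} over a purely relational signature with no constants enjoys the property that every sentence $\psi$ is equivalent to a Boolean combination of sentences of the form $\exists x.\gamma_t(x)$; equivalently, for every abstract state $s$, either $\gamma_s\models\psi$ or $\gamma_s\models\neg\psi$. Since $\phi_0$ and $\phi_F$ are (by hypothesis) monadic, equality-free $\Gamma$-sentences, this immediately gives that the disjunctions defining $\phi^*_0$ and $\phi^*_F$ capture exactly $\set{s\mid\gamma_s\models\phi_0}$ and $\set{s\mid\gamma_s\models\phi_F}$, so $\rho\models\phi_0$ iff (the $\Gamma_*$-encoding of) $\mathrm{abs}(\rho)$ satisfies $\phi^*_0$, and likewise for $\phi_F$.

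The heart of the argument is an \emph{abstraction lemma} for the transition relation, proved by induction on the structure of the equality-free, negated-normal-form formula $\psi$: for all $\Gamma$-structures $\rho_1,\rho_2$ and $\Sigma$-structure $\letter$ over a common domain, every environment $\xi$, and the assumption $h$ determined by $h(y)=(\tau_{\rho_1}(\xi(y)),\tau_{\rho_2}(\xi(y)))$ on the free variables of $\psi$, one has $\rho_1\cup\letter\cup\rho_2',\xi\models\psi$ iff $\letter,\xi\models\psi\vert^h_{s_1\to s_2}$ with $s_i=\mathrm{abs}(\rho_i)$. The literal cases are immediate from the definitions of $\tau$ and of $\psi\vert^h_{s_1\to s_2}$ (the $\Sigma$-atoms being left untouched), and the Boolean cases are routine. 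The quantifier case is the one that exploits the monadic shape of $\Gamma$: a witness $a$ for $\exists x.\psi_1$ carries a type pair $(\tau_{\rho_1}(a),\tau_{\rho_2}(a))\in s_1\times s_2$ to which one applies the induction hypothesis, and dually for $\forall$, so that quantification over the common domain is tracked by quantifying an element together with a disjunction over its possible type pairs in $s_1\times s_2$. Using the lemma on $\phi_T$ (written in negated normal form), and noting that in a structure of the form $\rho^*_1\cup\letter\cup\rho^{*\prime}_2$ exactly one conjunct $b_{s_1}\wedge b'_{s_2}\to\phi_T\vert_{s_1\to s_2}$ of $\phi^*_T$ is non-vacuous — the one with $s_i=\mathrm{abs}(\rho_i)$ — shows that $\rho_1\cup\letter\cup\rho_2'\models\phi_T$ precisely when the $\Gamma_*$-encodings of $\mathrm{abs}(\rho_1)$ and $\mathrm{abs}(\rho_2)$ form a $\phi^*_T$-transition over $\letter$.

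Chaining these facts along an arbitrary word $\model$, together with the treatment of $\phi_0$ and $\phi_F$, yields that $\seq{\rho_0,\ldots,\rho_n}$ is an accepting run of $\autom$ on $\model$ if and only if $\seq{\mathrm{abs}(\rho_0),\ldots,\mathrm{abs}(\rho_n)}$ is an accepting run of $\autom^*$ on $\model$, whence the two languages coincide. I expect the main obstacle to lie in the direction that turns an accepting run of $\autom^*$ into one of $\autom$: from abstract states $s_i$ and letters $\letter_i$ with $\letter_i\models\phi_T\vert_{s_i\to s_{i+1}}$ one must \emph{reconstruct} genuine $\Gamma$-structures $\rho_i$ over the (common, fixed) domain with $\mathrm{abs}(\rho_i)=s_i$ that simultaneously witness every $\phi_T$-transition — which requires making the type assignments chosen "locally" by the abstraction lemma at the various (possibly deeply nested) quantifier occurrences of $\phi_T$ cohere into a single pair $(\rho_i,\rho_{i+1})$ for each $i$ — and in separately checking the corner cases of small domains over which an abstract state need not be realisable. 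Finally, the extension to multisorted signatures with rigid constants promised in the preamble is routine bookkeeping: types become sort-indexed and are further required to fix the interpretations of the (rigid) constants.
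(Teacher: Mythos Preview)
Your plan mirrors the paper's closely: the same abstraction map (the paper calls the abstract state a ``state candidate''), the same appeal to elementary equivalence of monadic equality-free structures for $\phi_0$ and $\phi_F$ (this is exactly the auxiliary lemma the paper proves just before the theorem), and a structural induction over $\phi_T$ tracking the recursive definition of $\psi\vert^h_{s_1\to s_2}$. The paper splits the two inclusions rather than packaging them as a single biconditional, but the skeleton is the same.

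However, your biconditional abstraction lemma is \emph{not true} in the right-to-left direction with $\rho_1,\rho_2$ fixed in advance. Take $\Gamma=\{p\}$, let $\rho_1=\rho_2$ over a two-element domain with $p$ realised at exactly one point (so $s_1=s_2=\{\emptyset,\{p\}\}$), and consider the sentence $\psi\equiv\exists x\,(p(x)\wedge\neg p'(x))$. The encoded formula $\psi\vert_{s_1\to s_2}$ contains the disjunct for the pair $(\{p\},\emptyset)\in s_1\times s_2$, which collapses to $\top$, so $\letter\models\psi\vert_{s_1\to s_2}$ for every $\letter$; yet $\rho_1\cup\letter\cup\rho_2'\not\models\psi$ because $p^{\rho_1}=p^{\rho_2}$. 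The underlying issue is that the disjunction in the encoding ranges over \emph{all} of $s_1\times s_2$, not only over the type pairs jointly realised by some single element, and it also decouples the hypothesised $\Gamma$-type of a bound variable from that variable's $\Sigma$-behaviour in the letter. So the inclusion $\lang_\theory(\autom^*)\subseteq\lang_\theory(\autom)$ cannot be read off a pointwise equivalence with the $\rho_i$ chosen beforehand; any correct argument must \emph{construct} the $\Gamma$-structures from the abstracted run so that the hypothesised types land on the right elements, coherently across all quantifier instances of $\phi_T$ and across consecutive steps. The paper's own proof of this direction likewise fixes arbitrary $\rho_i$ with $\rho_i\models\gamma_{s_i}$ and then lets the environment vary in its inductive claim, so you and the paper are at the same delicate spot; your instinct that this is ``the main obstacle'' is correct, but your lemma as written does not discharge it.
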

\begin{proof}
  See \cref{app:proofs}.
\end{proof}
The first restriction on $\Sigma$ can be lifted by suitably extending the notion
of abstract state to account for constants and multiple sorts. Then, note that
\cref{prop:monodic} follows directly from
\cref{thm:decision:finite-control,thm:encoding:monadic}.


\section{Discussion, Related Work,\\and Future Directions}
\label{sec:conclusions}

\*
  * In this paper blah blah blah
  * Of course this is not the first type of automaton to capture infinite state 
    behavior.
    - see list below
    - The closest to our setting are symbolic finite automata
      \cite{DAntoniKW18,DAntoniV14,DAntoniV17} which we mentioned already, 
    - also things mentioned by Cimatti \cite{CimattiGMRT22}
    - alternating first-order automata by Iosif
      \cite{IosifX19}
    - but each is specific to some problem/setting and it's not a *natural* 
      counterpart of FOLTL, especially because they try to have good computational properties, that FOLTL cannot have.
  * Here we provide a formalism that captures FOLTL naturally and that can be 
    useful in reasoning about FOLTL and finding well-behaved fragments
  * The proof of \cref{thm:encoding:monadic} is evidence of this. Actually the 
    original result was more general (for infinite words, rationals/reals and 
    any class of first-order definable linear orders), but in the case of 
    discrete linear orders, which are natural for automata, ours is much 
    simpler.
  * Now that we have automata, verification of FOLTL properties can be reduced 
    to emptiness, so we can focus on this task. Here we just showed some 
    conditions for semi-decidability, but efficient reasoning in practice is 
    future work.
  * One path can be lifting well-known techniques: CEGAR etc, see below.
  * CHCs \cite{GurfinkelB19} in particular are promising for quantifier-free 
    first-order automata.
  * Beyond verification, reactive synthesis is the most ambitious goal. 
    Automata were a necessary condition to provide an arena where to play.
  * Some works have been done \cite{RodriguezS23} for simple fragments of FOLTL 
    that map to finite-control automata.
  * In any case, *deterministic* arenas are crucial, so we need either 
    determinization, or to exploit \cref{thm:purepast:determinism} to try to do reactive synthesis from pure-past specifications
  * then, extension to ω-languages is the natural next step
  * one can define buchi/streett/rabin first-order automata by changing 
    $\phi_F$,and infinite-words FOLTL maps directly to streett automata (with 
    almost the same encoding)
  * but there emptiness is even harder so it is not clear how to proceed.
*/

We introduced \emph{first-order automata}, an automaton model capable of
capturing \FOLTL on finite words. We studied the corresponding notion of
$\theory$-regular languages, and how to deal with the (non-)emptiness problem in
some cases.

Of course, other models capable of describing infinite-state behaviors have been
proposed before. Many have been introduced in the last decades, including timed
automata~\cite{AlurD94}, hybrid automata~\cite{AlurCHH92}, recursive state
machines~\cite{AlurBEGRY05,BenerecettiMP10}, visibly pushdown
languages~\cite{AlurM04}, operator-precedence automata~\cite{DrosteDMP17}, FIFO
machines~\cite{BrandZ83}, counter machines~\cite{Wojna99}, Petri
nets~\cite{Murata89,HensenK09}, data-aware
processes~\cite{DeutschLV19,CalvaneseGM13,GhilardiGMR23,Gianola23}, many
automata over infinite alphabets~\cite{Segoufin06}, register
automata~\cite{KaminskiF94}, and many more. However, most of them have been
designed to address specific problems, in specific kinds of scenarios, with good
computational properties, so they cannot be general enough to capture \FOLTL
which is highly undecidable.

A model similar to ours are the \emph{symbolic finite automata}
(s-FA)~\cite{DAntoniKW18,DAntoniV14,DAntoniV17}, which we mentioned in
\cref{sec:decidable}, although they are surely less expressive because their
emptiness problem is decidable. \emph{Alternating symbolic
automata}~\cite{IosifX19} are more expressive than s-FAs, closed under
complementation and with a semi-decidable emptiness problem, but are limited to
words that, in our setting, would be made of a signature $\Sigma$ with only
propositions and non-rigid constants.

Our automaton model is a valuable tool for reasoning about \FOLTL, as evidenced
by our proof of \cref{prop:monodic}. Although the original proof is more general
(it works also on $(\mathbb{Q},<)$, $(\mathbb{R},<)$, and any class of infinite
or finite first-order definable linear orders), ours is much more direct in the
particular case of finite discrete linear orders, which are those naturally
handled by automata. Automata are also crucial for \FOLTL model checking, that
can now be reduced to the emptiness problem. How to deal with this problem
efficiently in practice is a different issue. Building on top of existing SMT
technology~\cite{BarrettSST21}, a promising path would be to lift known
techniques, such as CEGAR~\cite{ClarkeGJLV00}, interpolation~\cite{McMillan18},
and property-directed reachability~\cite{Bradley12}. Constrained Horn
clauses~\cite{GurfinkelB19} are also promising for \emph{quantifier-free}
first-order automata, while for quantified ones, quantifier
elimination~\cite{CalvaneseGGMR22,CalvaneseGGMR21}, also of second-order
predicates~\cite{GabbaySS08}, has to be used.

Beyond verification, reactive synthesis~\cite{PnueliR89} is the most ambitious
goal. As we discussed extensively in \cref{sec:logic:purepast},
\emph{deterministic} automata are key for solving this problem. Most
promisingly, these can be obtained either from pure-past \FOLTL sentences, as
explained in \cref{sec:logic:purepast}. Recent results \cite{RodriguezS23}
addressed reactive synthesis for simple fragments of \LTLfMT (hence of \FOLTL)
with encouraging results. For more general \FOLTL specifications, however, more
work is needed.

Last, but not least, extending first-order automata to \emph{infinite words} is
the natural next step. Indeed, one can easily define Büchi, Streett, or Rabin
first-order $\omega$-automata by replacing $\phi_F$ with suitable sentences
representing the sets of states used in such acceptance conditions. Then, \FOLTL
on infinite words is easily encoded into such Streett first-order
$\omega$-automata by extending \cref{enc:encoding}, although we could not
include it in this paper. However, first-order $\omega$-automata would pose even
more challenges: all the theory about $\omega$-regular languages has to be
rebuilt from scratch, and dealing with them algorithmically is going to be even
harder.

\bibliographystyle{kr}
\bibliography{biblio}

\newif\ifappendix
\appendixtrue

\IfFileExists{config.cfg}{\input{config.cfg}}{}

\ifappendix
    \appendix


\clearpage
\section{Proofs}
\label{app:proofs}

\cardinalitythm*
\begin{proof}
  Consider any finite signature $\Sigma$ with some $\Sigma$-theory
  $\theory$ that requires an infinite domain. For example, we may pick
  $\Sigma=\set{<}$ and $\theory$ requiring that $<$ is interpreted as an
  infinite partial order, which is first-order expressible. Recall that by
  $\sem{\theory}$ we denote only the set of $\theory$-structures whose
  domain is at most \emph{countably infinite} (\ie $\aleph_0$). Hence, note
  that $|\sem{\theory}|=2^{\aleph_0}$.  Now, consider the set of all the
  $\theory$-languages made only of words of length $1$. Clearly the number
  of such words is the same as $|\sem{\theory}|$, so $2^{\aleph_0}$ as
  well. Hence, the number of such languages is $2^{2^{\aleph_0}}$.  Since
  a first-order automaton is a finite object, there are at most
  $\aleph_0$ many first-order automata over $\Sigma$, and thus there must
  be a first-order language that is not recognizable by any first-order
  automaton.
\end{proof}

\unionintersection*
\begin{proof}
  Let $\theory$ be a $\Sigma$-theory and consider two automata
  $\autom_1=\seq{\Sigma,\Gamma_1,\phi^1_0,\phi^1_T,\phi^1_F}$ and
  $\autom_2=\seq{\Sigma,\Gamma^2,\phi^2_0,\phi^2_T,\phi^2_F}$, assuming
  \emph{w.l.o.g.}\ that $\Gamma_1$ and $\Gamma_2$ are \emph{disjoint}. 
  
  For the intersection, we can define an automaton
  $\autom_\cap=\seq{\Sigma,\Gamma_\cap,\phi_0^\cap,\phi_T^\cap,\phi_F^\cap}$
  such that
  $\lang_\theory(\autom_\cap)=\lang_\theory(\autom_1)\cap\lang_\theory(\autom_2)$,
  as follows:
  \begin{align*}
    \Gamma_\cap&=\Gamma_1\cup\Gamma_2 &
    \phi_0^\cap&=\phi^1_0\land\phi^2_0 \\
    \phi_T^\cap&=\phi^1_T\land\phi^2_T &
    \phi_F^\cap&=\phi^1_F\land\phi^2_F
  \end{align*}
  Now, let $\model=\seq{\letter_0,\ldots,\letter_n}$ be a word
  $\theory$-accepted by both $\autom_1$ and $\autom_2$, with the corresponding
  runs $\bar\rho_1=\seq{\rho^1_0,\ldots,\rho^1_n}$ and
  $\bar\rho'=\seq{\rho^2_0,\ldots,\rho^2_n}$. By construction, we have that
  $\rho^1_0\cup\rho^2_0\models\phi^\cup_0$,
  $\rho^1_i\cup\rho^2_i\cup\letter_i\cup\rho^{1'}_{i+1}\cup\rho^{2'}_{i+1}\models\phi^\cup_T$,
  and $\rho^1_n\cup\rho^2_n\models\phi_F^\cup$. Hence, the
  run
  $\bar\rho^\cup=\seq{\rho^1_0\cup\rho^2_0,\ldots,\rho^1_n\cup\rho^2_n}$
  is a successful run for $\autom_\cup$, so $\autom_\cup$ $\theory$-accepts
  $\model$. Similarly, \viceversa, if $\model$ is $\theory$-accepted by
  $\autom_\cup$, it can be shown to be $\theory$-accepted by $\autom_1$ and
  $\autom_2$ as well. Hence,
  $\lang_\theory(\autom_\cup)=\lang_\theory(\autom_1)\cap\lang_\theory(\autom_2)$.

  For the union, we can define an automaton
  $\autom_\cup=\seq{\Sigma,\Gamma_\cup,\phi_0^\cup,\phi_T^\cup,\phi_F^\cup}$
  such that
  $\lang_\theory(\autom_\cup)=\lang_\theory(\autom_1)\cup\lang_\theory(\autom_2)$,
  as follows. Let $\Gamma_0=\seq{p_0}$, where $p_0$ is a zero-ary predicate (\ie
  a proposition), and let $\Gamma_\cup=\Gamma_1\cup\Gamma_2\cup\Gamma_0$.
  Then, $\autom_\cup$ is defined as follows:
  \begin{align*}
    \phi_0^\cup &= (p_0\land\phi_0^1) \lor (\neg p_0 \land \phi_0^2)
    \\
    \phi_T^\cup &= (p_0\land \phi_T^1\land p_0') \lor (\neg p_0 \land \phi_T^2\land \neg p_0')
    \\
    \phi_F^\cup &= (p_0\land\phi_F^1) \lor (\neg p_0\land\phi_F^2)
  \end{align*}

  Now, let $\model=\seq{\letter_0,\ldots,\letter_n}$ be a word and assume
  \emph{w.l.o.g.}\ it is $\theory$-accepted by $\autom_1$, with the
  corresponding trace $\bar\rho_1=\seq{\rho^1_0,\ldots,\rho^1_{n+1}}$. Let
  $\bar\rho^\cup=\seq{\bar\rho^\cup_0,\ldots,\bar\rho^\cup_{n+1}}$, where each
  $\bar\rho^\cup_i$ is a structure over $\Gamma^\cup$ that agrees with
  $\rho^1_i$ on everything and sets $p_0$ true (to false if $\model$ was
  $\theory$-accepted by $\autom_2$ instead). Then, by construction, we have that
  $\bar\rho^\cup_0\models\phi_0^\cup$,
  $\rho^\cup_i\cup\letter\cup\rho^{\cup'}_{i+1}\models\phi_T^\cup$,
  and $\rho^\cup_{n+1}\models\phi_F^\cup$. Hence, $\bar\rho^\cup$ is a
  successful trace for $\autom_\cup$. Similarly, \viceversa, one can see that if
  $\model$ is $\theory$-accepted by $\autom_\cup$, then it is $\theory$-accepted
  by either $\autom_1$ or $\autom_2$ (depending on the truth of $p_0$ in the
  corresponding run). Hence,
  $\lang_\theory(\autom_\cup)=\lang_\theory(\autom_1)\cup\lang_\theory(\autom_2)$.
\end{proof}

\secondorderautomata*
\begin{proof}
  Let $\autom=\seq{\Sigma,\Gamma,\phi_0,\phi_T,\phi_F}$ be a
  $\Sigma_1^1$-automaton, and let $\phi_0\equiv Q_0\suchdot \psi_0$,
  $\phi_T\equiv Q_T\suchdot \psi_T$ and $\phi_F\equiv Q_F\suchdot \psi_F$, where
  $Q_0$, $Q_T$, and $Q_F$ are existential blocks of second-order quantifier, and
  $\psi_0$, $\psi_T$ and $\psi_F$ are first-order sentences. We define
  $\autom_1=\seq{\Sigma,\Gamma_1,\psi_0,\psi_T,\psi_F}$, where $\Gamma_1$ is
  built from $\Gamma$ by adding the second-order symbols quantified in $Q_0$,
  $Q_T$ and $Q_F$. Now, for any $\Sigma$-theory $\theory$, we can prove that
  $\lang_\theory(\autom_1)=\lang_\theory(\autom)$.

  $(\subseteq)$ Suppose we have a $\theory$-word
  $\bar\sigma=\seq{\sigma_0,\ldots,\sigma_{n-1}}$ in $\lang_\theory(\autom_1)$.
  Then there is a successful run $\bar\rho^1=\seq{\rho^1_0,\ldots,\rho^1_n}$,
  where each $\rho^1_i$ is a $\Gamma_1$-structure. Let
  $\bar\rho=\seq{\rho_0,\ldots,\rho_n}$ be a sequence of $\Gamma$-structures
  where each $\rho_i$ is the $\Gamma$-substructure of $\rho_i^1$. Then, since
  $\rho^1_0\models \psi_0$, almost by definition it holds that $\rho_0\models
  Q_0\suchdot\psi_0$, \ie $\rho_0\models \phi_0$. Similarly, since
  $\rho^1_i,\sigma_i,\rho_{i+1}^{1\prime}\models\psi_T$, it also holds
  that $\rho_i\cup\sigma_i\cup\rho'_{i+1}\models\phi_T$, and similarly we have
  that $\rho_n\models\phi_F$. Hence $\bar\rho$ is a successful run for $\autom$,
  which then accepts $\bar\sigma$.

  $(\supseteq)$ Suppose we have a $\theory$-word
  $\bar\sigma=\seq{\sigma_0,\ldots,\sigma_{n-1}}$ in $\lang_\theory(\autom)$. So
  there is a successful run $\bar\rho=\seq{\rho_0,\ldots,\rho_n}$, where each
  $\rho_i$ is a $\Gamma$-structure. Since $\rho_0\models\phi_0$, there is an
  $\Gamma_1$-expansion $\rho^1_0$ of $\rho_0$ such that $\rho^1_0\models\psi_0$.
  Similarly, since $\rho_i\cup\sigma_i\cup\rho_{i+1}'\models\phi_T$, there is a
  $\Gamma_1$-expansion $\rho^1_i$ of $\rho_i$ and a $\Gamma_1'$-expansion
  $\rho^1_{i+1}$ of $\rho_{i+1}$ such that
  $\rho^1_i\cup\sigma_i\cup\rho_{i+1}^{1\prime}\models\psi_T$. Note that since
  $\phi_T$ does not talk about primed symbols in $\Gamma_1'$, all the
  $\Gamma_1'$-expansions of $\rho_{i+1}'$ are suitable here. Now, we have that
  since $\rho_n\models\phi_F$ we have a $\Gamma$-expansion $\rho^1_n$ of
  $\rho_n$ such that $\rho^1_n\models\psi_F$. Hence there exist this sequence
  $\bar\rho^1=\seq{\rho_0^1,\ldots,\rho^1_n}$ of $\Gamma_1$-expansions of
  $\bar\rho$, which is a successful run in $\autom_1$, which then accepts
  $\bar\sigma$.
\end{proof}

\detcomplementation*
\begin{proof}
  Let $\autom=\seq{\Sigma,\Gamma,\phi_0,\phi_T,\phi_F}$ be a
  \emph{deterministic} automaton. As in the finite-state case, a \emph{complete}
  and deterministic automaton is very easy to complement: just swap accepting
  and non-accepting states. This works because of
  \cref{prop:determinism:unique:run}. However, before complementing $\autom$, we
  need to ensure it is \emph{complete}. If it is not, the situation is easy to
  fix. Let $\autom_C=\seq{\Sigma,\Gamma_C,\phi_0^C,\phi_T^C,\phi_F^C}$ where:
  \begin{enumerate}
    \item $\Gamma_C=\Gamma\cup\set{s}$ where $s$ is a fresh $0$-ary predicate;
    \item $\phi_0^C\equiv \neg s\land\phi_0$;
    \item $\phi_F^C\equiv \neg s \land \phi_F$;
    \item $\phi_T^C\equiv (\neg s \land \neg s'\land \phi_T) \lor (s \land s')$
  \end{enumerate}
  Intuitively, states where $s$ holds are \emph{sink} states. Now, it is easy to
  see that $\lang_\theory(\autom_C)=\lang_\theory(\autom)$ because given a word
  $\bar\sigma$ \emph{there exists} an accepting run for $\bar\sigma$ in
  $\autom_C$ iff there is one in $\autom$: we did not add nor remove any
  accepting path. Note, however, that $\autom_C$ is now nondeterministic.
  Despite the newly introduced nondeterminism, it is still easy to complement,
  because \emph{all} the paths that pass through the new sink states are
  rejecting, so it is correct to label the sink states as accepting in the
  complement. Therefore, if
  $\autom^{-1}=\seq{\Sigma,\Gamma_C,\phi^C_0,\phi^C_T,\neg\phi^C_F}$, we have
  $\lang_\theory(\autom^{-1})=\overline{\lang_\theory(\autom_C)}=\overline{\lang_\theory(\autom)}$.
\end{proof}

\encodingproof*
\begin{proof}[$\subseteq$]
  Let $\autom(\phi)=\seq{\Sigma,\Gamma,\phi_0,\phi_T,\phi_F}$, let
  $\model=\seq{\letter_0,\ldots,\letter_{n-1}}$ be a $\theory$-word accepted by
  $A(\phi)$, and let $\bar\rho=\seq{\rho_0,\ldots,\rho_n}$ be the corresponding
  accepting run. We show that for each $0\le i < n$, each environment $\xi$, and
  each $\psi\in\closure(\psi)$, we have that if
  $\rho_i\cup\letter_i\cup\rho'_{i+1},\xi\models \snf_S'(\psi(\bar x))$, then
  $\model,i,\xi\models\psi(\bar x)$.

  Since $\phi_0$ demands $\xs_\phi$, we have $\rho_0\models \xs_\phi$, hence it
  follows from $\phi_T$ that $\rho_0\cup\letter_0\cup\rho_1'\models
  \snf_S'(\phi)$. By the claim it follows that $\model\models\phi$. 
  
  We go by structural induction on $\psi$, which is a \FOLTL $\Sigma$-formula.
  For the base case of an atomic formula $p(\bar x)$, if
  $\rho_i\cup\letter_i\cup\rho'_{i+1},\xi\models \snf_S'(p(\bar x))$, note that
  $\snf_S'(p(\bar x))\equiv p(\bar x)$, therefore $\model,i\models p(\bar x)$.
  For the inductive case we distinguish the kind of the formula at hand.
  \begin{enumerate}
    \item If $\rho_i\cup\letter_i\cup\rho'_{i+1},\xi\models \snf_S'(\forall x.
      \psi(\bar x))$, we have $\rho_i\cup\letter_i\cup\rho'_{i+1},\xi\models
      \forall x.\snf_S'(\psi(\bar x))$ by definition of $\snf$. Then, by the
      semantics we have that $\rho_i\cup\letter_i\cup\rho'_{i+1},\xi'\models
      \snf_S'(\psi(\bar x))$ for all $\xi'$ that agree with $\xi$ everywhere
      except possibly on $\xi$. Then, by the induction hypothesis we have
      $\model,i,\xi'\models \psi(\bar x)$, and therefore
      $\model,i,\xi\models\forall x.\psi(\bar x)$. The reasoning is the same for
      existentials, and similarily simple for Boolean connectives.
    \item If $\rho_i\cup\letter_i\cup\rho'_{i+1},\xi\models
      \snf_S'(\ltl{X\psi(\bar x)})$, then recall that $\snf_S'(\ltl{X\psi(\bar
      x)})\equiv \xs'_{\psi}(\bar x)$. Since $\bar\rho$ is accepting, we know
      that $\rho_n$ satisfies $\phi_F$, which requires $\forall x.\neg
      \xs_\psi(\bar x)$, therefore we know $i<n-1$. Now, from
      $\rho'_{i+1},\xi\models \xs'_{\psi}(\bar x)$ it follows that
      $\rho_{i+1}\cup\letter_{i+1}\cup\rho'_{i+2},\xi\models \xs_{\psi}(\bar
      x)$, and by $\phi_T$ it follows that
      $\rho_{i+1}\cup\letter_{i+1}\cup\rho'_{i+2},\xi\models \snf_S'(\psi(\bar
      x))$. By the induction hypothesis we have $\model,i+1,\xi\models \psi(\bar
      x)$, which means $\model,i,\xi\models\ltl{X\psi(\bar x)}$. The reasoning
      is similar for $\ltl{wX\psi}$, $\ltl{Y\psi}$, and $\ltl{Z\psi}$.
    \item If $\rho_i\cup\letter_i\cup\rho'_{i+1},\xi\models
      \snf_S'(\ltl{\psi_1(\bar x) U \psi_2(\bar y)})$, recall that
      $\snf_S'(\ltl{\psi_1(\bar x) U \psi_2(\bar y)})\equiv \snf'_S(\psi_2(\bar
      y))\lor(\snf_S'(\psi_1(\bar x))\land\xs_{\ltl{X(\psi_1 U \psi_2)}}(\bar x,
      \bar y))$. Now, either we have
      $\rho_i\cup\letter_i\cup\rho'_{i+1},\xi\models\snf_S'(\psi_2(\bar y))$, or
      $\rho_i\cup\letter_i\cup\rho'_{i+1},\xi\models\xs_{\ltl{X(\psi_1 U
      \psi_2)}}(\bar x, \bar y)$. By $\phi_T$ this means we have
      $\rho_{i+1}\cup\letter_{i+1}\cup\rho'_{i+2},\xi\models\snf_S'(\ltl{\psi_1(\bar
      x)U\psi_2(\bar y)})$ again, and this can continue for all the word up to
      $i=n-1$, where, however, we must have
      $\rho_{n-1}\cup\letter_{n-1}\cup\rho'_n,\xi\models\snf_S'(\psi_2(\bar
      y))$, because $\phi_F$ requires $\xs_{\ltl{\psi_1 U \psi_2}}(\bar x, \bar
      y)$ to be false over $\rho_n$. So there is a $j$ with $i\le j<n$ such that
      $\rho_j\cup\letter_j\cup\rho'_{j+1},\xi \models \ltl{\psi_2(\bar y)}$, and
      $\rho_k\cup\letter_k\cup\rho'_{k+1},\xi\models\ltl{\psi_1(\bar x)}$ for
      all $i\le k < j$. By the induction hypothesis applied to both facts we get
      that $\model, j,\xi\models \psi_2(\bar y)$ and $\model,k,\xi\models
      \psi_1(\bar x)$. Hence $\model,i,\xi\models \ltl{\psi_1(\bar x) U
      \psi_2(\bar y)}$. The reasoning is similar for $\ltl{\psi_1 S \psi_2}$.
  \end{enumerate}
  
  \noindent
  ($\supseteq$) Let $\autom(\phi)=\seq{\Sigma,\Gamma,\phi_0,\phi_T,\phi_F}$, and
  let $\model=\seq{\letter_0,\ldots,\letter_{n-1}}$ be a $\theory$-word such
  that $\model\models\phi$. We build a run in $\autom(\phi)$ and show that it is
  induced by $\model$ and accepting. The run
  $\bar\rho=\seq{\rho_0,\ldots,\rho_n}$ is defined as follows. For each $0\le i<
  n$ and each sort symbol $S$, we set $S^{\rho_i}=S^{\letter_i}$ as demanded by
  \cref{def:run}. Then:
  \begin{enumerate}
    \item for each $\xs_\psi\in\XS$ of arity $k\ge 0$ and parameter
      sorts $S_1,\ldots,S_k$, and each $k$-tuple $\bar a=\seq{a_1,\ldots,a_k}\in
      S_1^{\rho_i}\times\ldots\times S_k^{\rho_i}$, we set $\bar a\in
      \xs_{\psi}^{\rho_i}$ if and only if $\model,i,\xi\models
      \psi(x_1,\ldots,x_k)$ for each $0\le i < n$, and
      $\xs_\psi^{\rho_n}=\emptyset$;
    \item for each $\ws_\psi\in\wXS$, the definition is similar but
      $\xs_\psi^{\rho_n}=S_1^{\rho_i}\times\ldots\times S_k^{\rho_i}$;
    \item for each $\ys_\psi\in\YS$ of arity $k\ge 0$ and parameter
      sorts $S_1,\ldots,S_k$, and each $k$-tuple $\bar a=\seq{a_1,\ldots,a_k}\in
      S_1^{\rho_i}\times\ldots\times S_k^{\rho_i}$, we set $\xs_\psi^{\rho_0}=\emptyset$ and $\bar a\in
      \ys_{\psi}^{\rho_i}$ if and only if $\model,i-1,\xi\models
      \psi(x_1,\ldots,x_k)$ for each $0 < i \le n$;
    \item for each $\zs_\psi\in\wXS$, the definition is similar but
      $\zs_\psi^{\rho_0}=S_1^{\rho_i}\times\ldots\times S_k^{\rho_i}$;
  \end{enumerate}
  Now, we can show that $\bar\rho$ is induced by $\model$ over $\autom(\phi)$
  and it is accepting. To show that $\rho_0\models\phi_0$, note that by the
  definition of $\bar\rho$ we have that $\rho,0\models \xs_\phi$, because
  $\model,0\models\phi$. Moreover, we have by construction that
  $\rho_0\models\forall \bar x.\zs_\psi(\bar x)$ for all $\zs_\psi\in\ZS$ and
  $\rho_0\models\forall \bar x.\neg\ys_\psi(\bar x)$ for all $\ys_\psi\in\YS$. A
  similar observation holds to note that $\rho_n\models\phi_F$. Then, for $0\le
  i < n$, we show that for any formula $\psi\in\closure(\phi)$ and any
  environment $\xi$, it holds that if $\model,i,\xi\models\psi(\bar x)$ then
  $\rho_i\cup\letter_i\cup\rho'_{i+1},\xi\models\snf_S'(\psi(\bar x))$. We do
  that by structural induction over $\psi$. For the base case, if
  $\model,i,\xi\models p(\bar x)$ for some $p\in\Sigma$, we just note that
  $\snf_S'(p(\bar x))\equiv p(\bar x)$, therefore
  $\rho_i\cup\letter_i\cup\rho'_{i+1},\xi\models\snf_S'(p(\bar x))$. For the
  inductive case:
  \begin{enumerate}
    \item If $\model,i,\xi\models \forall x.\psi(\bar x)$, then
      $\model,i,\xi'\models\psi(\bar x)$ for any $\xi'$ that agrees with $\xi$
      everywhere except possibly for $x$. Then, by the induction hypothesis we
      have $\rho_i\cup\letter_i\cup\rho'_{i+1},\xi'\models\snf_S'(\psi(\bar
      x))$, and therefore $\rho_i\cup\letter_i\cup\rho'_{i+1},\xi\models\forall
      x.\snf_S'(\psi(\bar x))$, which means
      $\rho_i\cup\letter_i\cup\rho'_{i+1},\xi\models\snf_S'(\forall x.\psi(\bar
      x))$. The reasoning is the same for existentials and similarily simple for
      Boolean connectives.
    \item If $\model,i,\xi\models\ltl{X\psi(\bar x)}$, then $i<n-1$ and
      $\model,i+1,\xi\models\psi(\bar x)$. By the definition of $\bar\rho$ we
      then have that $\rho_i,\xi\models\xs_\psi(\bar x)$. Moreover, for the
      induction hypothesis we also have
      $\rho_{i+1}\cup\letter_{i+1}\cup\rho'_{i+2},\xi\models \snf_S'(\psi(\bar
      x))$. Therefore the clause of $\psi_T$ concerning $\xs_\psi$ is satisfied.
      The reasoning is similar for $\ltl{wX\psi}$, $\ltl{Y\psi}$ and
      $\ltl{Z\psi}$.
    \item If $\model,i,\xi\models\ltl{\psi_1(\bar x) U \psi_2(\bar y)}$ then we
      have $\model,i,\xi\models\ltl{\psi_2(\bar y)\lor(\psi_1(\bar
      x)\land\ltl{X(\psi_1(\bar x) U \psi_2(\bar y))})}$. From the induction
      hypothesis and the reasoning about $\ltl{X\psi}$ above we get
      $\rho_i\cup\letter_i\cup\rho'_{i+1},\xi\models\snf'_S(\psi_2(\bar
      y))\lor(\snf'_S(\psi_1(\bar y)) \land \snf'_S(\ltl{\psi_1(\bar x) U
      \psi_2(\bar y)}))$, which means exactly that
      $\rho_i\cup\letter_i\cup\rho'_{i+1},\xi\models\snf'_S(\ltl{\psi_1(\bar x)
      U \psi_2(\bar y)})$. The reasoning is similar with $\ltl{\psi_1 S
      \psi_2}$.\qedhere
  \end{enumerate}
\end{proof}

\pastdeterminism*
\begin{proof}
  Let $\autom(\phi)=\seq{\Sigma,\Gamma,\phi_0,\phi_T,\phi_F}$. For determinism,
  we have to check the two conditions of \cref{def:determinism}. At first, let
  us check that $\phi_0$ has only one model \emph{up to elementary equivalence}.
  Indeed $\phi_0$ univocally determines that all the predicates $s_\psi\in\YS$
  are false everywhere and all the $s_\psi\in\ZS$ are true everywhere. Values
  of constants $c_t$ are not uniquely determined, but are forced to be all equal
  to each other. Hence, for any $\Gamma$-sentence $\phi$, by structural
  induction on $\phi$ one can confirm that any two models of $\phi_0$ either
  both satisfy $\phi$ or both do not.

  Now, to check the determinism of the transition relation, suppose we start
  from two structures $\mu_1$ and $\mu_2$ over $\Gamma\cup\Sigma\cup\Gamma'$
  such that $\mu_1\vert_\Gamma$ and $\mu_2\vert_\Gamma$ are elementary
  equivalent, and so are $\mu_1\vert_\Sigma$ and $\mu_2\vert_\Sigma$. Let us
  analyse the shape of $\phi_T$. In the presence of only past operators, only
  Line \eqref{eq:encoding:two} of the definition of $\phi_T$ in
  \cref{enc:encoding} is relevant. There, we see that the truth of each
  predicate of $\Gamma'$, in any point $(x_1,\ldots,x_n)$ of its domain, is
  univocally determined by the truth of the formula
  $\snf_S(\psi)(x_1,\ldots,x_n)$, which is a formula over $\Gamma\cup\Sigma$.
  From what we know we cannot conclude that $\mu_1\vert_{\Gamma\cup\Sigma}$ and
  $\mu_2\vert_{\Gamma\cup\Sigma}$ are elementary equivalent, but
  $\snf_S(\psi)(x_1,\ldots,x_n)$ has a particular feature: any atom in the
  formula is either completely a pure $\Gamma$-atom or a pure $\Sigma$-atom.
  This can be confirmed by looking at \cref{def:snf} and the definition of the
  surrogate operation. Hence, by a simple structural induction over any
  $(\Gamma\cup\Sigma)$-formula $\phi$, we can confirm that either
  $\mu_1\vert_{\Gamma\cup\Sigma}$ and $\mu_2\vert_{\Gamma\cup\Sigma}$ satisfy
  $\phi$ or both do not. Hence the interpretation of $s_\psi$ is univocally
  determined. Hence $\mu_1\vert_{\Gamma'}$ and $\mu_2\vert_{\Gamma'}$ must be
  elementarily equivalent, and $\autom(\phi)$ is deterministic.

  For the completeness of the automaton, it is sufficient to recall the explicit
  dependence noted above of the $p'_\psi$ predicates from $\Gamma\cup\Sigma$.
\end{proof}

To prove \cref{thm:encoding:monadic} we need the following lemma.
\begin{lemma}
  \label{lemma:gammastate}
  Let $\Gamma$ be the state signature of a \emph{monosorted} monadic automaton.
  Then, for each $\Gamma$-structures $\mu$ and $\rho$, and any abstract state
  $s$, if both $\mu\models\gamma_s$ and $\rho\models\gamma_s$, then $\mu$ and
  $\rho$ are elementarily equivalent.
\end{lemma}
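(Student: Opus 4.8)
The plan is to derive elementary equivalence from the single fact that $\mu$ and $\rho$ realize exactly the same collection of $1$-types. First I would record the meaning of $\gamma_s$: for a $\Gamma$-structure $\nu$ and an element $a$ of its domain, call $\mathrm{tp}_\nu(a)=\set{p\in\Gamma\mid\nu\models p(a)}\in 2^\Gamma$ the $1$-type of $a$; every element has exactly one $1$-type, and, unwinding the definition of $\gamma_s$ and of the $\gamma_t(x)$, the sentence $\gamma_s$ holds in $\nu$ if and only if the set of $1$-types realized by the elements of $\nu$ is exactly $s$. Consequently, $\mu\models\gamma_s$ and $\rho\models\gamma_s$ together say nothing more and nothing less than: $\mu$ and $\rho$ realize the same $1$-types, namely those in $s$.

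Second, I would obtain elementary equivalence via an Ehrenfeucht--Fra{\"\i}ss{\'e} argument. Since $\Gamma$ is a purely relational monadic signature, and --- in the setting in which this lemma is used, cf.\ the hypotheses of \cref{thm:encoding:monadic} --- the relevant first-order language contains no equality symbol, the atomic type of any tuple $\seq{a_1,\dots,a_k}$ is completely determined by the list $\mathrm{tp}(a_1),\dots,\mathrm{tp}(a_k)$ of individual $1$-types. Fix $n\ge 0$ and play the $n$-round EF game on $\mu$ and $\rho$. I claim Duplicator wins with the following strategy: whenever Spoiler picks an element $a$ in one of the two structures, Duplicator answers with \emph{any} element $b$ of the other structure with $\mathrm{tp}(b)=\mathrm{tp}(a)$; such a $b$ exists because $\mathrm{tp}(a)\in s$ and $s$ is realized in both structures. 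Because equality is absent from the language, Duplicator is never required to reuse, nor to avoid, a previously chosen element, so this move is always available independently of the history of the play; and at the end of the game every pair of selected elements carries the same $1$-type, hence the same atomic type, so the resulting position respects all atomic formulas. Therefore Duplicator wins the $n$-round game for every $n$, so $\mu$ and $\rho$ satisfy the same first-order sentences; this is exactly the elementary equivalence we need (and use) in \cref{thm:encoding:monadic}.

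The step I expect to be the real subtlety is pinning down the role of equality. The lemma genuinely fails if equality is allowed: a structure with a single element of type $t$ and one with two elements of type $t$ both satisfy the same $\gamma_s$, yet are separated by $\exists x\exists y\,(x\ne y\land\gamma_t(x)\land\gamma_t(y))$. This is precisely why \cref{thm:encoding:monadic} restricts $\phi_0$, $\phi_T$, and $\phi_F$ to be equality-free, and why the EF game above must be the equality-free variant, so that Duplicator's ``copy the $1$-type'' strategy is not obstructed by an injectivity requirement. If a syntactic proof were preferred instead, one could first establish the normal form for monadic equality-free logic --- every sentence is equivalent to a Boolean combination of sentences $\exists x\,\gamma_t(x)$ for $t\subseteq\Gamma$ --- and then observe that the truth of each such sentence in $\nu$ depends only on the set of $1$-types realized by $\nu$; but proving that normal form is extra work, whereas the game argument is self-contained. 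The degenerate cases ($s=\emptyset$, forcing empty domains, or $s$ a singleton) are covered verbatim by the same strategy.
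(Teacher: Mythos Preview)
Your proposal is correct. The paper takes a closely related but differently packaged route: rather than an Ehrenfeucht--Fra{\"\i}ss{\'e} game, it fixes once and for all a type-preserving function $f\colon S^\mu\to S^\rho$ (sending each $a$ to an arbitrary element of $S^\rho$ realizing the same $1$-type) and proves by direct structural induction on $\Gamma$-formulas $\psi$ that $\mu,\xi\models\psi$ implies $\rho,\xi_f\models\psi$, where $\xi_f=f\circ\xi$. Your Duplicator strategy is exactly ``play an element of the same $1$-type'', and the paper's induction is precisely the verification that this strategy wins; so the two arguments rest on the same two observations --- that $\gamma_s$ pins down the set of realized $1$-types, and that in a monadic relational equality-free language the atomic type of a tuple is determined by the list of individual $1$-types. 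Neither approach is more general here, though the game formulation makes the role of the no-equality hypothesis more visible: the paper's proof simply omits $x=y$ from its base cases without comment, whereas you spell out why the lemma fails with equality and how this matches the hypotheses of \cref{thm:encoding:monadic}.
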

\begin{proof}
  Suppose $S$ is the only sort involved. Let $s$ be the abstract state such that
  $\mu\models\gamma_s$ and $\rho\models\gamma_s$. We define a map $f:S^\mu\to
  S^\rho$ such that for each $a\in S^\mu$ that models a type $t\in s$, $f(a)=b$
  where $b$ is an arbitrary element in $S^\rho$ that models $t$ as well. Given
  an environment $\xi$, we denote as $\xi_f$ the environment such that
  $\xi_f(x)=f(\xi(x))$ for any variable $x$.
  
  We prove a slightly more general statement by structural induction over a
  $\Gamma$-formula $\psi$: for each environment $\xi$, if $\mu,\xi\models\psi$,
  when $\rho,\xi_f\models\psi$. For the base case of an atomic formula $p(x)$,
  if $\mu,\xi\models p(x)$, then $\rho,\xi_f\models p(x)$ by definition of $f$
  and $\xi_f$. The case for $\neg p(x)$ is similar. For the inductive step, if
  $\mu,\xi\models\exists x.\psi(x)$, then there is a $\xi'$, that agrees with
  $\xi$ everywhere except only possibly on $x$, such that
  $\mu,\xi'\models\psi(x)$. The, by the inductive hypothesis, we know
  $\rho,\xi'_f\models\psi(x)$. But if $\xi'$ agrees with $\xi$ except possibly
  on $x$, also $\xi'_f$ agrees with $\xi_f$ except possibly on $x$, by
  definition. Therefore, $\rho,\xi_f\models\exists x.\psi(x)$. Similarly, if
  $\mu,\xi\models\forall x.\psi(x)$, then we have $\mu,\xi'\models\psi(x)$ for
  all $\xi'$ that agrees with $\xi$ except possibly on $x$. But then
  $\rho,\xi'_f\models\psi(x)$ by the induction hypothesis, and threfore
  $\rho,\xi_f\models\forall x. \psi(x)$. The cases for Boolean connectives are
  simple.
\end{proof}

\monadicencoding*
\begin{proof}
  Let the only sort in $\Sigma$ be $S$ and let
  $\autom=\seq{\Sigma,\Gamma_*,\phi^*_0,\phi^*_T,\phi^*_F}$ as in
  \cref{enc:monadic:encoding}. For any $\Sigma$-theory $\theory$ we want to show
  that $\lang_{\theory}(\autom)=\lang_{\theory}(\autom^*)$, so we need to show
  the two directions.

  $(\subseteq)$ Let $\model=\seq{\letter_0,\ldots,\letter_{n-1}}$ be a
  $\theory$-word accepted by $\autom$ and let
  $\bar\rho=\seq{\rho_0,\ldots,\rho_n}$ be its induced accepting run. We
  construct a run in $\autom^*$ and show it is indeed an accepting run. The run
  is defined as $\bar\rho^*=\seq{\rho_0^*,\ldots,\rho_n^*}$ where
  $\rho^*_i\models b_{s_i}$ for some abstract state $s_i$ with $0\le i\le n$ if
  and only if $\rho_i\models\gamma_{s_i}$. Note that there exists and is unique
  the $s_i$ such that this can occur for each $i$, because $\bigvee_s\gamma_s$
  is valid, but $\gamma_{s_1}\land\gamma_{s_2}$ is unsatisfiable for each
  distinct abstract states $s_1$ and $s_2$. 
  
  Now suppose $i=0$. We know $\rho_0\models\gamma_{s_0}$ and
  $\rho_0\models\phi_0$, therefore, by \cref{lemma:gammastate}, we also have
  that $\gamma_{s_0}\models\phi_0$ because for any $\Gamma$-structure $\mu$, if
  $\mu\models\gamma_{s_0}$ then $\mu$ is elementarily equivalent to $\rho_0$ and
  therefore it holds that $\mu\models\phi_0$ just like $\rho_0$. Since
  $\gamma_{s_0}\models\phi_0$, by \cref{enc:monadic:encoding} we have $b_{s_0}$
  is a disjunct of $\rho^*_0$, and therefore $\rho^*_0\models\phi_0^*$. A
  similar reasoning goes to show that $\rho^*_n\models\phi^*_F$.
  
  Now for each $0<i<n$, we want to show that
  $\rho^*_i\cup\letter_i\cup\rho^*_{i+1}\models\phi^*_T$. The definition of
  $\phi^*_T$ conjuncts a set of implications, one for each combination of
  current and next abstract state. So let $s_i$ and $s_{i+1}$ be the abstract
  states corresponding to $\rho^*_i$ and $\rho^*_{i+1}$. Following the
  definition of $\phi^*_T$ we have to show that if
  $\rho_i\cup\letter_i\cup\rho'_{i+1}\models\psi_T$, then
  $\rho^*_i\cup\letter_i\cup\rho^*_{i+1},\xi\models\psi_T\vert^h_{s_i\to
  s_{i+1}}$ for some assumption $h$. But notice that by definition,
  $\psi_T\vert^h_{s_i\to s_{i+1}}$ does not mention anything from $\Gamma$ and
  $\Gamma'$, so what we suffice to show is that $\letter_i\models
  \psi_T\vert^h_{s_i\to s_{i+1}}$. 
  
  We actually show the more general statement that for any environment $\xi$ and
  any formula $\psi$, if $\rho_i\cup\letter_i\cup\rho'_{i+1},\xi\models\psi$,
  then $\letter_i,\xi\models\psi\vert^{h_\xi}_{s_i\to s_{i+1}}$ where $h_\xi$ is
  the assumption such that, if $h_\xi(x)=(t_i,t_{i+1})$, then $t_i$ is the type
  of $\xi(x)$ in $\rho_i$ and $t_{i+1}$ is the type of $\xi(x)$ in $\rho_{i+1}$.
  
  We do that by structural induction over $\psi$. For the base case of an atomic
  formula, $p(x_1,\ldots,x_k)$, if $p\in\Sigma$ then
  $p(x_1,\ldots,x_k)\vert_{s_i\to s_{i+1}}$ is just $p(x_1,\ldots,x_k)$, and we
  know $\rho_i\cup\letter_i\cup\rho'_{i+1},\xi\models p(x_1,\ldots,x_k)$. But
  this only depends on $\letter_i$, so we have $\letter_i,\xi\models
  p(x_1,\ldots,x_k)$ as well. If instead $p\in\Gamma$, then it is monadic, and
  the atomic formula is actually just $p(x)$. We know that
  $\rho_i\cup\letter_i\cup\rho'_{i+1},\xi\models p(x)$, so if
  $h_\xi(x)=(t_i,t_{i+1})$, then we know $p\in t_i$. Then
  $p(x)\vert^{h_\xi}_{s_1\to s_2}\equiv\top$ by definition, so trivially
  $\letter_i,\xi\models p(x)\vert^{h_\xi}_{s_1\to s_2}$. A similar argument
  works for $\neg p(x)$, $p'(x)$, $\neg p'(x)$.

    For the inductive case, we distinguish the kind of formula at hand:
  \begin{enumerate}
    \item if $\rho_i\cup\letter_i\cup\rho'_{i+1},\xi\models\forall x.\psi$, then
      for any $\xi'$ agreeing with $\xi$ except for $x$, we have
      $\rho_i\cup\letter_i\cup\rho'_{i+1},\xi'\models \psi$. Then, for the
      inductive hypothesis, we know $\letter_i,\xi'\models
      \psi\vert^{h_{\xi'}}_{s_1\to s_2}$. Now, notice that $h_{\xi'}$ agrees
      with $h_\xi$ everywhere except for $x$, so let
      $h_{\xi'}(x)=(t_i,t_{i+1})$. Since we chose $s_i$ and $s_{i+1}$ as the
      abstract states of $\rho_i$ and $\rho_{i+1}$, for sure $t_i\in s_i$ and
      $t_{i+1}\in s_{i+1}$. Hence we have:
      \begin{align*}
        \letter_i,\xi'\models {} &\bigvee_{t_i\in s_i}\bigvee_{t_{i+1}\in s_{i+1}}\psi\vert^{h_\xi[x\mapsto(t_i,t_{i+1})]}_{s_i\to s_{i+1}}
      \shortintertext{and therefore:}
      \letter_i,\xi\models {} &\forall x.\bigvee_{t_i\in s_i}\bigvee_{t_{i+1}\in s_{i+1}}\psi\vert^{h_\xi[x\mapsto(t_i,t_{i+1})]}_{s_i\to s_{i+1}}
      \end{align*}
      which means exactly, by definition, $\letter_i,\xi\models(\forall
      x.\psi)\vert^{h_\xi}_{s_i\to s_{i+1}}$. The reasoning with the existential
      quantifier is identical.
       
    \item if $\psi$ is a conjunction or a disjunction the induction is trivial.
  \end{enumerate}

  $(\supseteq)$ On the converse, let
  $\model=\seq{\letter_0,\ldots,\letter_{n-1}}$ be a $\theory$-word accepted by
  $\autom^*$ and let $\bar\rho^*=\seq{\rho^*_0,\ldots,\rho^*_n}$ be its induced
  accepting run. Then we define a run $\bar\rho=\seq{\rho_0,\ldots,\rho_n}$ over
  $\autom$ and we show that it is accepting. The run is defined as follows. For
  each $0\le i\le n$ we let $\rho_i$ be any $\Gamma$-structure such that
  $\rho_i\models\gamma_{s_i}$ for some abstract state $s_i$ if and only if
  $\rho^*_i\models b_{s_i}$. As in the other direction, note that the $s_i$ for
  which this happens exists and is unique. We show now that this is indeed a run
  of $\autom$ and that it is accepting.

  We start showing that $\rho_0\models\phi_0$. We know that
  $\rho^*_0\models\phi^*_0$, hence $\rho^*_0\models b_{s_0}$ and we know
  $\gamma_{s_0}\models \phi_0$. Since $\rho_0\models\gamma_{s_0}$, it follows
  directly that $\rho_0\models\phi_0$. The reasoning to show that
  $\rho_n\models\phi_F$ is similar.

  For $0 < i < n$, we show that
  $\rho_i\cup\letter_i\cup\rho'_{i+1}\models\phi_T$, given that
  $\rho^*_i\cup\letter_i\cup\rho^*_{i+1}\models\phi^*_T$. We know that
  $\rho^*_i\models b_{s_i}$ and $\rho^*_{i+1}\models b_{s_{i+1}}$, therefore by
  definition of $\phi^*_T$, it follows that
  $\rho^*_i\cup\letter_i\cup\rho^*_{i+1}\models \phi_T\vert^{h_i}_{s_i\to
  s_{i+1}}$ for some hypothesis $h_i$. But, since $\phi_T\vert^{h_i}_{s_i\to
  s_{i+1}}$ does not mention anything from $\rho^*_i$ nor $\rho^*_{i+1}$, we can
  just say that $\letter_i\models\phi_T\vert^{h_i}_{s_i\to s_{i+1}}$.
  
  We show the claim by showing the more general statement that for each formula
  $\psi$ and each environment $\xi$, if
  $\letter_i,\xi\models\psi\vert^{h_i}_{s_i\to s_{i+1}}$, then there is a $\xi'$
  such that $\rho_i\cup\letter_i\cup\rho'_{i+1},\xi'\models\psi$. We do that by
  structural induction on $\psi$.

  For the base case of an atomic formula, if $\letter_i,\xi\models
  p(x_1,\ldots,x_k)\vert^{h_i}_{s_i\to s_{i+1}}$, then
  $p(x_1,\ldots,x_k)\vert^{h_i}_{s_i\to s_{i+1}}$ is just $p(x_1,\ldots,x_k)$,
  so the claim follows trivially. If instead $p\in\Gamma$, then it is monadic
  and thus the atomic formula is just $p(x)$. In this case, let
  $h_i(x)=(t_i,t_{i+1})$. Then, $p(x)\vert^{h_i}_{s_i\to s_{i+1}}$ is either
  $\top$, if $p\in t_i$, or $\bot$, if $p\not\in t_i$. But it cannot be $\bot$,
  because we cannot have that $\letter_i,\xi\models\bot$. Therefore, $p\in t_i$.
  Since $t_i\in s_i$, by how we defined the run $\bar\rho$, we know there is an
  element $a\in p^{\rho_i}$, therefore we set $\xi'(x)=a$ and we have
  $\rho_i\cup\letter_i\cup\rho'_{i+1},\xi'\models p(x)$. A similar argument works
  for $\neg p(x)$, $p'(x)$, $\neg p'(x)$.

  For the inductive case, we distinguish the kind of formula at hand:
  \begin{enumerate}
    \item if $\letter_i,\xi\models(\forall x.\psi)\vert^{h_i}_{s_i\to s_{i+1}}$,
      then $\letter_i,\xi\models \forall x.\bigvee_{t_i\in
      s_i}\bigvee_{t_{i+1}\in
      s_{i+1}}\psi\vert^{h_i[x\mapsto(t_i,t_{i+1})]}_{s_i\to s_{i+1}}$, and then
      for some $t_i\in s_i$ and $t_{i+1}\in s_{i+1}$ we have
      $\letter_i,\xi'\models \psi\vert^{h_i'}_{s_i\to s_{i+1}}$ for some $h_i'$
      and for all $\xi'$ that agree with $\xi$ except possibly on $x$. Then by
      the inductive hypothesis, we know there is a $\xi''$ such that
      $\rho_i\cup\letter_i\cup\rho'_{i+1},\xi''\models \psi$, hence
      $\rho_i\cup\letter_i\cup\rho'_{i+1},\xi'''\models \forall x.\psi$ for any
      $\xi'''$ that agrees with $\xi''$ everywhere except possibly on $x$.
    \item if $\psi$ is a conjunction or a disjunction the induction is trivial.\qedhere
  \end{enumerate}
\end{proof}


\newcommand{\m}[1]{\mathsf{#1}}
\newcommand{\mc}[1]{\mathcal{#1}}
\newcommand{\obar}[1]{\makebox[0pt]{$\phantom{#1}\overline{\phantom{#1}}$}#1}

\newcommand{\mydds}[1][$\Pi$]{{#1}-DMT\xspace}

\newcommand{\CC}{\mathit{Constr}} 
\newcommand{\Sort}{\mc S} 
\newcommand{\PP}{\mc P} 
\newcommand{\FF}{\mc F} 
\newcommand{\MM}{\mc F} 
\newcommand{\BB}{\mc B} 
\newcommand{\dmttuple}{\langle \Pi, \mc{V}, \mc{I}, Tr\rangle}
\newcommand{\TT}{\mathit{Thr}} 
\newcommand{\goto}[1]{\mathrel{\raisebox{-2pt}{$\xrightarrow{#1}$}}}
\newcommand\D{\mathcal{D}}
\newcommand\A{\mathsf{A}}
\newcommand\enc{\mathrm{enc}}

\section{Encodings of different formalisms}
\label{app:encodings}

\subsection*{Symbolic Automata}

\subsubsection*{Definition.}

In this section we recall the definition of \emph{symbolic finite automata}
(s-FA) as provided \eg in \cite{DAntoniV17}. Symbolic automata are finite-state
automata that read characters of large or infinite alphabets defined by
\emph{effective Boolean algebras}.
\begin{definition}[Effective Boolean algebra]
    \label{def:boolean}
    An \emph{effective Boolean algebra} is a tuple 
    $\A=\seq{\D,\Psi,\sem{\_},\bot,\top,\lor,\land,\neg}$ where:
    \begin{enumerate}
        \item $\D$ is a set of \emph{domain elements};
        \item $\Psi$ is a set of \emph{predicates} closed under Boolean connectives, with $\top,\bot\in\Psi$;
        \item $\sem{\_}:\Psi\to 2^\D$ is a \emph{denotation function} such that for all $\phi,\psi\in\Psi$ we have:
        \begin{itemize}
            \item $\sem{\bot}=\emptyset$;
            \item $\sem{\top}=\D$;
            \item $\sem{\phi\lor\psi}=\sem{\phi}\cup\sem{\psi}$;
            \item $\sem{\phi\land\psi}=\sem{\phi}\cap\sem{\psi}$; and
            \item $\sem{\neg\phi}=\D\setminus\sem{\psi}$;
        \end{itemize}
        \item for all $\phi\in\Psi$, telling whether $\sem{\phi}\ne\emptyset$ is \emph{decidable}.
    \end{enumerate}
\end{definition}

Elements of $\D$ are called \emph{characters} and symbolic automata read finite
words from $\D^*$.
\begin{definition}[Symbolic finite automata]
    A \emph{symbolic finite automaton} (s-FA) is a tuple 
        $M=(\A, Q, q_0,F,\Delta)$ where:
    \begin{enumerate}
        \item $\A$ is an effective Boolean algebra;
        \item $Q$ is a \emph{finite} set of states;
        \item $q_0\in Q$ is the \emph{initial state};
        \item $F\subseteq Q$ is the set of \emph{final states}; and
        \item $\Delta\subseteq Q\times\Psi_\A\times Q$ is a finite set of
            \emph{transitions}.
    \end{enumerate}
\end{definition}

We do not repeat the definition of the semantics of s-FAs, that is defined in
\cite{DAntoniV17}. Intuitively, s-FAs define transitions from a state to another
by means of elements of $\Psi_\A$ which denote subsets of $\D$. Whether an
element $d\in\D$ is read and the automaton is in a state $q\in Q$, it
nondeterministically transitions to $q'\in Q$ if there is a transition
$(q,\psi,q')\in\Delta$ such that $d\in\sem{\psi}$. The meaning of initial and
final states is the same as in common finite state automata.

\subsubsection*{Encoding.}

We show now that s-FAs can be encoded into finite-control first-order automata.

The first thing to settle is how to represent the words of elements of a Boolean
algebra $\A=\seq{\D,\Psi,\sem{\_},\bot,\top,\lor,\land,\neg}$ with first-order
structures over a signature $\Sigma$. One may be tempted to define a predicate
in $\Sigma$ for each predicate in $\Psi$, but by \cref{def:boolean} $\Psi$ may
also be infinite, while in first-order automata we want a finite signature
$\Sigma$. However, note that only a finite number of predicates from $\Psi$ will
be concretely mentioned in $\Delta$.

Following this observation, let $P$ the set of predicates $\psi$ that appear in
at least some transition $(q,\psi,q')\in\Delta$. We define a $\A_*$ as the
\emph{subalgebra} of $\A$ generated by $P$, that is, the effective Boolean
algebra $\A_*=\seq{\D,\Psi,\sem{\_},\bot,\top,\lor,\land,\neg}$ where $\Psi_*$
is the \emph{closure} by Boolean operators $\lor$, $\land$, and $\neg$ of $P$.

It is easy to see that any s-FA over $\A$ is equivalent to the same s-FA over
$\A_*$, with the advantage that $\A_*$ is finitely generated.

Now, given an s-FA $M=(\A, Q, q_0,F,\Delta)$, we turn it into $M_*=(\A_*, Q,
q_0,F,\Delta)$, and then we encode it into a first-order automaton
$\autom(M_*)=\seq{\Sigma,\Gamma,\phi_0,\phi_T, \phi_F}$ as follows:
\begin{enumerate}
    \item $\Sigma$ contains a single \emph{non-rigid} constant $c$ of a fixed
        sort $S$, and a \emph{rigid} unary predicate $p_\psi$ over $S$ for each
        $\psi\in\Psi_*$;
    \item $\Gamma$ is made of $n=\ceil{\log_2(|Q|)}+1$ propositions $b_0,\ldots,b_{n-1}$ where:
    \begin{itemize}
        \item we assume an injective function $\enc:Q\to 2^\Gamma$ that provides
            the set $\enc(q)$ of propositions true in a binary encoding of $q\in
            Q$
        \item for each $q\in Q$ we define 
            $\phi_q\equiv\bigwedge_{b\in\enc(q)} b\land\bigwedge_{b\not\in\enc(q)}\neg b$;
    \end{itemize}
    \item the initial condition is $\phi_0\equiv \phi_{q_0}$;
    \item the acceptance condition is $\phi_F\equiv\bigvee_{q\in F}\phi_q$;
    \item the transition relation is:
    \begin{equation*}
        \phi_T\equiv \bigvee_{(q,\psi,q')\in\Delta} (\phi_{q} \land p_\psi(c) \land \phi_{q'})
    \end{equation*}
\end{enumerate}

Now, a word $\bar d=\seq{d_0,\ldots,d_{n-1}}\in\D^*$ can be represented by a
$\emptyset$-word $\model=\seq{\letter_0,\ldots,\letter_{n-1}}$ over $\Sigma$
where for each $0\le i < n$ we have $S^{\letter_i}=\D$ and
$p_\psi^{\letter_i}=\sem{\psi}$. Conversely, from a first-order word
$\model=\seq{\letter_0,\ldots,\letter_{n-1}}$ we define a word $\bar
d=\seq{d_0,\ldots,d_{n-1}}\in\D^*$ where for each $0\le i < n$ we set
$d_i=f(c^{\letter_i})$ for some injective mapping $f:S^{\letter_i}\to \D$ such
that for each $a\in S^{\letter_i}$ we have $a\in p_\psi^{\letter_i}$ iff
$f(a)\in\sem{\psi}$. One can easily, albeit tediously, prove the following.
\begin{theorem}
    A word $\bar d$ is accepted by an s-FA $M$ if and only if the corresponding
    $\emptyset$-word $\model$ is accepted by $\autom(M_*)$, and
    \emph{viceversa}.
\end{theorem}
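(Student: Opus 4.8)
The plan is to prove the two directions of the equivalence by exhibiting, for each accepting run on one side, a corresponding accepting run on the other, keeping the run lengths identical throughout. First I would fix the translation between words: given $\bar d = \seq{d_0,\ldots,d_{n-1}} \in \D^*$, the associated $\emptyset$-word $\model = \seq{\letter_0,\ldots,\letter_{n-1}}$ has $S^{\letter_i} = \D$ and $p_\psi^{\letter_i} = \sem{\psi}$ for every $\psi \in \Psi_*$, with $c^{\letter_i} = d_i$; conversely, from $\model$ one extracts $\bar d$ via the injection $f : S^{\letter_i} \to \D$ of the statement and sets $d_i = f(c^{\letter_i})$. The key observation to record up front is that in either direction the relation $d_i \in \sem{\psi} \iff \letter_i \models p_\psi(c)$ holds for every $\psi \in \Psi_*$ --- this is immediate from the definition of $f$ and of $p_\psi^{\letter_i}$ --- and this is exactly the bridge that makes the transition relations line up.

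Next I would set up the correspondence on runs. An s-FA run on $\bar d$ is a sequence $q_0, q_1, \ldots, q_n$ of states with $(q_i, \psi_i, q_{i+1}) \in \Delta$ and $d_i \in \sem{\psi_i}$ for each $i < n$; a first-order run of $\autom(M_*)$ on $\model$ is a $\emptyset$-word $\bar\rho = \seq{\rho_0,\ldots,\rho_n}$ over $\Gamma$ with $\rho_0 \models \phi_0$, $\rho_i \cup \letter_i \cup \rho_{i+1}' \models \phi_T$ for $i < n$, and $\rho_n \models \phi_F$. Given the binary encoding $\enc : Q \to 2^\Gamma$ and the formulas $\phi_q$, each $\Gamma$-structure $\rho$ satisfies exactly one $\phi_q$ (since $\enc$ is injective and a $\Gamma$-structure over a signature of propositions is just a truth assignment to $b_0,\ldots,b_{n-1}$), so there is a bijection between first-order runs and sequences of states of the right length. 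I would then show: $\rho_0 \models \phi_0 = \phi_{q_0}$ iff $\rho_0$ encodes $q_0$; $\rho_n \models \phi_F = \bigvee_{q \in F} \phi_q$ iff $\rho_n$ encodes some state of $F$; and $\rho_i \cup \letter_i \cup \rho_{i+1}' \models \phi_T$ iff there is $(q,\psi,q') \in \Delta$ with $\rho_i$ encoding $q$, $\rho_{i+1}$ encoding $q'$, and $\letter_i \models p_\psi(c)$ --- which by the bridge observation is equivalent to $d_i \in \sem{\psi}$. Chaining these three equivalences gives that $\bar\rho$ is an accepting first-order run on $\model$ iff the corresponding state sequence is an accepting s-FA run on $\bar d$, which is both directions at once.

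One small point that needs care, and which I would address explicitly, is the passage from $M$ over $\A$ to $M_*$ over $\A_*$: I would note that $M$ and $M_*$ have literally the same states, transitions, and accepting set, and differ only in the ambient algebra, whose denotation function agrees with that of $\A$ on all of $P$ and hence on all of $\Psi_*$ by the Boolean-algebra homomorphism properties of $\sem{\_}$; therefore $L(M) = L(M_*)$ as subsets of $\D^*$, and it suffices to prove the theorem for $M_*$. I do not expect a genuine obstacle here --- the proof is a routine unfolding of definitions --- but the step that most deserves attention is verifying that the word translation is well-defined and genuinely a two-way correspondence: in the direction from $\model$ to $\bar d$ one must check that an injection $f$ with the stated property exists (it does, because $|S^{\letter_i}|$ may be larger than needed but the types realized in $\letter_i$ over the finitely many predicates $p_\psi$, $\psi \in P$, can always be embedded into $\D$ so as to respect the corresponding $\sem{\psi}$, using that the $p_\psi^{\letter_i}$ were required to be consistent with some such embedding), and that distinct positions are handled independently so no rigidity constraint is violated. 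Once the translation is pinned down, the run-by-run argument above closes the proof.
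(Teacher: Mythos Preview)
The paper does not actually give a proof of this theorem; it only remarks that ``one can easily, albeit tediously, prove'' it. Your proposal is exactly the routine run-by-run unfolding the paper has in mind, and it is correct in outline.

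Two small points worth tightening. First, your claim that ``each $\Gamma$-structure $\rho$ satisfies exactly one $\phi_q$'' and that there is a ``bijection between first-order runs and sequences of states'' is not literally true: since $|Q|$ need not be a power of two, some truth assignments to $b_0,\ldots,b_{n-1}$ encode no state at all. This does no damage to the argument, because $\phi_0$ forces $\rho_0$ to encode $q_0$ and each disjunct of $\phi_T$ forces both endpoints to encode actual states, so any genuine run of $\autom(M_*)$ already lives inside the image of $\enc$; just phrase the correspondence as ``every first-order run corresponds to a state sequence'' rather than as a bijection on all $\Gamma$-structures. Second, your caution about the existence of the injection $f$ in the reverse translation is well placed: the paper's definition of the ``corresponding'' word $\bar d$ from an arbitrary $\emptyset$-word $\model$ tacitly presupposes that the interpretations $p_\psi^{\letter_i}$ are jointly realizable in $\D$ via some $f$, which is not guaranteed for an arbitrary $\Sigma$-structure. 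The cleanest reading of the theorem's ``viceversa'' is that it ranges only over those $\model$ that arise as translations of some $\bar d$ (for which $f$ is the identity), and with that reading your argument goes through without further work.
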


\subsection*{Data-Aware Processes Modulo Theories.}

\subsubsection*{Definition.}
In this section, we recall the definition of DMTs from \cite{GianolaMW24}. 
Let $\Pi=\langle\Sort, \PP, \FF, \mc{V}, \mc{U}\rangle$ be a first-order signature, where $\Sort$ are sorts, $\PP$ and $\FF$ are respectively predicate and function symbols, $\mc{V}$ and $\mc{U}$ are disjoint sets of variables. The variables in $\mc{V}$ are called \emph{data variables} and are employed to formalize control variables of a process that store data from some domain, whereas the variables in $\mc{U}$ are used for quantification.
For a set of variables $Z$, a \emph{$\Pi$-constraint} $c$ over $Z$ is of the form $\exists u_1, \dots, u_l.\phi$ such that $u_1, \dots, u_l\in U$, $\phi$ is a conjunction of $\Pi$-literals, and all free variables in $c$ are in $Z$;
the set of all such constraints is denoted
$\CC_\Sigma(Z)$.
For each data variable $v\in V$, 
let $v^r$ and $v^w$ be two annotated variables of the same sort, and set $\mc{V}^r = \{v^r \mid v\in \mc{V}\}$ and $\mc{V}^w = \{v^w \mid v\in \mc{V}\}$. These copies of $V$ are called the \emph{read} and \emph{write} variables; they will denote the variable values before and after a transition of the system, respectively.

\begin{definition}
\label{def:dds}
A \emph{data-aware process modulo theories} over $\Pi$ (\mydds for short) is
a labelled transition system $\BB = \dmttuple$, where:
\begin{itemize}
\item $\Pi$ is a first-order signature, 
\item $\mc{V}$ is the finite, nonempty set of data variables in $\Pi$;
\item the transition formulae $Tr(\mc{V}^r,\mc{V}^w)$ 
are a set of constraints in $\CC_\Pi(\mc{V}^r\cup \mc{V}^w)$;
\item $\mc{I}\colon \mc{V}\to \FF_0$, called initial function, 
initializes variables,
where $\FF_0$ is the set of $\Pi$-constants.
\end{itemize}
\end{definition}

We then recall the semantics for {\mydds}s.
For a $\Pi$-theory $\TT$ and a model $M\in \TT$,
a \emph{state} of a \mydds $\BB$ is an assignment $\alpha\colon \mc{V} \to |M|$.
A \emph{guard assignment} $\beta$ is a function $\beta\colon \mc{V}^r \cup \mc{V}^w \to |M|$.
As defined next, a transition $t$ can transform a state $\alpha$ into a new state $\alpha'$, updating the variable values in agreement with $t$,
while variables that are not explicitly written keep their previous value as per $\alpha$.

\begin{definition}
A \mydds $\BB = \dmttuple$ admits a \emph{$\TT$-step} from state $\alpha$ to 
$\alpha'$ via transition $t \in T$ w.r.t. a model $M\in \TT$,
denoted $\alpha \goto{t}_M \alpha'$,
if there is some guard assignment $\beta$ s.t. $\beta \models_M t$,
$\beta(v^r) = \alpha(v)$ and
$\beta(v^w) = \alpha'(v)$ for all $v \in \mc{V}$.
\end{definition}

A \emph{$\TT$-run} of $\BB$ is a pair $(M,\rho)$ of a model $M\in \TT$
and a sequence of steps of the form
$\rho\colon \alpha_0
\goto{t_1}_M  \alpha_1
\goto{t_2}_M \dots
\goto{t_n}_M \alpha_n$ where $\alpha_0(v) = I(v)^M$ for all $v\in V$.
Note that given $M$, the initial assignment $\alpha_0$ of a run is uniquely determined by the 
initializer $I$ of $\BB$.

\subsubsection*{Encoding.}
We now show how DMTs can be modeled via suitable first-order automata. The idea is that the data variables in $\mc{V}$ are used to define a signature $\Gamma$ containing only constants, and that each constraint inf $Tr$ should be thought as an \emph{action} that triggers one of the possible evolutions of the system. The structure of $\Gamma$ implies, in particular, that the encoding of a DMT will be a \emph{data-control} first-order automata. 

More specifically, the \emph{encoding} of a DMT $\BB = \dmttuple$ into a first-order automata $\autom=\seq{\Sigma,\Gamma,\phi_0,\phi_T, \phi_F}$ works as follows:
\begin{itemize}
    \item we define $\Gamma:=\{C_{\mc{V}}\}$, where $C_{\mc{V}}$ is a set of non-rigid constants $c_{v}$ corresponding to each data-variable $v$ in $\mc{V}$. Notice that such a $\Gamma$ contains only constants. 
    \item $\Sigma:= \Pi \cup \mc{P}_{act}$, where $\mc{P}_{act}$ is a set of
    propositions $p_a$ corresponding to each constraint $a$ (each 'action') in
    $Tr(\mc{V}^r,\mc{V}^w)$.
    \item $\phi_0:=\bigwedge_{v\in \mc{V}} c_v=c_i$, where $c_i$ are the rigid constants in $\Pi$ assigned by $\mc{I}$ to each $v \in \mc{V}$. Notice that $\phi_0$ is a first-order $\Gamma$-sentence.
    \item $\phi_T:= \bigvee_{p_a\in \mc{P}_{act}} (p_a\to a[v^r/v,v^w/v'])$, where $a[v^r/v,v^w/v']$ is the first-order $\Gamma\cup\Sigma \cup \Gamma'$-sentence obtaining by substituting every occurrence of read variable $v^r$ with $v$ and every occurrence of write variable $v^w$ with its stepped version $v'$.
    \item there is no notion of \emph{acceptance condition} in DMTs, so $\phi_F$ can be set arbitrarily.
\end{itemize}

One can prove the following.
\begin{theorem}
Let $\BB = \dmttuple$ be a DMT, and let  $\autom=\seq{\Sigma,\Gamma,\phi_0,\phi_T, \phi_F}$ be the data-control first-order automata \emph{encoding} $\BB$ as above. Then, a \emph{$\TT$-run} of $\BB$ is a \emph{$\TT$-run} of the encoding $\autom$. 
\end{theorem}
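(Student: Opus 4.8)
The plan is to set up an explicit dictionary translating the ingredients of a DMT $\TT$-run into those of a $\TT$-run of the encoding $\autom$ and back, and then to check that, under this dictionary, the defining clauses of the two notions of run match clause by clause. Concretely, a state $\alpha_i$ of the DMT will be identified with the $\Gamma$-structure $\rho_i$ determined by $c_v^{\rho_i}\coloneqq\alpha_i(v)$ (recall $\Gamma=\{c_v\mid v\in\mc V\}$), the chosen model $M$ will be carried by the $\Pi$-reduct of each letter, and the transition $t_{i+1}$ fired at step $i$ will be recorded by the letter $\letter_i$ setting exactly the proposition $p_{t_{i+1}}\in\mc P_{act}$ to true.

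First I would take a $\TT$-run $(M,\rho)$ of $\BB$, say $\rho\colon\alpha_0\goto{t_1}_M\alpha_1\goto{t_2}_M\cdots\goto{t_n}_M\alpha_n$ with $\alpha_0(v)=\mc I(v)^M$, and build from it a $\Sigma$-word $\model=\seq{\letter_0,\ldots,\letter_{n-1}}$ together with a $\Gamma$-sequence $\bar\rho=\seq{\rho_0,\ldots,\rho_n}$, where $\letter_i$ is the $\Sigma$-structure on domain $|M|$ whose $\Pi$-reduct is $M$ and in which $p_a$ holds exactly when $a=t_{i+1}$, and $\rho_i$ is the $\Gamma$-structure on $|M|$ with $c_v^{\rho_i}\coloneqq\alpha_i(v)$. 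Since every $\letter_i$ has the same domain and interprets the rigid $\Pi$-symbols as $M$ does, and $M\models\TT$ (taking $M$ countable, as $\sem{\TT}$ requires), $\model$ is a bona fide $\TT$-word, and $\bar\rho$ carries the right sorts throughout. Then I verify that $\bar\rho$ is the $\TT$-run of $\autom$ induced by $\model$: the initial condition $\rho_0\models\phi_0$ holds because $c_v^{\rho_0}=\alpha_0(v)=\mc I(v)^M$, which is exactly what $\phi_0$ demands; and for each $0\le i<n$ the DMT step $\alpha_i\goto{t_{i+1}}_M\alpha_{i+1}$ supplies a guard assignment $\beta$ with $\beta\models_M t_{i+1}$, $\beta(v^r)=\alpha_i(v)=c_v^{\rho_i}$ and $\beta(v^w)=\alpha_{i+1}(v)=(c_v')^{\rho_{i+1}'}$. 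Writing $t_{i+1}$ as $\exists\bar u.\,\chi$ with $\chi$ a conjunction of $\Pi$-literals, the witness for $\bar u$ inside $\beta\models_M t_{i+1}$ gives an environment under which $\rho_i\cup\letter_i\cup\rho_{i+1}'\models t_{i+1}[v^r/v,v^w/v']$; together with $p_{t_{i+1}}$ being the only true action proposition in $\letter_i$, this makes the disjunct of $\phi_T$ attached to $t_{i+1}$ hold, so $\rho_i\cup\letter_i\cup\rho_{i+1}'\models\phi_T$. As $\phi_F$ is immaterial, $\bar\rho$ is a $\TT$-run of $\autom$.

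For the converse, and hence for the claim that the two notions of run really coincide under the dictionary, I would run the same argument backwards: starting from a $\TT$-word $\model$ over $\Sigma$ that activates exactly one action proposition at each step, together with a $\TT$-run $\bar\rho$ of $\autom$ induced by it, I set $M\coloneqq\letter_0\vert_\Pi$ (which equals $\letter_i\vert_\Pi$ for all $i$ by rigidity and constant domains, and satisfies $\TT$), $\alpha_i(v)\coloneqq c_v^{\rho_i}$, and let $t_{i+1}$ be the unique action $a$ with $p_a$ true in $\letter_i$; then $\rho_i\cup\letter_i\cup\rho_{i+1}'\models\phi_T$ forces $t_{i+1}[v^r/v,v^w/v']$, which, by reading off the values of the $\bar u$-variables from the witnessing environment, unpacks into a guard assignment witnessing $\alpha_i\goto{t_{i+1}}_M\alpha_{i+1}$, while $\rho_0\models\phi_0$ forces $\alpha_0(v)=\mc I(v)^M$. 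The two translations are visibly mutually inverse, which is what the theorem asserts. I expect the only real work to be pure bookkeeping on quantifiers: keeping straight that the auxiliary variables $\mc U$ of a $\Pi$-constraint live inside the semantics of $\beta\models_M(\cdot)$ on the DMT side but become ordinary first-order quantified variables once the constraint is placed, via the $[v^r/v,v^w/v']$ substitution, inside $\phi_T$ — and being explicit that only $\Sigma$-words firing exactly one $p_a$ per step lie in the image of the translation, which is why the converse direction is stated with that restriction.
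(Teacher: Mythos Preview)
Your forward direction is correct and is precisely the unfolding of definitions that the paper's proof gestures at; the paper's entire proof reads ``The proof follows immediately from the definition of the encoding,'' and you have simply made that immediacy explicit with the natural dictionary $c_v^{\rho_i}\coloneqq\alpha_i(v)$ and the action-proposition labelling of letters.

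Two remarks, though. First, the theorem as stated is only the forward direction (every $\TT$-run of $\BB$ yields a $\TT$-run of $\autom$), so your converse paragraph is extra. Second, that converse paragraph does not actually go through against the encoding as written: $\phi_T$ is a \emph{disjunction} of implications $\bigvee_{p_a}(p_a\to a[v^r/v,v^w/v'])$, so as soon as there are at least two actions and your letter makes exactly one $p_{t_{i+1}}$ true, every other disjunct $p_b\to b[\ldots]$ with $b\ne t_{i+1}$ is vacuously true, and hence $\phi_T$ is satisfied regardless of whether $t_{i+1}[v^r/v,v^w/v']$ holds. Thus ``$\rho_i\cup\letter_i\cup\rho_{i+1}'\models\phi_T$ forces $t_{i+1}[v^r/v,v^w/v']$'' fails. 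This is arguably an artefact of the encoding (one would expect $\bigvee_{p_a}(p_a\land a[\ldots])$ or $\bigwedge_{p_a}(p_a\to a[\ldots])$), not of your argument, but you should flag it rather than claim a bijection.
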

\begin{proof}
    The proof follows immediately from the definition of the encoding.    
\end{proof}

\fi

\end{document}